\newcommand\morteza[1]{\textcolor{olive}{Morteza: #1}}
\DeclareMathOperator*{\argmin}{arg\,min}
\newcommand{\E}[1]{\mathbb{E}\left[#1\right]}
\newcommand{\EO}[0]{\mathbb{E}}
\renewcommand{\P}[1]{\mathbb{P}\left(#1\right)}
\newcommand{\swapping}{\textsc{Swapping}\xspace}
\newcommand{\OPT}{\textnormal{OPT}}
\newcommand{\ALG}{\textnormal{ALG}\xspace}
\newcommand{\greedy}{\textsc{Greedy}\xspace}
\newcommand{\checkpoint}{\textsc{Check-Point}\xspace}
\newcommand{\swap}{\textsc{Any-Swap}\xspace}
\newcommand{\GreedyCertificate}{\textsc{Greedy-with-Certificate}\xspace}
\newcommand{\ExpoAlg}{\textsc{MinMaxSampling}\xspace}
\newcommand{\ind}[1]{\mathds{1}\left(#1\right)}
\newcommand{\e}{\varepsilon}
\newcommand{\local}{\textsc{Local-Search}\xspace}
\newcommand{\Sold}{S_{\text{old}}}
\newcommand{\Snew}{S_{\text{new}}}
\newcommand{\oldOPT}{\OPT_{\textsf{now}}}
\newcommand{\block}{\Delta}
\newcommand{\blockid}{\tau}
\newcommand{\GreedySol}{\mathrm{S}}
\newcommand{\AugmentedSol}{\mathrm{A^+}}
\newcommand{\FinalSol}{\mathrm{A}}
\renewcommand{\k}{\kappa}
\newcommand{\Vfuture}{V_{\textsf{future}}}
\newcommand{\Vnow}{V_{\textsf{now}}}
\newcommand{\ExtraSample}{\eta}
\newcommand{\threshold}{\gamma}
\newcommand{\DistributionA}{\mathcal{D}}
\newcommand{\cA}{\mathcal{A}} 
\newcommand{\cE}{\mathcal{E}}
\theoremstyle{plain}
\newtheorem{theorem}{Theorem}[section]
\newtheorem{lemma}[theorem]{Lemma}
\newtheorem{claim}[theorem]{Claim}
\theoremstyle{definition}
\newtheorem{definition}[theorem]{Definition}
\newtheorem{example}[theorem]{Example}
\newtheorem{observation}[theorem]{Observation}
\title{The Cost of Consistency:\\ Submodular Maximization with Constant Recourse}
\author{Paul D\"utting\thanks{Google Research, Email: \texttt{\{duetting,siviol,ashkannorouzi,zadim\}@google.com}}  \and Federico Fusco\thanks{Sapienza University of Rome, Email: \texttt{federico.fusco@uniroma1.it}} \and Silvio Lattanzi$^{*}$ \and Ashkan Norouzi-Fard$^{*}$ \and Ola Svensson\thanks{École Polytechnique Fédérale de Lausanne, Email: \texttt{ola.svensson@epfl.ch}} \and Morteza Zadimoghaddam$^{*}$}
\date{}
\begin{document}
\maketitle

\begin{abstract}
    In this work, we study online submodular maximization, and how the requirement of maintaining a stable solution impacts the approximation. In particular, we seek bounds on the best-possible approximation ratio that is attainable when the algorithm is allowed to make at most a constant number of updates per step. 
We show a tight information-theoretic bound of $\nicefrac{2}{3}$ for general monotone submodular functions, and an improved (also tight) bound of $\nicefrac{3}{4}$ for coverage functions. Since both these bounds are attained by non poly-time algorithms, we also give a poly-time randomized algorithm that achieves a $0.51$-approximation.
Combined with an information-theoretic hardness of $\nicefrac{1}{2}$ for deterministic algorithms from prior work, our work thus shows a separation between deterministic and randomized algorithms, both information theoretically and for poly-time algorithms.

\end{abstract}
\pagenumbering{gobble}

\clearpage 

\pagenumbering{arabic}

\section{Introduction}

A rapidly growing literature examines online combinatorial problems from a stability perspective. The premise is that in many online problems decisions are not irrevocable, but changing the solution comes at a cost. It is thus desirable to have online algorithms that are ``consistent'' or have ``small recourse'' \citep[e.g.,][]{BhattacharyaEtAl2024,fichtenberger2021consistent,GuptaGKKS22,GuptaL20,lattanzi2017consistent,lkacki2024fully}, in the sense that they make few updates per step or overall. 
In this work, we explore this requirement for the fundamental problem of submodular maximization with cardinality constraints, focusing on a stability notion that limits the allowed updates per step. We adopt the model of \cite{DuettingFLNZ24}, but augment it to allow for randomized algorithms.

The problem is as follows. A (possibly randomized) algorithm faces an adversarially created sequence of $n$ elements arriving one-by-one, on which a monotone submodular function $f$ is defined. Informally, the algorithm's goal is to maintain a ``high value'' subset of at most $k$ elements without changing it ``too much'' in each step. Towards defining this more formally, let $\ALG_t$ denote the algorithm's solution after the $t^{th}$ insertion. When a new element $e_t$ arrives, the algorithm can decide to update the solution from the previous step, $\ALG_{t-1}$, to a new solution $\ALG_t$. We strive for the following two properties:
\begin{itemize}
\item \textbf{Approximation:} We say that the algorithm is an \emph{$\alpha$-approximation}, for $\alpha \leq 1$, if at each time step it ensures that $\mathbb{E}[f(\ALG_t)] \geq \alpha f(\OPT_t)$ for all $t$, where $\OPT_t$ is the optimal solution among the sets of elements arrived thus far. 

\item \textbf{Consistency:} 
Let $C$ be a constant; an algorithm is \emph{$C$-consistent}, if, for any possible choice of the random bits, $|\ALG_t \setminus \ALG_{t-1}| \leq C$ for all $t$. An algorithm is \emph{consistent} if there exists a constant $C$ such that the algorithm is $C$-consistent.
\end{itemize}

In this work, we investigate the ``cost of consistency.'' In particular, we want to understand the fundamental information-theoretic boundaries on the quality of solutions maintained by a (possibly non poly-time) consistent algorithm.

\subsection{Our Contribution}

\paragraph{Tight Information-Theoretic Bound.} Our first main result is a \emph{tight} characterization of the information-theoretic hardness of the problem for (possibly randomized) algorithms.

    \medskip
    \noindent \textbf{Theorem.}~(\Cref{thm:exp-time,thm:hardness07}) There exists a consistent randomized (non-polynomial time) algorithm that is a $\nicefrac{2}{3}$-approximation, and this is information-theoretically tight.
    \medskip

This in particular shows that there is a strict separation between randomized and deterministic algorithms. In fact, as already observed in \cite{DuettingFLNZ24}, a simple construction shows that no consistent deterministic algorithm, whether poly-time or not, can achieve a better than $\nicefrac{1}{2}$-approximation. 
The strict separation between randomized and deterministic algorithms implied by our theorem is somewhat surprising as in many submodular optimization problems randomization does not help. Citing from \citet[][p.1]{BuchbinderFeldman2018}: 

\begin{quote}
``an interesting fundamental problem in this area is whether randomization is inherently necessary for obtaining good approximation ratios'' 
\end{quote}

Indeed, for many fundamental problems there is no gap, and there is an active line of work that seeks to close gaps where they exist \citep[e.g.,][]{BuchbinderFeldman2018,BuchbinderFeldman24}.

For the standard offline setting it is believed that all polynomial-time randomized algorithms can be derandomized; in particular, the celebrated result of~\cite{RussellW97} shows that, if SAT has no subexponential sized circuits, then $\mathsf{BPP} = \mathsf{P}$, i.e., all randomized polynomial-time algorithms can be derandomized. 
Consider the standard offline problem of maximizing a submodular function. In the unconstrained case, \cite{BuchbinderFeldman2018} show how derandomization yields a $\nicefrac{1}{2}$-approximation, matching the impossibility for randomized algorithms of \cite{FeigeMV11}. For monotone functions and cardinality constraints, the deterministic Greedy algorithm achieves a tight $(1-\nicefrac{1}{e})$-approximation \citep{NemhauserWF78,Feige98}. For the more general case of matroid constraints, randomized algorithms that achieve a $(1-\nicefrac{1}{e})$-approximation were known for more than a decade \citep{CalinescuCPV11}. In a recent break-through, \cite{BuchbinderFeldman24} showed that this bound can also be achieved with a deterministic algorithm. 

Most importantly, this no-gap phenomenon also extends to online settings. The most prominent online problem with no gap between deterministic and randomized algorithms is probably the streaming setting, which admits a deterministic $\nicefrac 12$ approximation that is tight even if randomization is allowed \citep{FeldmanNSZ23}.

\paragraph{Submodular vs.~Coverage Functions.}

We also show that there is a gap between general submodular functions and coverage functions.

    \medskip
    \noindent \textbf{Theorem.}~(\Cref{thm:exp-time-coverage,thm:3/4_randomized}) There is a consistent  randomized (non-polynomial time) algorithm that is a $\nicefrac{3}{4}$-approximation for coverage functions, and this is information-theoretically tight.
    \medskip

We find this separation between coverage functions and general submodular functions rather intriguing, as we are not aware of any other natural submodular optimization problem with such a gap.
Indeed, for several canonical problems in (monotone) submodular optimization, there is no gap between the two. For instance, the already mentioned hardness results of \cite{Feige98} and \cite{FeldmanNSZ23} are for coverage functions.
We note that the best-known information-theoretic hardness for consistent deterministic algorithms is based on coverage functions \citep{DuettingFLNZ24}, while the best-known deterministic algorithms achieve a $0.3818$-approximation for general monotone submodular functions. Proving (or disproving) that there is a gap between general submoudular functions and coverage functions in the deterministic setting is a fascinating open problem.

\begin{table}[t]
\centering
\begin{tabular}{@{}l|cc@{}}
\toprule
     & poly time &  info-theoretic\\
\midrule
det. & \makecell{$(0.3818,\nicefrac{1}{2}]$\\{\small \cite{DuettingFLNZ24}}} & \makecell{$(0.3818,\nicefrac{1}{2}]$\\{\small \cite{DuettingFLNZ24}}} \\
\midrule
rand. & \makecell{$(\mathbf{0.51},1-\nicefrac{1}{e}]$ \\ {\small Thm.~\ref{thm:poly-time} \& \cite{Feige98}}} & \makecell{$\nicefrac{\mathbf{3}}{\mathbf{4}}$ ({\small coverage)} and $\nicefrac{\mathbf{2}}{\mathbf{3}}$ \small{(submodular)}\\ {\small Thms.~\ref{thm:exp-time-coverage} and \ref{thm:3/4_randomized} \& Thms.~\ref{thm:exp-time} and \ref{thm:hardness07}}}\\
\bottomrule
\end{tabular}
\caption{Overview of results: Lower bounds (pos.) and upper bounds (neg.) on the approximation ratio of consistent algorithms. All bounds are for monotone functions, and apply to both coverage functions and general submodular functions, except where indicated.}
\label{tab:overview}
\end{table}

\paragraph{Separation for Efficient Algorithms.} 
As our final result we demonstrate that it is even possible to break the (information-theoretic) barrier for deterministic algorithms with an efficient poly-time randomized algorithm. 
This shows that there is a gap between deterministic and randomized algorithms also when we insist on poly time.

\medskip
    
\noindent \textbf{Theorem.}~(\Cref{thm:poly-time}) There exists a consistent poly-time randomized algorithm that is a $0.51$-approximation.
    
\medskip

Although this result is not tight, we note that, under the assumption that $\mathsf{P} \neq \mathsf{NP}$, we cannot hope to get a better than $(1-\nicefrac{1}{e})\approx 0.64$-approximation with a poly-time algorithm (whether consistent or not). We leave as an intriguing open problem, whether there is a ``cost of consistency'' for poly-time randomized algorithms. We summarize our results in  Table~\ref{tab:overview}.

\medskip 
\noindent\textbf{Discussion.}
We conclude with a brief discussion of our modeling choices regarding consistency. In our eyes, this question has two main dimensions:
\begin{enumerate}
\setlength\itemsep{0pt}
    \item[(i)] Should consistency be enforced per step (as we do here), or should we aim for low consistency in an amortized sense?
    \item[(ii)] If we enforce it per step, why a constant number of changes per step, as opposed to say at most a single change per step?
\end{enumerate}  

For (i) we believe that the ``per step'' requirement is the stronger and hence perhaps the more appealing notion for practical applications, but it is also possible to imagine applications where an amortized version makes sense. In this work we make progress on the former notion, and leave the latter for future work.\footnote{We note that the line of work on fully dynamic submodular maximization is closely related to the amortized notion of consistency. See discussion in \Cref{sec:related_work}.}
For (ii) we believe that allowing for a constant number of changes (although asymptotically equivalent to let's say one or two changes) gives enough leverage to uncover the algorithmic complexity of the problem. In fact, as we will see,  it enables a reduction to a very clean two-stage stochastic optimization problem (which we dub ``addition-robust submodular maximization''), which nicely identifies and isolates the core challenge.

\subsection{Further Related Work} \label{sec:related_work}

The work most closely related to ours is \citet{DuettingFLNZ24}, where the authors study consistent submodular maximization for deterministic algorithms. Beyond the already mentioned hardness of $\nicefrac 12$, they also provide a $1$-consistent $0.3178$-approximation and a $\nicefrac{1}{\e}$-consistent $(0.3818 - \e)$-approximation.

Prior work, while not explicitly studying consistency, also implies consistent algorithms. In particular, the \swapping algorithm of \cite{ChakrabartiK14}---designed for the streaming version of the problem---is $1$-consistent and provides a $\nicefrac 14$-approximation. We remark that there are crucial differences between the streaming model and our problem: In the streaming model, the algorithm is not able to go back to previous elements, unless it decided to store those in memory. On the other hand, the algorithm is not constrained by the requirement that the solution should not change too much between any two consecutive time steps.




Another closely related line of work concerns online submodular maximization with preemption (a.k.a.~online submodular maximization with free disposals)  \citep{BuchbinderFS15a,Chan0JKT17}. The model allows the algorithm to drop previously accepted elements, and replace them with the one that just arrived. As a benchmark, they also consider the best solution among the elements arrived thus far. This naturally leads to $1$-consistent algorithms. Unlike in our model, however, the algorithm cannot go back to previously arrived elements.

For the cardinality constrained problem, with monotone submodular objective function, \cite{BuchbinderFS15a} 
show a $\nicefrac{1}{2}+\e$ impossibility for deterministic algorithms, and a $\nicefrac{3}{4}+\e$ impossibility for randomized algorithms. Our $\nicefrac{2}{3}$ hardness result improves on the latter, as we consider the same benchmark but are giving more power to the algorithm (ability to go back to previously arrived elements, and constant number of changes).
%
%
On the positive side, \cite{Chan0JKT17} give a deterministic algorithm with competitive ratio at least $0.2959$, with the ratio tending to $\nicefrac{1}{\alpha_\infty} \approx 0.3178$ as $k$ approaches infinity. Note, their approximation ratio matches the one of the $1$-consistent algorithm of \citet{DuettingFLNZ24}, which in turn, also fits in the preemption model.

Additionally, \cite{Chan0JKT17} also study the problem with matroid constraints, which generalize cardinality constraints. In particular, they prove a hardness result of $\nicefrac{1}{4}$ for deterministic algorithms, that holds for the class of partition matroids, thus showing that the \swapping algorithm (which naturally fits into the preemption framework) is optimal. They also design a randomized algorithm for partition matroids, characterized by a competitive ratio of $0.3178$. They thus show that randomization allows to break the deterministic barrier for partition matroids. Our general finding is qualitatively similar. However, we show it for a different problem, and already for cardinality constraints.

Finally, our work is also related to the work on fully dynamic submodular maximization \citep{LattanziMNTZ20,ChenP22,BanihashemBGHJM24}. 
There the algorithmic goal is to maintain a good solution with small amortized update time and thus small \emph{amortized} consistency. For instance, \citet{LattanziMNTZ20}  provide a $\nicefrac{1}{2}$-approximation that is characterized by a poly-logarithmic amortized update time and, thus, a poly-logarithmic amortized consistency. We point out that such algorithms do not enforce our notion of ``per-step'' consistency as they may (rarely) recompute the whole solution from scratch.

\section{Preliminaries}
\label{sec:preliminaries}

We study the following problem. 
There is a ground set of elements, $X$, of cardinality $n$, and a (monotone submodular) set function $f:2^X \rightarrow \mathbb{R}_{\ge 0}$. Elements arrive one-by-one, in discrete time steps $t = 1, \ldots, n$.
We use $x_t \in X$ to refer to the element that arrives at time step $t$; and $X_t =\{x_1, \ldots, x_t\}$ to refer to the set of elements that arrive in the first $t$ time steps. 
We seek to design an algorithm that maintains a solution $\ALG_t \subseteq X_t$ of cardinality at most $k$ in a dynamic way. If the algorithm is randomized, then $\ALG_t$ is a random subset of $X_t$. 
We assume that the algorithm does \emph{not} know $n$ ahead of time, and that at any given point in time $t$, the algorithm can access $f$ on the set of elements $X_t$ that have arrived thus far through a value oracle. A \emph{value oracle} is given a set $S$ and returns $f(S)$. When evaluating the running time of an algorithm, we assume that each value query takes constant time.

Our goal is to have an approximately optimal set $\ALG_t$ at all times $t$ (a set of high value, relative to the optimal set of size $k$ at any given point, as given by a monotone submodular function $f$),
while making only a constant number of changes to the solution at every step. 
%
We assume that the input is generated adversarially. We consider an \emph{oblivious adversary}, which knows the (possibly randomized) algorithm (but not the outcome of its internal random bits), and decides the input in advance.

\paragraph{Submodular Functions.}
The value of a set of elements is given by a set function $f: 2^X \rightarrow \mathbb{R}_{\geq 0}$. Given two sets $S, T \subseteq X$, the \emph{marginal gain} of $S$ with respect to $T$, $f(S \mid T)$, is defined as
\[
    f(S \mid T) = f(S \cup T) - f(T).
\]
When $S$ is the singleton $\{x\}$, we use the shorthand $f(x \mid T)$ rather than $f(\{x\} \mid T)$. Function $f$ is  \emph{monotone} if $f(x \mid T) \geq 0$ for every element $x \in X$ and every set $T \subseteq X$; and  \emph{submodular} if, for any two sets $S \subseteq T \subseteq X$, and any element $x \in X \setminus T$ it holds that
\(
    f(x \mid S) \geq f(x \mid T).
\)



\paragraph{Coverage Functions.} 
Given a ground set $Y$ and a set of elements $X$ that are subsets of $Y$, a coverage function $f: 2^{X} \rightarrow \mathbb{R}_{\geq 0}$ maps a subcollection of sets to a non-negative real value representing the total ``coverage'' achieved by these sets, for any $S \subseteq X$.  More precisely, for any $S \subseteq X$, $f(S) = |\cup_{x \in S} x|$. 
Coverage functions are monotone and submodular.

\paragraph{Approximation Guarantee.} 

We seek algorithms that provide a good approximation guarantee to the best solution at each time step. 
Formally, let $\OPT_t \subseteq X_t$ be the solution of maximum value (according to $f$), among all solutions of size at most $k$ among the elements that have arrived so far. We say that the algorithm is an \emph{$\alpha$-approximation}, for $\alpha \leq 1$, if $
    \mathbb{E}[f(\ALG_t)] \geq \alpha \cdot f(\OPT_t)$, {for all $t$.}

\paragraph{Consistency.}
In addition, we require that the algorithm is ``consistent'' in that it makes at most a constant number of changes to the solution $\ALG_t$ at each time step. We only impose this constraint on the algorithm, not on the benchmark (the optimal solution $\OPT_t$).

Formally, let $n_t = |\ALG_t \setminus \ALG_{t-1}|$ be the number of changes in the solution, going from $\ALG_{t-1}$ to $\ALG_t$.\footnote{We note that we could also define consistency using the symmetric difference. We prefer this notion because it's simpler, and basically equivalent. It is also how consistency was defined in prior work \citep{DuettingFLNZ24}.} 
Note that for a randomized algorithm, $n_t$ is a random variable. Fix a constant $C$. We say that a (possibly randomized) algorithm is \emph{$C$-consistent}, if, for all choices of the random bits, $n_t \leq C$ for all $t$. We say that it is \emph{consistent}, if there exists a constant $C$ such that it is $C$-consistent.

\section{Techniques and Ideas}

This section outlines the techniques and ideas employed in this work.
In \Cref{subsec:reduction}, we describe a simple but crucial reduction from (online) consistent submodular maximization to a two-phase problem, that we name addition-robust submodular maximization. Then, in \Cref{subsec:information_tight} we sketch the main ideas to derive tight information-theoretic bounds on the problem. Finally, \Cref{subsec:poly-time} is devoted to presenting the ideas underlying our randomized polynomial time algorithm that guarantees a better-than-$\nicefrac 12$ approximation.

\subsection{Addition-Robust Submodular Maximization.}
\label{subsec:reduction}

    One of the main difficulties arising from maintaining a consistent solution resides in the fact that the  dynamic optimum may change quickly -- even completely after a single insertion -- while a consistent algorithm may need $\Theta(k)$ time steps to move from one given solution to another, with the extra-complication of maintaining a {good} quality solution \emph{throughout} the transition.
    
    In Section~\ref{sec:addition-robust}, {we present a simple meta-algorithm, \checkpoint, that reduces the consistent submodular maximization problem to an offline problem, that we call addition-robust submodular maximization. The idea is as follows:} given a precision parameter $\e$, \checkpoint divides the time horizon into $\nicefrac{1}{(\e k)}$ contiguous blocks of $\e k$ insertions and interpolates -- within each block -- between two ``check-point'' solutions, while adding all the recently inserted elements. This interpolation is randomized: a contiguous sub-block of length $\e^2 k$ time steps is chosen uniformly at random and hosts the transition from two consecutive  ``check-point'' solutions. In this way, each fixed time step has low probability (exactly $\e$) of entailing a change of $\nicefrac{1}{\e^2}$ elements in the solution (that might worsen its quality). Note, since at each time step at most $O(\nicefrac 1{\e^2})$ elements are replaced in the solution, \checkpoint is consistent. For this meta-algorithm to maintain a good solution overall, it is crucial that the check-point solutions have high value and are robust with respect to the new elements that may arrive during the following block; stated differently, we would like to avoid that the newly inserted elements have low marginal contribution to the check-point solution, while having high contribution with respect to other sets.
    
    This desirable property is captured and made precise by the addition-robust submodular maximization problem, that is defined as follows. Let $V$ be the domain of a monotone submodular function $f$, and assume that $V$ is partitioned into current elements $V_{\textsf{now}}$ and future elements $V_{\textsf{future}}$. The goal is to design a (possibly randomized) algorithm $\cA$ for this problem that is $\alpha$-addition robust, as per the following definition.

    \begin{definition} 
    \label{def:insertion-robust} 
        Let $\alpha \leq 1$. We say that algorithm $\cA$ is \emph{$\alpha$-addition robust} for submodular maximization subject to a cardinality constraint $\kappa$,\footnote{{We use $\kappa$ instead of $k$ when we refer to cardinality constraint for the addition-robust problem to avoid ambiguity: for the reduction to work, the checkpoint solutions contain $\kappa =(1-O(\e))k$ elements, so that there is room for the $\e k$ new elements that arrive during the contiguous block of insertions.}} if it outputs a (possibly randomized) set $\cA[V_{\textsf{now}}]$ of cardinality at most $\kappa$ such that the following inequality holds for any $R \subseteq V_{\textsf{future}}$:
    \[
                    \E{f(\cA[V_{\textsf{now}}] \cup R)} \ge \alpha \cdot \max_{\substack{S^* \subseteq V_{\textsf{now}} \cup R \\ |S^*| \le \kappa}} f(S^*).
    \]
    \end{definition}
    \begin{figure}
    \centering
    \begin{subfigure}{0.4\textwidth}
        \includegraphics[width=\textwidth]{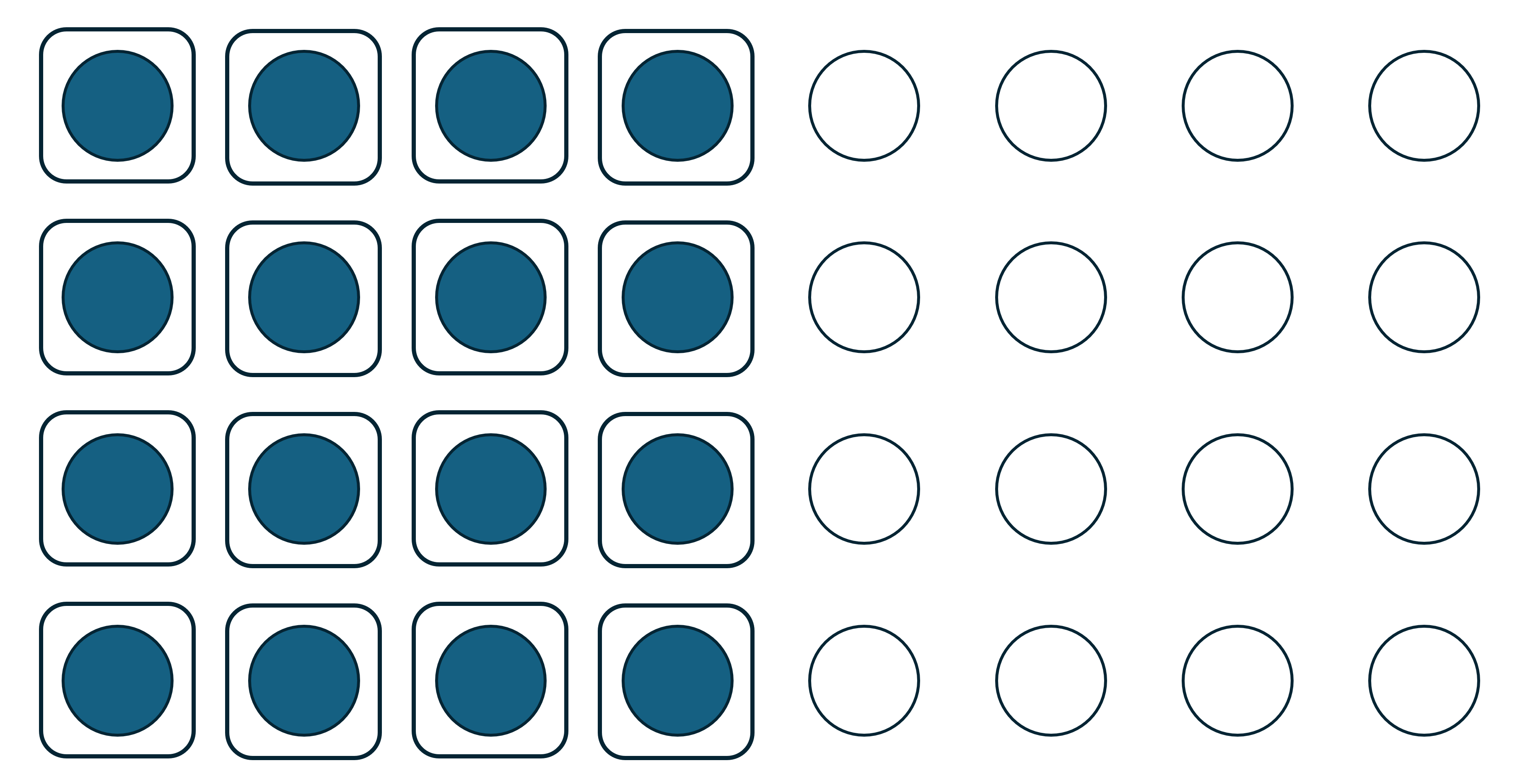}
        \caption{Elements covered by $\cA[V_{\textsf{now}}]$.}
    \end{subfigure}
    \hfill
    \begin{subfigure}{0.4\textwidth}
        \includegraphics[width=\textwidth]{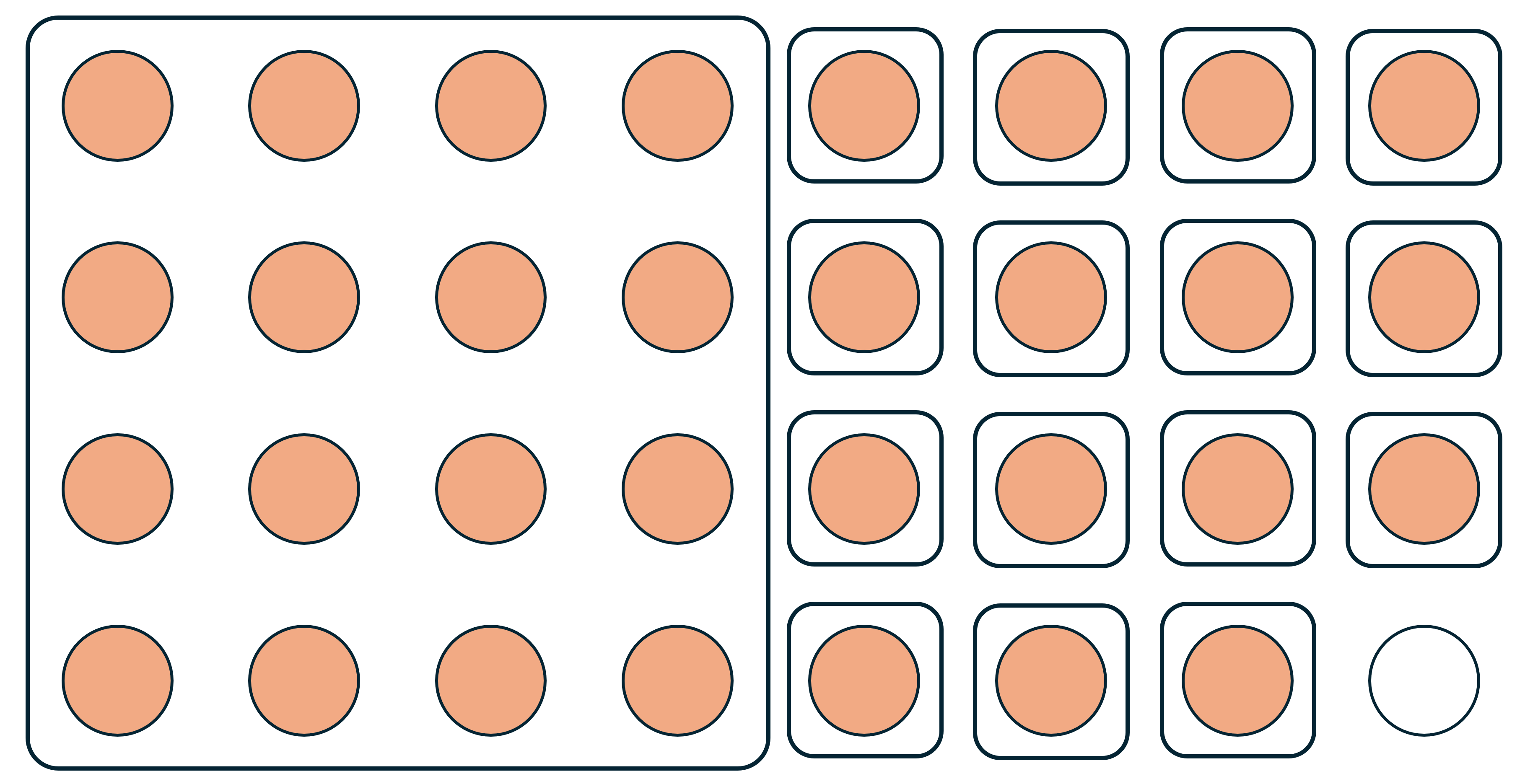}
        \caption{Optimal solution after inserting $R$.}
    \end{subfigure}
    \caption{Visualization of the perfect alignment phenomenon described in \Cref{ex:alignment}.}
    \label{fig:alignment}
    \end{figure}
Intuitively, an addition-robust algorithm avoids situations in which the addition of some elements $R$ does not add any value to the current solution $\cA[V_{\textsf{now}}]$, while increasing dramatically the value of another subset of elements. 
In \Cref{sec:addition-robust}, we formally describe \checkpoint and prove that any efficient $\alpha$-addition robust algorithm can be plugged into \checkpoint to obtain an efficient consistent algorithm with roughly the same approximation factor $\alpha$. 
 \begin{restatable}{theorem}{reduction}
        \label{thm:reduction}
        Suppose $\cA$ is an $\alpha$-addition-robust algorithm. Then 
        \checkpoint with precision parameter $\e \in (0,1)$ is $O(\nicefrac{1}{\e^2})$-consistent and provides an $(\alpha-O(\e))$-approximation.
        \end{restatable}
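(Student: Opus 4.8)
The plan is to analyze \checkpoint block by block. First I would set up notation: partition the $n$ insertions into consecutive blocks $B_1, B_2, \dots$ of length $\e k$ each, and for each block $B_j$ let $\tau_j$ be the random sub-block of length $\e^2 k$ inside $B_j$ during which the transition from the checkpoint solution $S_{j-1}$ to $S_j$ takes place. The checkpoint solution $S_j$ is defined by running the addition-robust algorithm $\cA$ with $V_{\textsf{now}}$ equal to the elements present at the start of block $B_j$ (minus, if needed, a small reserved slack so the cardinality is $\kappa = (1-O(\e))k$) and $V_{\textsf{future}}$ the elements that will arrive during $B_j$. During block $B_j$ the maintained solution is always $S_{j-1}$ (resp.\ $S_j$) together with \emph{all} elements inserted so far within the block, except during the transition sub-block $\tau_j$ where we additionally interpolate between $S_{j-1}$ and $S_j$ by swapping in $\e^2 k$ elements per step.

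Second, I would verify consistency: outside $\tau_j$ the only change per step is adding the one newly inserted element, so at most $1$ change; inside $\tau_j$ we additionally swap $\lceil |S_j \setminus S_{j-1}| / (\e^2 k)\rceil \le \lceil k/(\e^2 k)\rceil = O(1/\e^2)$ elements. Hence \checkpoint is $O(1/\e^2)$-consistent deterministically (for every realization of the random bits), as claimed. One technical point is that the maintained set must always have size $\le k$; this is why the checkpoint solutions are built with the reduced budget $\kappa = (1-O(\e))k$, leaving room for the $\le \e k$ fresh elements of the current block plus the at most $\e^2 k$ elements already swapped in during $\tau_j$.

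Third, the core of the argument: fix an arbitrary time $t$, lying in block $B_j$, and bound $\E{f(\ALG_t)}$ against $f(\OPT_t)$. Let $R$ denote the set of elements inserted during $B_j$ up to and including time $t$; note $|R| \le \e k$ and $R \subseteq V_{\textsf{future}}$ for the instance of $\cA$ that produced $S_j$ (and $S_{j-1}$ was produced with a $V_{\textsf{future}}$ containing the whole previous block). The key case distinction is whether $t$ falls before, during, or after $\tau_j$. If $t$ is \emph{before} $\tau_j$, then $\ALG_t \supseteq S_{j-1} \cup R'$ where $R'$ is all block-$B_j$ elements so far; apply the addition-robustness of $\cA$ for the $S_{j-1}$-instance with future-set $R'$ to get $\E{f(\ALG_t)} \ge \E{f(S_{j-1}\cup R')} \ge \alpha \max_{|S^*|\le \kappa, S^*\subseteq V_{\textsf{now}}\cup R'} f(S^*)$, and then relate this max to $f(\OPT_t)$: since $\OPT_t$ has size $k$ but $\kappa = (1-O(\e))k$, restricting $\OPT_t$ to its best $\kappa$ elements loses only a $(1-O(\e))$ factor by submodularity/monotonicity, so $\E{f(\ALG_t)} \ge (\alpha - O(\e)) f(\OPT_t)$. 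If $t$ is \emph{after} $\tau_j$, the same argument applies with $S_j$ in place of $S_{j-1}$. The delicate case is $t$ \emph{inside} $\tau_j$: here $\ALG_t$ is a partial interpolation, a ``mongrel'' of $S_{j-1}$ and $S_j$ plus $R$, whose value we cannot directly control. This is where randomization of $\tau_j$ is used: for a fixed $t$, the probability that $t \in \tau_j$ is exactly $\e$ (the sub-block is uniform among $\e k/(\e^2 k) = 1/\e$ choices, hmm — more carefully, $\Pr[t\in\tau_j] = \e^2 k / (\e k) = \e$), and even in that bad event $f(\ALG_t) \ge 0$. So
\[
\E{f(\ALG_t)} \ge \Pr[t\notin\tau_j]\cdot \E{f(\ALG_t)\mid t\notin\tau_j} \ge (1-\e)(\alpha-O(\e)) f(\OPT_t) = (\alpha-O(\e)) f(\OPT_t),
\]
where in the conditional expectation we still need the before/after bound to hold, which requires a small subtlety: conditioning on $t\notin\tau_j$ slightly biases the location of $\tau_j$, but the before/after bounds hold pointwise for every realization with $t\notin\tau_j$, so conditioning is harmless.

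The main obstacle I anticipate is the bookkeeping around the cardinality budget and making sure the addition-robust guarantee is invoked with a legitimate $(V_{\textsf{now}}, V_{\textsf{future}})$ split at every relevant time: one must check that $\OPT_t$ restricted to the first $t$ elements is comparable to $\max_{|S^*|\le\kappa} f(S^*)$ over $V_{\textsf{now}}\cup R$ with the \emph{right} $V_{\textsf{now}}$ (the elements at the last checkpoint, not at time $t$) — elements inserted in $B_j$ \emph{before} the current checkpoint's $V_{\textsf{now}}$ cutoff are fine since they're in $V_{\textsf{now}}$, and those after are in $R\subseteq V_{\textsf{future}}$, so $V_{\textsf{now}}\cup R \supseteq X_t$ and the comparison goes through. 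A second, more quantitative obstacle is tracking all the $O(\e)$ losses — from the reduced budget $\kappa$, from the $(1-\e)$ success probability, and from any rounding in block lengths — and confirming they sum to $O(\e)$ so that the final bound is $(\alpha - O(\e)) f(\OPT_t)$; this is routine but must be done carefully. Finally, efficiency: each checkpoint requires one call to $\cA$, there are $1/(\e k)$ of them per $k$-window, i.e.\ $O(n/(\e k))$ total, so \checkpoint runs in polynomial time whenever $\cA$ does.
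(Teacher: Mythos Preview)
Your approach is essentially the paper's: split into blocks, randomize the transition sub-block, invoke addition-robustness in the before/after cases, and absorb the $\e$-probability bad event where $t$ lands in the transition. The paper packages this as two short lemmas (consistency, then approximation) with the same three-case split $\cE_<,\cE_=,\cE_>$ you describe.

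There is one genuine bookkeeping gap, in exactly the place you flagged as the main obstacle. In your ``before $\tau_j$'' case you apply addition-robustness to $S_{j-1}$ with future-set $R'$ consisting only of the block-$B_j$ elements inserted so far. But $S_{j-1}$ was computed with $V_{\textsf{now}}$ equal to the elements present at the start of block $B_{j-1}$, so $V_{\textsf{now}}\cup R'$ omits all of block $B_{j-1}$'s elements and need not contain $\OPT_t$; your claimed inclusion $V_{\textsf{now}}\cup R \supseteq X_t$ fails precisely here. The paper's fix is to carry \emph{two} blocks' worth of recent elements in the maintained solution (the current block and the previous one), so that whenever the old checkpoint solution $\Sold=S_{j-1}$ is in play, the attached set $R$ covers everything inserted since $S_{j-1}$ was computed. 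This is why the reduced budget is $\kappa=(1-2\e)k$ rather than $(1-\e)k$, and why the final factor is $(1-2\e)^2\alpha$. Once you make this correction your plan goes through verbatim.
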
 
        We highlight that randomization is crucial in two steps of our approach: first, it is essential in the choice of the random interpolation times in \checkpoint and, second, it is necessary to get a better than $\nicefrac 12$ approximation for the addition-robust problem. In fact, it is easy to construct instances where no deterministic addition robust algorithm can achieve better $\nicefrac 12$ approximation, as the following example demonstrates.
        \begin{example}[Perfect Alignment]
        \label{ex:alignment}
            Consider a coverage function $g$ defined over a universe $U$ of $N$ elements $u_1, u_2, \dots, u_N$. The domain of the coverage function $g$ is given by the union of $\Vnow$ that contains all the singletons of $U$ and of $\Vfuture$ that contains all the subsets of $U$ of cardinality $\kappa$. Every fixed set $\cA[V_{\textsf{now}}] \subseteq \Vnow$ of cardinality $\kappa$ has value $g(\cA[V_{\textsf{now}}]) = \kappa$, however if $R$ is equal to the set of cardinality $\kappa$ that covers \emph{exactly} the same elements as $\cA[V_{\textsf{now}}]$, then $g(\cA[V_{\textsf{now}}] \cup R)=g(\cA[V_{\textsf{now}}]) = \kappa$, while the best solution in $\Vnow \cup R$ has always value exactly $2 \kappa -1$. Since $\kappa$ (and $N$) can be arbitrarily large, this example shows that no deterministic addition-robust algorithm can achieve a better than $\nicefrac 12$ approximation. We refer to \Cref{fig:alignment} for a visualization of the case where $N = 32$ and $\kappa = 16$.
        \end{example}
        
        Actually, as we detail in \Cref{app:local}, the standard local-search algorithm by \citet{NemhauserWF78} is a deterministic $\nicefrac 12$-addition-robust algorithm, thus settling the addition robust problem for deterministic algorithms. We next discuss two ways to circumvent this barrier through randomization.

\subsection{Tight-Information Theoretic Bounds}
\label{subsec:information_tight}
    
    We outline the key ideas and techniques employed to achieve tight information-theoretic results. For the lower bounds (positive results), we design a randomized algorithm, called \ExpoAlg, which exhibits the appropriate level of $\alpha$-addition robustness (and enables the design of consistent algorithms with the same approximation, as shown in \Cref{thm:reduction}). 
    The upper bounds (negative results) are based on carefully crafted hard instances that are informed by the insights gained in the analysis of \ExpoAlg.

    A central challenge in designing an addition-robust algorithm is the unknown nature of the future set $R$.
    To address this, consider the case where 
    we have a finite set of possible future scenarios $\mathcal{R} = \{R_1, \ldots, R_m\}$, one of which will occur. 
    In this setting, our objective is to find a solution that maximizes the expected value of the submodular function $f$, taking into account these potential future scenarios. To design an algorithm for this purpose, let $A_1, \ldots, A_m$ represent the optimal solutions among the current elements ($\Vnow$) for each of these scenarios, 
    so that if $R_i$ is the realized scenario, then the optimal solution is
$f(A_i \cup R_i)$.
    With this setup, the addition robust task reduces to finding 
    a solution $A$, that performs well across all possible future scenarios. Our approach consists in sampling our solution from a distribution $\DistributionA$ that hedges over all possible feasible solutions:

\begin{equation}
    \label{eq:hedge}
\EO_{A\sim \DistributionA}[f(A \cup R_i)] \geq \alpha \cdot f(A_i \cup R_i) \text{ for any } i=1,\ldots, m.
\end{equation}
Note, the above condition implies the one appearing in the definition of addition robustness. The above idea can be formalized via a linear program that looks for the largest $\alpha$ for which a distribution respecting the inequalities in Formula~\ref{eq:hedge} exists:
\begin{align*}
    \text{Maximize} & \quad \alpha \\
    \sum_{A \subseteq V_{\textsf{now}}: |A| \leq \kappa} \lambda_A \cdot f(A \cup R_i) & \geq \alpha \cdot  f(A_i \cup R_i) \qquad \mbox{for $1 \leq i \leq m$}\\[-0.3cm]
    \sum_{A \subseteq V_{\textsf{now}}: |A| \leq \kappa} \lambda_A & \leq 1 \\[-0.2cm]
    \lambda_A & \geq 0 \qquad \mbox{for all $A \subseteq V_\textsf{now}: |A| \leq \kappa$.}
\end{align*}

Intuitively, our linear program considers all the scenarios in the future and assigns a probability $\lambda_A$ to each feasible solution so that if we sample with those probabilities we achieve an $\alpha$-approximate solution as in the inequalities in Formula~\ref{eq:hedge}. What remains is to bound the value of $\alpha$; to this end, we employ linear programming duality and careful simplifications, to 
show that this problem is equivalent to understanding the following question:
\begin{quote}
    \textbf{Question:} What is the largest $\alpha$ so that, for any sets of elements $A_1,\ldots, A_m \subseteq \Vnow$ and $m$ potential future subsets $R_1, R_2, \ldots, R_m$ it holds that
\[
      \EO_{i,j \sim [m]} [f(A_i \cup R_j)]\footnote{We adopt the notation $i,j\sim [m]$ to denote that indices $i$ and $j$ are sampled independently and uniformly at random from $\{1,\dots,m\}$.} \geq \alpha \cdot \EO_{i \sim [m]} [f(A_i \cup R_i)]?
\]
\end{quote}

We note that this is in essence a \emph{correlation gap} question as we are comparing the value of $f$ when sampling the sets $A_i$ and $R_i$ jointly (on the right-hand-side) to the value obtained when sampling them independently (on the left-hand-side).%
\footnote{We remark that there are several different notions of correlation gap. A closely related one concerns the multi-linear extension of a set function $f$. For monotone submodular functions this gap is known to be $(1-\nicefrac{1}{e})$, and this is one of the main building blocks for achieving a $(1-\nicefrac{1}{e})$-approximation algorithm for maximizing a  monotone submodular function subject to a matroid constraint~\citep{CalinescuCPV11}.}
We also note that the actual algorithm (and its analysis) is more complicated, as we need to deal with the fact that the algorithm does not know $\mathcal{R}$ (in fact it does not even know $\Vfuture$ nor $|\Vfuture|$).

\paragraph{Coverage Functions.} We first discuss the simpler special case when $f$ is a coverage function. Let the underlying universe of $f$ be $Y$, i.e., every element $x$ corresponds to a subset of $Y$, and $f(S) = |\cup_{x\in S} x|$. Now consider any fixed $y \in Y$ and let $\beta$ and $\theta$  be the fraction of the $A_i$ and $R_i$ sets that contain $y$ for $1 \leq i \leq m$, respectively. If we sample two indices $i$ and $j$ independently and uniformly at random in $\{1,2, \dots, m\}$, we have that the probability that $y$ is covered by $A_i$ is $\beta$, while the probability that it is covered by $R_j$ is $\theta$. The indices $i$ and $j$ are sampled independently, thus the probability that $y$ is covered by both $A_i$ and $R_j$ is exactly $\theta \cdot \beta$. All in all, we have
\begin{align*}
      &\mathbb{P}_{i,j \sim [m]} (y \in A_i \cup R_j)  = \beta + \theta - \beta \cdot \theta, \text{ and } \mathbb{P}_{i \sim [m]} (y \in A_i \cup R_i)  \leq \min\{1, \beta + \theta\}, 
\end{align*}
where the inequality follows by union bound. 
Combining these two equations, we get:
\begin{gather}
    \frac{\mathbb{P}_{i,j \sim [m]} (y \in A_i \cup R_j)}{\mathbb{P}_{i \sim [m]} (y \in A_i \cup R_i)} \geq  \frac{\beta + \theta - \beta \cdot \theta}{\min\{1, \beta + \theta\}} \geq \frac{3}{4}, \nonumber
\end{gather}
where the second quality holds for any $\beta$ and $\theta$ in $[0,1]$. Summing up the above inequality for all $y \in Y$ results in the desired inequality:
\begin{gather}
    \EO_{i,j \sim [m]} [f(A_i \cup R_j)] \geq \tfrac{3}{4} \cdot \EO_{i \sim [m]} [f(A_i \cup R_i)]. \nonumber
\end{gather}

The construction of the matching impossibility result is rather straightforward for coverage functions, and is based on the perfect alignment phenomenon, described in \Cref{ex:alignment}. The following observation shows that the $\nicefrac{3}{4}$ bound is tight for addition robustness. In \Cref{sec:hardness-coverage}, we turn this into a $\nicefrac{3}{4}$ impossibility result of the actual problem.


\begin{observation}[Tight $\nicefrac 34$ instance for coverage]
    Consider the coverage instance over the $N$ elements universe with cardinality constraint $\kappa$ described in \Cref{ex:alignment}. The best distribution over $\Vnow$ for the addition robust problem is the uniform distribution over singletons. This means that the randomized set $\cA[V_{\textsf{now}}]$ covers exactly $\kappa$ elements of the universe, chosen uniformly at random (we refer to \Cref{fig:info_theoretic} for a visualization). Now, regardless of the choice of the subset $R$ by the adversary, its expected intersection with $\cA[V_{\textsf{now}}]$ has cardinality exactly $\nicefrac{\kappa}{N}.$ In particular, if $N = 2 \kappa$, this means that the gap between the optimal solution (that covers exactly $2\kappa-1$ elements), and $\cA[V_{\textsf{now}}] \cup R$ (that covers, in expectation, $\nicefrac 32 \kappa$ elements) converges to $\nicefrac 34$ as $\kappa$ grows to infinity.
\end{observation}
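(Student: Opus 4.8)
The statement has two parts: that covering a uniformly random $\kappa$-subset of $U$ is an \emph{optimal} addition-robust strategy on the perfect-alignment instance of \Cref{ex:alignment}, and that this strategy is beaten by a factor tending to $\nicefrac{3}{4}$. The second part is a short computation; the first needs an argument and is the step I expect to require the most care.

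\smallskip

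\noindent\emph{Step 1: reduction to the uniform distribution.}
Since $f$ is monotone and every element of $\Vnow$ is a singleton $\{u_i\}$, any feasible $\cA[\Vnow]$ can be padded to a subset of $U$ of size exactly $\kappa$ without decreasing its value; hence a randomized strategy is a distribution $\cD$ over $\binom{U}{\kappa}$, and I write $V(\cD) = \inf_{R \subseteq \Vfuture} \EO_{A\sim\cD}[f(A\cup R)]/\OPT_R$ for the best approximation factor it certifies, where $\OPT_R = \max_{S\subseteq\Vnow\cup R,\, |S|\le\kappa} f(S)$. The instance is invariant under the symmetric group $S_N$ acting on $U$: a permutation $\pi$ sends singletons of $\Vnow$ to singletons, $\kappa$-subsets of $\Vfuture$ to $\kappa$-subsets, and fixes $f$, so $V(\pi\cD) = V(\cD)$ for all $\pi$. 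For each fixed $R$ the map $\cD \mapsto \EO_{A\sim\cD}[f(A\cup R)]/\OPT_R$ is linear, so $V$ is concave (an infimum of linear functionals); therefore averaging $\cD$ over its $S_N$-orbit does not decrease $V$, and the resulting distribution is $S_N$-invariant. As $S_N$ acts transitively on $\binom{U}{\kappa}$, the only $S_N$-invariant distribution there is the uniform one $\cD^\star$, whence $V(\cD) \le V(\cD^\star)$ for every $\cD$, i.e.\ $\cD^\star$ is optimal.

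\smallskip

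\noindent\emph{Step 2: evaluation and conclusion.}
Under $\cD^\star$ every fixed element of $U$ is covered with probability $\kappa/N$. Fix $R = \{T\}$ for an arbitrary $T \in \Vfuture$; then each of the $\kappa$ elements of $T$ is covered by $\cA[\Vnow]$ with probability $\kappa/N$, so the expected overlap $|\cA[\Vnow] \cap T|$ equals $\kappa^2/N$ and $\EO[f(\cA[\Vnow]\cup R)] = 2\kappa - \kappa^2/N$. As in \Cref{ex:alignment}, $\OPT_{\{T\}} = 2\kappa - 1$ (take $T$ together with $\kappa-1$ singletons outside $T$, feasible since $N - \kappa \ge \kappa - 1$). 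Setting $N = 2\kappa$ gives $\EO[f(\cA[\Vnow]\cup R)] = \nicefrac{3}{2}\kappa$ against a benchmark of $2\kappa-1$, a ratio $\frac{3\kappa}{4\kappa-2}$ that tends to $\nicefrac{3}{4}$ as $\kappa \to \infty$. By Step 1 no strategy does better on this instance, so no addition-robust algorithm beats $\nicefrac{3}{4}$ here; together with the positive result of \Cref{thm:exp-time-coverage}, this makes $\nicefrac{3}{4}$ tight for addition robustness on coverage functions.

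\smallskip

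\noindent\emph{On the obstacle.}
The delicate point is Step 1: I would run the convexity/symmetry argument above, or, equivalently, observe that the linear program of \Cref{subsec:information_tight} specialized to this instance is $S_N$-symmetric and hence admits a symmetric optimum. If one only wants the $\nicefrac{3}{4}$ impossibility (not the optimality of $\cD^\star$), a shortcut avoids symmetrization: for \emph{any} $\cD$ on subsets of $\Vnow$ of size $\le \kappa$ and $R = \{T\}$ with $T$ uniform in $\binom{U}{\kappa}$, one has $\EO_T\EO_{A\sim\cD}[f(A\cup\{T\})] = \EO_A[\,|A|(1-\kappa/N)+\kappa\,] \le \nicefrac{3}{2}\kappa$ when $N = 2\kappa$, while $\OPT_{\{T\}} = 2\kappa-1$ for every $T$, so averaging yields one $T$ on which $\cD$ is at most $(\nicefrac{3}{4}+O(1/\kappa))$-approximate. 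Finally, letting $R$ be an arbitrary subset of $\Vfuture$ rather than a single $\kappa$-set only raises both $f(\cA[\Vnow]\cup R)$ and $\OPT_R$, so it cannot help the algorithm in the impossibility direction.
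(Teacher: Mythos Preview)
Your proposal is correct. The paper treats this as an observation with the reasoning folded into the statement itself; there is no separate proof, and in particular the claim that the uniform distribution is optimal is simply asserted. Your Step~1 supplies exactly the missing justification: the symmetrization argument (concavity of $V$ as an infimum of linear functionals, combined with $S_N$-invariance of the instance and transitivity of $S_N$ on $\binom{U}{\kappa}$) is the clean way to pin down optimality, and is precisely what one would extract from the $S_N$-symmetry of the linear program in \Cref{subsec:information_tight}. Your Step~2 matches the paper's computation; note that where the paper writes the expected intersection as $\nicefrac{\kappa}{N}$, your $\kappa^2/N$ is the correct value (and is what is needed for the $\nicefrac{3}{2}\kappa$ figure that follows). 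The alternative averaging argument you sketch at the end---randomizing $T$ uniformly and using linearity to bound any fixed $\cD$---is essentially the dual view of the same symmetrization and is a nice way to get the impossibility without first identifying the optimal primal distribution. One small quibble: the final sentence about general $R\subseteq\Vfuture$ ``raising both numerator and denominator'' is heuristic (raising both does not by itself control the ratio), but it is also unnecessary, since a single bad $R$ already suffices for the impossibility.
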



\paragraph{General Submodular Functions.} The case of general (monotone) submodular functions is more complex as we do not have as much structure as in the the case of a coverage function. {Initially, one might conjecture that any monotone submodular function can be transformed into a coverage function with an exponential increase in the instance size. However,  this is not the case.}
 In particular, no proof can achieve the same result for general submodular functions, as we show that -- in that case -- the tight bound is $\nicefrac 23$. 
The upper bound (impossibility) is informed by solving a linear program that solves the correlation gap for small instances (see \Cref{sec:hardness-submodular}). The structure of those instances then allows to highlight the differences between coverage and general submodular functions, and guide us towards useful properties (inequalities) that we use in the proof of the lower bound positive result (see \Cref{sec:minmaxalg}).

    \begin{figure}
    \centering
    \begin{subfigure}{0.4\textwidth}
        \includegraphics[width=\textwidth]{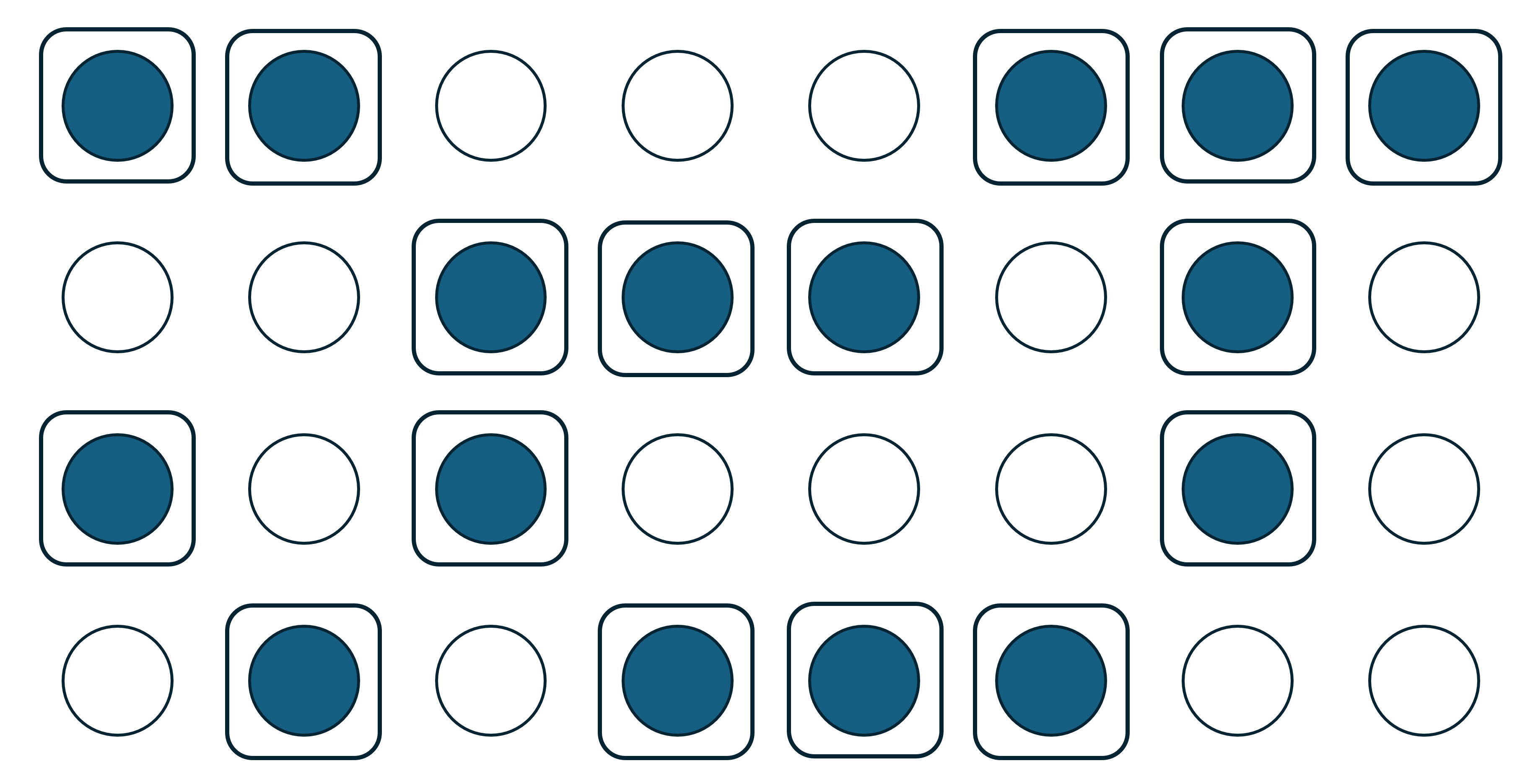}
        \caption{ $\cA[V_{\textsf{now}}]$ covers $\kappa$ elements chosen u.a.r.}
        \label{fig:first}
    \end{subfigure}
    \hfill
    \begin{subfigure}{0.4\textwidth}
        \includegraphics[width=\textwidth]{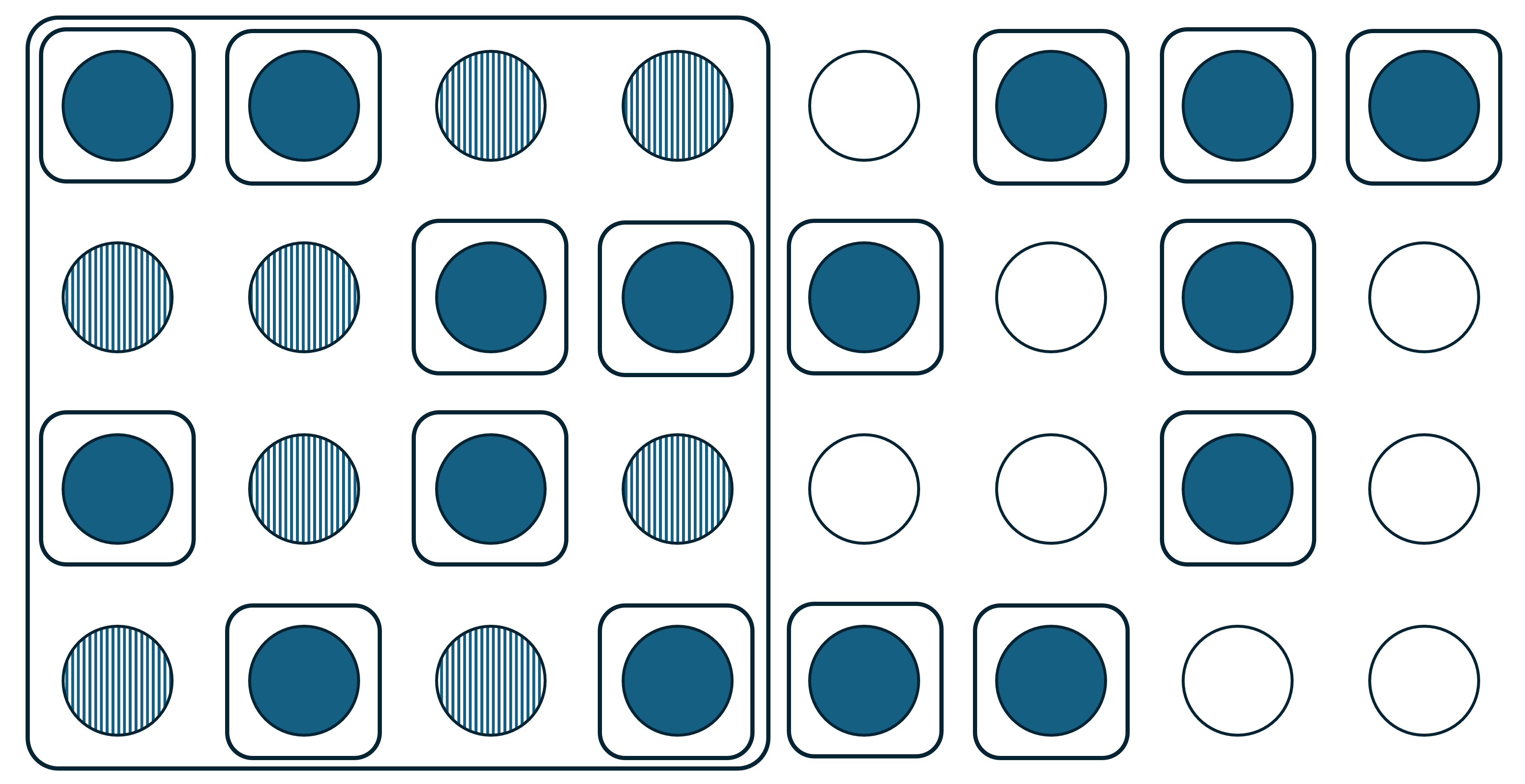}
        \caption{Elements covered by $\cA[V_{\textsf{now}}] \cup R$.}
        \label{fig:second}
    \end{subfigure}
    \caption{Visualization of \ExpoAlg on the perfect alignment instance.}
    \label{fig:info_theoretic}
    \end{figure}

\subsection{Efficient Randomized Algorithm}
\label{subsec:poly-time}

    The LP-based algorithms in the previous section crucially rely on enumerating all possible futures and their corresponding optimal solutions.\footnote{More precisely, \ExpoAlg considers $|\mathcal{R}| \in O\left((\nicefrac{|\Vnow|}{\varepsilon^2})^{2^{|\Vnow|}}\right)$ for general monotone submodular functions. See Section~\ref{sec:minmaxalg} for details.} If we move our attention to polynomial-time algorithms we need a different approach to achieve a better than $\nicefrac 12$-addition robust algorithm (especially since it is $\mathsf{NP}$-hard to match the information theoretic $\nicefrac 23$, in light of the $\nicefrac{(e-1)}{e}$-hardness of approximation in the standard submodular maximization setting \citep{Feige98}). Our algorithm, \GreedyCertificate, draws inspiration from the following observation for coverage functions.
    
    \begin{observation}[Hedging through Overprovisioning]
        As we have seen in \Cref{ex:alignment}, for any fixed solution $\cA[V_{\textsf{now}}]$ that the algorithm picks, the adversary may reveal a future set $R$ such that these two sets cover exactly the same area. In this scenario the algorithm cannot have better than $\nicefrac{1}{2}$-addition robustness as anything it has achieved so far is redundant. To circumvent this {\it perfect alignment} problem, we first compute a larger augmented set $\AugmentedSol$ (see \Cref{fig:augmented}) and make our selection robust by randomly sampling $\kappa$ elements from $\AugmentedSol$ (see \Cref{fig:sampled}). For instance, instead of directly selecting a set of size $\kappa$ with a certain value $v$, we could find $\AugmentedSol$ of size $1.25 \cdot \kappa$ of value $1.2 v$.  Selecting $\kappa$ elements from $\AugmentedSol$ randomly reduces the expected value to $0.96 v$, but addresses the perfect alignment problem, as now the adversary cannot cover \emph{exactly} $\cA[V_{\textsf{now}}]$ with $R$ (see \Cref{fig:final}) and the overall approximation factor can finally overcome the $\nicefrac 12$ barrier.
    \end{observation}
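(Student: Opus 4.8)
The statement bundles three claims that I would prove separately. Claim~(A): in the perfect-alignment instance of \Cref{ex:alignment}, every \emph{fixed} choice of $\cA[\Vnow]$ is at most $\nicefrac{1}{2}$-addition robust --- this is already established there, so I would simply invoke it. Claim~(B), the arithmetic: if $\AugmentedSol \subseteq \Vnow$ has $|\AugmentedSol| = 1.25\,\kappa$ and value $1.2\,v$, then the random set obtained by keeping a uniformly random $\kappa$-subset of $\AugmentedSol$ has expected value at least $0.96\,v$. Claim~(C), the qualitative point: this randomized construction escapes perfect alignment, so a randomized addition-robust algorithm can beat $\nicefrac{1}{2}$; I would substantiate this by re-examining the canonical hard instance against the subsampled solution.

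For Claim~(B), the key tool is a standard ``random-order marginals'' lemma: for monotone submodular $g$ with $g(\emptyset) = 0$, a set $S$ with $|S| = s$, and a uniformly random $T \subseteq S$ with $|T| = t \le s$, one has $\E{g(T)} \ge \tfrac{t}{s}\,g(S)$. I would prove it by drawing a uniformly random ordering $e_1, \dots, e_s$ of $S$, writing $g(S) = \sum_{i=1}^{s} g(e_i \mid \{e_1, \dots, e_{i-1}\})$ and $\E{g(\{e_1, \dots, e_t\})} = \sum_{i=1}^{t} m_i$ with $m_i \DEF \E{g(e_i \mid \{e_1, \dots, e_{i-1}\})}$; a short exchange argument (equivalently, concavity in $j$ of the expected value $h(j)$ of a uniformly random $j$-subset of $S$, together with $h(0) = 0$) shows that $(m_i)_i$ is non-increasing and non-negative, so its length-$t$ prefix sum is at least a $\tfrac{t}{s}$ fraction of the full sum $g(S)$. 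Setting $s = 1.25\,\kappa$ and $t = \kappa$ yields $\E{g(T)} \ge 0.8 \cdot 1.2\,v = 0.96\,v$, the figure asserted in the statement. An equivalent route includes each element of $\AugmentedSol$ independently with probability $p = \kappa / (1.25\,\kappa)$, for which $\E{g(T)} \ge p \cdot g(\AugmentedSol)$ follows from the same marginal argument without any concavity fact; I would use whichever form matches how \GreedyCertificate actually samples. The implicit feasibility in the ``for instance'' --- that such an $\AugmentedSol$ can actually be produced --- is not a universal fact and is precisely what the analysis of \GreedyCertificate must arrange, so here I would flag it as illustrative.

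For Claim~(C), I would instantiate \Cref{ex:alignment} with $N = 2\kappa$, take $v = \kappa$, and let $\AugmentedSol$ be any $1.2\,\kappa$ singletons, so that $\AugmentedSol$ covers a block $B$ of $1.2\,\kappa$ universe elements and the subsampled $\cA[\Vnow]$ covers a uniformly random $\kappa$-subset of $B$. Against any future $R = \{r\}$ with $|r| = \kappa$, the adversary does best by taking $r \subseteq B$, in which case every element of $r$ is covered by $\cA[\Vnow]$ with probability $\kappa / (1.2\,\kappa) = \tfrac{5}{6}$, so $\E{f(\cA[\Vnow] \cup R)} = \kappa + \kappa - \tfrac{5}{6}\kappa = \tfrac{7}{6}\kappa$, while the best feasible $S^* \subseteq \Vnow \cup R$ has value $2\kappa - 1$. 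The ratio therefore tends to $\tfrac{7}{6} / 2 = \tfrac{7}{12} \approx 0.583 > \nicefrac{1}{2}$. I would also record the structural reason: since the support of $\cA[\Vnow]$ now spans all of $\AugmentedSol$, the event that $r$ is covered \emph{entirely} by $\cA[\Vnow]$ has probability only $1 / \binom{1.2\,\kappa}{\kappa}$, so with overwhelming probability some of what $\cA[\Vnow]$ covers survives the union --- the perfect-alignment degeneracy is destroyed.

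The main obstacle is that the closing clause of the Observation (``the overall approximation factor can finally overcome the $\nicefrac{1}{2}$ barrier'') is not local to the perfect-alignment instance: promoting it to a theorem requires, for \emph{every} instance, simultaneously (i) producing an augmented set with a favorable size-versus-value trade-off and (ii) bounding the loss from subsampling while $R$ is chosen adversarially. That joint task is exactly \GreedyCertificate and \Cref{thm:poly-time}; accordingly I would present this Observation as a motivated heuristic, prove the self-contained parts (A)--(C) as above, and defer the worst-case quantitative guarantee to \Cref{subsec:poly-time}. A secondary subtlety is that the ``$(m_i)$ non-increasing'' step, although routine, genuinely needs the correct exchange argument (or the coupling to independent inclusion) together with the normalization $g(\emptyset) = 0$, so I would isolate it as an explicit lemma rather than invoke it in passing.
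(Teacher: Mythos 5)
Your reading of the statement is the right one: in the paper this Observation is a purely motivational remark with no proof of its own --- its rigorous counterpart is the analysis of \GreedyCertificate in \Cref{sec:poly-time} (\Cref{thm:poly-robust,thm:poly-time}) --- so splitting it into (A) a pointer to \Cref{ex:alignment}, (B) the subsampling arithmetic, and (C) an illustration that randomization destroys perfect alignment, while deferring the worst-case ``beat $\nicefrac{1}{2}$'' claim, matches the paper's intent exactly. The one genuine methodological difference is in (B): you prove $\E{g(T)} \ge \tfrac{t}{s}\, g(S)$ via a random ordering of $\AugmentedSol$ and the non-increasing-marginals/concavity fact, whereas the paper's later formalization (\Cref{lem:AcupR}, points (ii)--(iii)) telescopes along the \emph{fixed} order in which $\AugmentedSol$ was built and lower-bounds each random marginal $f(a_i \mid R \cup (\FinalSol \cap \{a_1,\dots,a_{i-1}\}))$ by the deterministic marginal $f(a_i \mid R \cup \{a_1,\dots,a_{i-1}\})$, using only $\Pr(a_i \in \FinalSol) = \tfrac{1}{1+\ExtraSample'}$; this sidesteps the exchange/concavity lemma you flag as needing care, and it also keeps $R$ in the conditioning throughout, which is the form actually needed in the analysis (the loss is applied to $f(\AugmentedSol \mid R)$, not to $f(\AugmentedSol)$ alone). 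Both routes yield the quoted $0.8 \cdot 1.2 = 0.96$ figure, and your explicit $\nicefrac{7}{12}$ computation on the perfect-alignment instance (expected value $\tfrac{7}{6}\kappa$ against optimum $2\kappa-1$) is a correct, self-contained illustration that the paper does not spell out; your caveat that the existence of a favorable $\AugmentedSol$ is illustrative rather than universal is likewise consistent with how the paper uses the example.
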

    
    \begin{figure}
\centering
\begin{subfigure}{0.4\textwidth}
    \includegraphics[width=\textwidth]{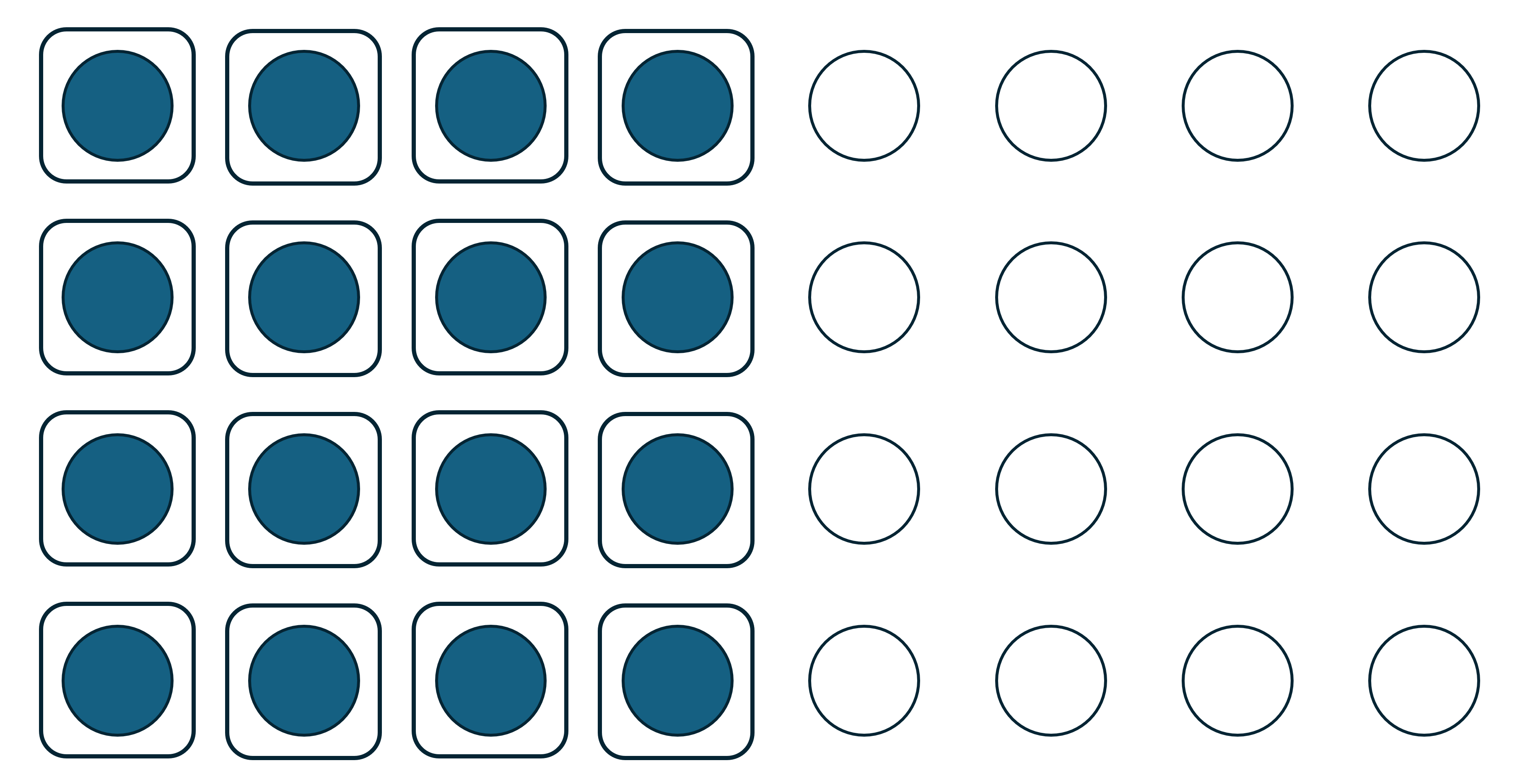}
    \caption{Initial greedy solution $S$.}
    \label{fig:greedy}
\end{subfigure}
\hfill
\begin{subfigure}{0.4\textwidth}
    \includegraphics[width=\textwidth]{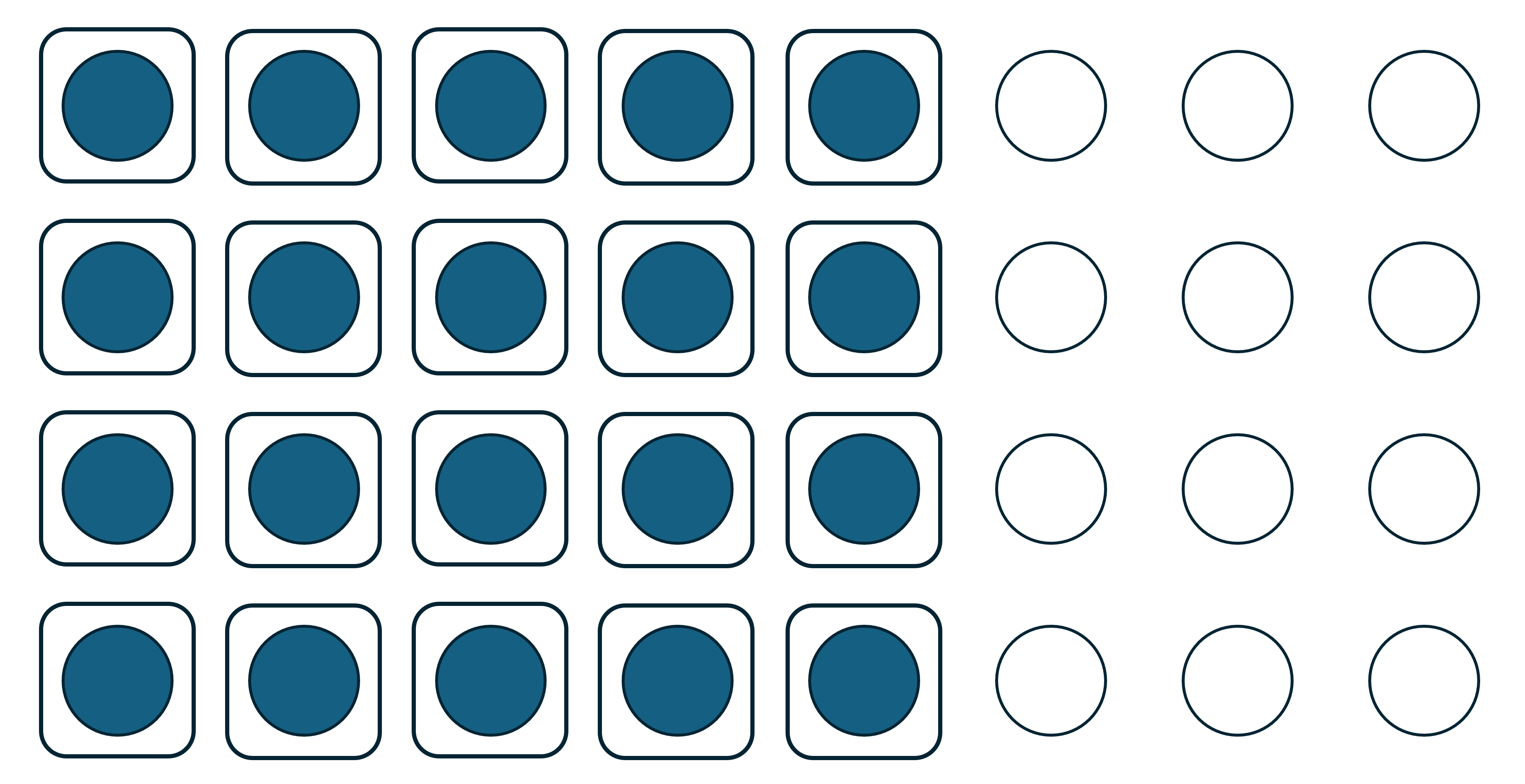}
    \caption{Augmented solution $\AugmentedSol$.}
    \label{fig:augmented}
\end{subfigure}
\begin{subfigure}{0.4\textwidth}
    \includegraphics[width=\textwidth]{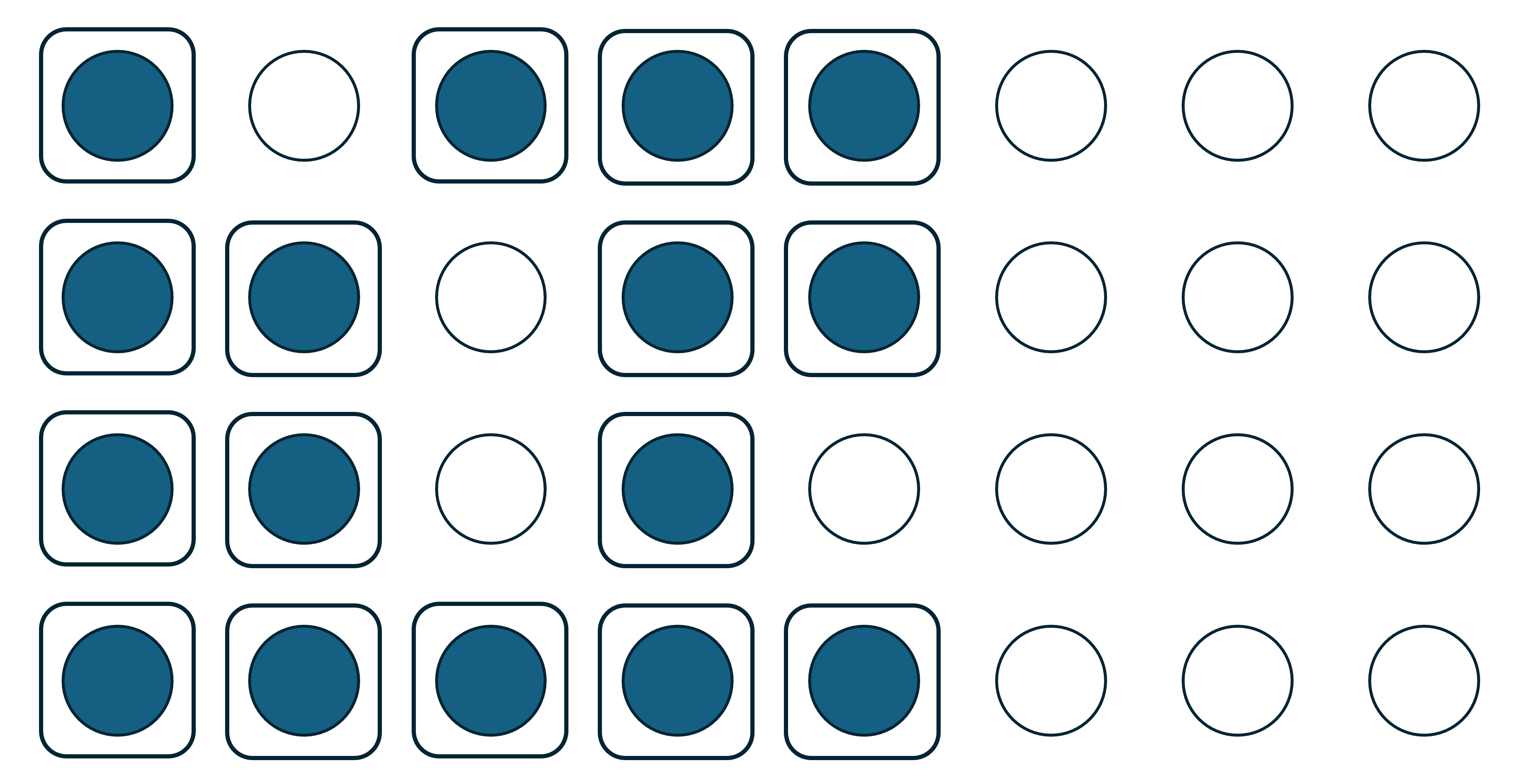}
    \caption{Subsampled solution $A$.}
    \label{fig:sampled}
\end{subfigure}
\hfill
\begin{subfigure}{0.4\textwidth}
    \includegraphics[width=\textwidth]{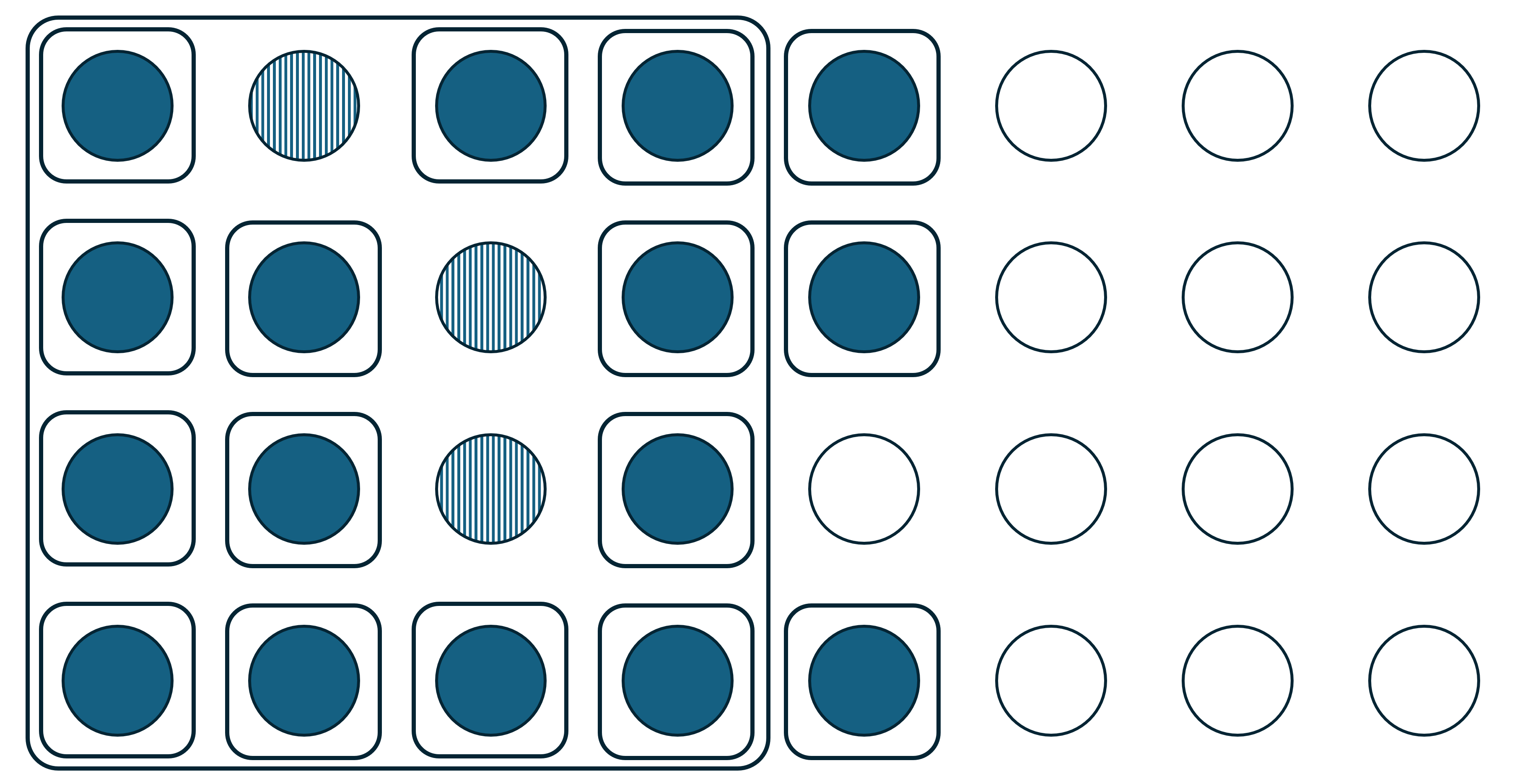}
    \caption{Elements covered by $A \cup R$.}
    \label{fig:final}
\end{subfigure}
        
\caption{Visualization of the \GreedyCertificate algorithm.}
\label{fig:figures}
\end{figure}
    
    More precisely, \GreedyCertificate first computes a greedy solution $S$ of $\kappa$ elements in $V_{\textsf{now}}$ (we refer to \Cref{fig:greedy} for visualization), then augment it to $\AugmentedSol$ (see \Cref{fig:augmented}) by adding at most $\eta \kappa$ extra elements that have high enough marginal contribution to $\AugmentedSol$, i.e., such that 
    \[
        f(e|\AugmentedSol) \ge \frac{\gamma}{\kappa} f(S).
    \]
    Finally, it subsamples a set $A$ of $\kappa$ elements from $\AugmentedSol$ (see \Cref{fig:sampled}). In \Cref{sec:poly-time} we perform a careful optimization of the parameters $\eta$ and $\gamma$ to show a $0.51$ approximation. Here we provide some intuition about the technical challenges that our analysis has to overcome.
    
    Our goal is to approximate the optimal $\kappa$ elements solution $\OPT'$ in $\Vnow \cup R$. The set $\OPT'$ is thus naturally partitioned into its intersections with $\Vnow$ and $R \subseteq \Vfuture$. Given the submodularity of $f$, the most natural approach suggests to analyze separately these two parts:
    \begin{equation}
    \label{eq:naive}
        f(\OPT') \le \underbrace{f(\OPT' \cap \Vnow)}_{\le \nicefrac{(e-1)}{e} \cdot f(S)} + \underbrace{f(\OPT' \cap  \Vfuture)}_{\le f(R)},
    \end{equation}
    where the upper bound on the first term follows from the approximation guarantees of the initial greedy solution, while the second one from observing that $\OPT' \cap  \Vfuture \subseteq R.$ Such a crude analysis, however, cannot achieve a better than $(\nicefrac{(e-1)}{e}+1)^{-1}$-approximation on its own (a priori, $f(\AugmentedSol \cup R)$ may be approximately equal to $f(R)$ and $f(S)$). To improve on this bound, we split the analysis into two cases: either the algorithm fills the extra budget $\eta \kappa$ (\emph{high surviving-value} case), or it is not able to do it, i.e., no element with marginal contribution to $\AugmentedSol$ larger than the threshold survived (\emph{low surviving-value} case).
    
    \paragraph{Low surviving-value case.} In the low surviving-value case, we have \emph{a certificate} that all elements in the optimal solution $\OPT'$ have small marginal contribution with respect to $\AugmentedSol$. Moreover,  only $\eta' \kappa < \eta \kappa$ elements have been added to $S$. As a first observation, it is possible to employ linearity of expectation and submodularity to relate the value of the actual solution $\E{f(A \cup R)}$ with that of the non-subsampled set $f(\AugmentedSol \cup R)$ and of $R$:
    \begin{equation}
        \label{eq:subsampling_intro}
        \tfrac{1}{(1+\eta') }f(\AugmentedSol \cup R)  + \tfrac{\eta'}{(1+\eta') } f(R)\le \E{f(A \cup R)}.
    \end{equation}
    Now, we need to quantify the drop in average value per element going from $S$ to $\AugmentedSol \cup R$. Here, the marginal value of any element added to $\AugmentedSol$ is at least a $\gamma$ fraction of the average value of elements in $S$. Formally, we can prove that 
    \[
        (1+\gamma \eta') f(S) \le f(\AugmentedSol \cup R).  
    \]
    Combining the last two inequalities with Inequality~\ref{eq:naive} (and using that $\eta' < \eta$), we get an approximation factor that is parameterized by the threshold $\gamma$ and the sampling budget $\eta$:
    \begin{equation}
        \label{eq:first_term_intro}
        f(\OPT') \le (1+\ExtraSample)\left(1+  \frac{\threshold}{1+\threshold \ExtraSample}
      \right) \E{f(\FinalSol\cup R)}
    \end{equation}

    \paragraph{High surviving-value case.} The analysis of the high surviving-value case is more challenging, as we do not have any direct way of upper bounding the marginal contribution of the generic $x \in \OPT'$ with respect to $\AugmentedSol \cup S$. The crucial ingredient we need is a way to relate the quality of the initial greedy solution with that of the other elements in $\Vfuture$ that may be added to the optimal solution $\OPT'$. To this end, let $\mu \in [0,1]$ be such that
        \[
            \max_{e \in \Vnow \setminus S} f(e \mid S) = \frac{\mu}{\kappa}f(\OPT_{\textsf{now}}),
        \]
    where $S$ is the greedy solution that constitutes the initial seed of $\AugmentedSol$, and $\OPT_{\textsf{now}}$ is the best $\kappa$-elements set in $\Vnow.$ Intuitively, if $\mu$ is close to $1$, then it means that the actual value of $S$ is larger than the one guaranteed by the worst-case analysis of greedy. Conversely, if $\mu$ is small, then we can exploit that the marginal contribution of all the elements in $\oldOPT$ with respect to $S$ (and thus $\AugmentedSol$ by submodularity) is at most $\nicefrac{\mu}{\kappa}f(\oldOPT)$.
    We can formalize the latter consideration and observe that the marginal contribution of any $x \in \OPT'$ with respect to $S$ (and thus also with respect to $\AugmentedSol \cup R$, by submodularity) is at most $\nicefrac{\mu}{\kappa}f(\oldOPT).$ Therefore, the following inequality holds:
    \[
        \label{eq:improved_bound_intro}
        f(\OPT') \le f(\AugmentedSol \cup R) + \mu f(\oldOPT).
    \]
    We can combine this inequality with Inequality~\ref{eq:naive} and the subsampling Inequality~\ref{eq:subsampling_intro} (in this case $\eta'= \eta$) to get the following approximation bound:
    \begin{equation}
        \label{eq:second_term_intro}
        \frac{f(\OPT')}{\E{f(\FinalSol \cup R)}} \le (1+\ExtraSample) \frac{\min\{f(\oldOPT) + f(R),f(\AugmentedSol \cup R) + \mu f(\oldOPT)\}}{f(\AugmentedSol \cup R) + \ExtraSample f(R)}.
    \end{equation}

    \paragraph{Putting everything together.} As a last step of the analysis, we optimize for the parameters $\gamma$ and $\eta$ that appears in Equations~\ref{eq:first_term_intro} and \ref{eq:second_term_intro}. Note, we need to set these parameters so that the right approximation factor holds for both cases and for all values of $f(\oldOPT), f(\AugmentedSol \cup R), f(R)$ and $\mu$. 
    In finetuning these parameters, we need a final ingredient to achieve the better-than-$\nicefrac12$ approximation factor. In particular, in \Cref{app:refined_greedy}, we propose an improved instance-dependent analysis of the approximation guarantees of the greedy algorithm that may be of independent interest (note, for $\mu= \nicefrac{1}{e}$ we recover the standard $\nicefrac{(e-1)}{e}$-approximation:
    \[
            f(\GreedySol) \ge (1+\mu \ln \mu) f(\oldOPT).
    \]
    This inequality formalizes the intuition that a large $\mu$ correspond to an improved approximation guarantee on $S$, and relates the value of $f(\AugmentedSol \cup R)$ with that of $f(\oldOPT)$ in a non-trivial way. For all missing details we refer to \Cref{sec:poly-time}.
    

\section{Reduction to Addition-Robust Submodular Maximization}
\label{sec:addition-robust}

In this section, we present a reduction from the problem of designing a randomized consistent algorithm, to the problem of designing an addition-robust algorithm. Towards this end, we construct a consistent meta algorithm, \checkpoint, and study its properties.

A crucial role is played by the \swap routine, which interpolates between two solutions by exchanging  $\ell$ elements at a time. \swap takes as input two sets, a base set $A$ and a target set $B$, and swaps $\ell$ arbitrary elements in $A \setminus B$ with $\ell$ arbitrary elements from  $B \setminus A$. Then, it returns this new version of $A$, which is closer to the target $B$ (see pseudocode).
    \begin{algorithm}[t]
        \caption*{\swap}
        \begin{algorithmic}[1]
        \STATE \textbf{Input:} Base set $A$ and target set $B$ with the same cardinality, integer $\ell$
        \STATE $A' \gets A \setminus B$
        \STATE $B' \gets B \setminus A$
        \IF{$|A'| \le \ell$}
            \RETURN $B$
        \ELSE
            \STATE Let $A''$ be an arbitrary subset of $A'$ of cardinality $\ell$ \STATE Let $B''$ be an arbitrary subset of $B'$ of cardinality $\ell$
            \RETURN $A \setminus A'' \cup B''$
        \ENDIF
        \end{algorithmic}
        \end{algorithm}
        
        The meta-algorithm \checkpoint divides the stream of insertions into blocks of $\block = \e k$ elements. These blocks are separated by check-points (indexed with $\blockid$), where an $\alpha$-addition robust submodular maximization routine $\cA$ is used to compute a suitable solution of cardinality $\k = (1-2\e)k$ on the elements arrived so far. \checkpoint interpolates between a ``new'' solution $\Snew$ computed at the last check-point $\blockid$ (line~\ref{line:Snew} in the pseudocode), and an ``old'' one $\Sold$ coming from the previous check-point $\blockid'$ (line~\ref{line:Sold}) in the following way. It divides each block (of length $\block = \e k$) into $\nicefrac {1}\e$ equal contiguous sub-blocks and then select one of them in each block uniformly at random (line~\ref{line:sampling}). 
        In each block, the transition happens within the selected sub-block: \swap is called in each of the $\e \block = \e^2 k$ time steps and swaps at most $\nicefrac 1{\e^2}$ elements each time.  
        Besides, \checkpoint maintains in the solution the elements arrived up to the old checkpoint (lines~\ref{line:R_update_1}, \ref{line:R_update_2}, and \ref{line:R_update_3}), to account for drastic changes in the dynamic optimum that may have occurred since the last check-point. We denote the recent elements with $R$ (note, $|R|$ is always at most  $2 \e k$). See the pseudocode for further details. Here, and in the rest of the paper, we make the simplifying assumption that $\nicefrac 1{\e}$ and $\e k$ are integer; this is without loss of generality, as all the arguments can be made formal by considering the integer part. \begin{algorithm}[t]
        \caption*{\checkpoint}
        \begin{algorithmic}[1]
        \STATE \textbf{Environment:} Stream $X$ of $n$ elements, function $f$, cardinality $k$
        \STATE \textbf{Input:} Precision parameter $\e$ and addition-robust submodular routine $\cA$
        \STATE \textbf{Initialization:} $R \gets \emptyset$, $\Sold \gets \emptyset$, $\Snew \gets \emptyset$ $\ALG \gets \emptyset$
        \STATE $\block \gets \e \cdot k$, $\kappa \gets (1-2\e)k$ \hfill \COMMENT{Block length $\Delta$ and addition-robust cardinality $\kappa$}
        \FOR{$t = 1, \dots, \block$}
            \STATE $R \gets R + x_t$, $\ALG \gets R$ \label{line:R_update_1}
        \ENDFOR
        \FOR{$ i = 1, \dots, \nicefrac n {\block}$}
            \STATE $\blockid \gets i \cdot \block$, $\blockid' \gets (i-1) \cdot \block$ \hfill \COMMENT{$\blockid'$ denotes the previous checkpoint}
            \STATE $\Sold \gets \Snew$ \hfill  \COMMENT{The previous block's $\Snew$ becomes the current $\Sold$} \label{line:Sold}
            \STATE Let $\Snew$ be the output of $\cA$ on $X_{\blockid}$ 
            \hfill \COMMENT{Note, $|\Snew| \le \k$}\label{line:Snew}
            \STATE $R \gets R \setminus X_{\blockid'}$, $\ALG \gets \Sold \cup R$ \hfill \COMMENT{$X_{0} \gets \emptyset$ by convention} \label{line:R_update_2}
            \STATE Draw $j$ uniformly at random in $\{0, 1, \dots, \nicefrac 1 \e-1\}$ \label{line:sampling}
            \FOR{$t = \blockid +1, \dots, \blockid + \block $}
            \IF{$t \in \{\blockid + j \e \block +1 , \blockid + (j+1) \e \block \}$}
                \STATE $\Sold \gets \swap(\Sold,\Snew,\nicefrac{1}{\e^2})$ \hfill \COMMENT{At the end of the sub-block, $\Sold=\Snew$} \label{line:swap}
            \ENDIF
                \STATE $R \gets R + x_t$\label{line:R_update_3}
                \STATE $\ALG \gets R \cup \Sold$
            \ENDFOR
        \ENDFOR
        \end{algorithmic}
        \end{algorithm}

        \reduction*
        
        The proof of the theorem is divided into two steps. First,  
        we argue that the randomized choice of the transition sub-blocks guarantees that \checkpoint is consistent.
        
        \begin{lemma}
            \label{lem:check-point-consistency}
            \checkpoint is $O(\nicefrac 1{\e^2})$-consistent
        \end{lemma}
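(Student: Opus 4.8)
The plan is to bound $|\ALG_t \setminus \ALG_{t-1}|$ directly, at every time step $t$ and for every realization of the internal randomness (the sub-block indices $j$ drawn on line~\ref{line:sampling}), and show it never exceeds $\nicefrac{1}{\e^2}+1$. The starting point is the structural invariant, read off from the pseudocode, that at every step the maintained solution is $\ALG_t = \Sold \cup R$ (with $\ALG_t = R = X_t$ during the initial $\block$ steps). Hence a new element can enter $\ALG$ only by entering $R$ or by entering $\Sold$, and I would treat the two mechanisms separately. The set $R$ grows only by the single arriving element $x_t$ (lines~\ref{line:R_update_1} and~\ref{line:R_update_3}); its other update, $R \gets R \setminus X_{\blockid'}$ at a check-point (line~\ref{line:R_update_2}), only deletes elements and so contributes nothing to $\ALG_t \setminus \ALG_{t-1}$. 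The set $\Sold$ changes only through the call $\Sold \gets \swap(\Sold,\Snew,\nicefrac{1}{\e^2})$ on line~\ref{line:swap}, which happens at most once per step (only while $t$ lies in the chosen sub-block). Assuming w.l.o.g.\ that $\cA$ always returns a set of cardinality exactly $\k$ (pad with arbitrary current elements; by monotonicity this does not hurt the addition-robust guarantee), the two sets passed to \swap have equal cardinality, and inspecting its two branches shows the output differs from the base set by at most $\ell = \nicefrac{1}{\e^2}$ new elements. Combining, each step adds at most $\nicefrac{1}{\e^2}$ elements to $\ALG$ via the swap plus at most one via the new arrival, so $|\ALG_t \setminus \ALG_{t-1}| \le \nicefrac{1}{\e^2}+1 = O(\nicefrac{1}{\e^2})$, deterministically.

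Two points need a little care, and the first is what I regard as the only genuine obstacle. The check-point reassignment $\Sold \gets \Snew$ on line~\ref{line:Sold} could a priori inject $\Theta(k)$ new elements in one step; I would rule this out by showing it is always a no-op, i.e.\ that by the time block $i$ starts, $\Sold$ already equals the $\Snew$ computed at the previous check-point. This follows because the chosen sub-block consists of $\e\block = \e^2 k$ consecutive steps and, since it begins at offset $j\e\block$ with $j \le \nicefrac{1}{\e}-1$, it ends at offset $(j+1)\e\block \le \block$, entirely within the block; over those $\e^2 k$ steps \swap exchanges $\nicefrac{1}{\e^2}$ elements each, a total capacity of $k \ge |\Sold \setminus \Snew|$ (using $|\Sold| = \k = (1-2\e)k$), so the interpolation finishes before the block ends and $\Sold$ stays put until the next block's swaps begin. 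The second point is the very first block, where $\Sold$ does grow from $\emptyset$ up to a set of size $\k$: here $\Snew = \cA(X_{\block}) \subseteq X_{\block} \subseteq R$ holds throughout the block (since $R \supseteq X_{\block}$ there), so $\Sold \cup R = R$ both before and after the transition and again no element is added to $\ALG$.

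Finally, since all of these bounds hold for every fixed outcome of the draws $j$ — the randomness only relocates the $\nicefrac{1}{\e^2}$-sized batch of swap changes within each block, never enlarging it — the definition of $C$-consistency is met with $C = \nicefrac{1}{\e^2}+1$, which proves the lemma. Everything except the sub-block-length/capacity argument in the second paragraph is routine bookkeeping from the pseudocode.
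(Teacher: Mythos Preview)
Your proof is correct and follows essentially the same approach as the paper's: both track the changes to $\ALG = \Sold \cup R$ step by step, bounding the contribution from \swap by $\nicefrac{1}{\e^2}$ and the contribution from the new arrival by one. You are more explicit than the paper in justifying that the checkpoint reassignment $\Sold \gets \Snew$ is a no-op (via the sub-block capacity argument) and in handling the first block where $\Sold$ grows from $\emptyset$, points the paper leaves implicit.
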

        \begin{proof}
            We can divide the consistency analysis into three cases: the insertions corresponding to check-points, the ones falling in one of the sub-blocks where \swap is called, and all the remaining insertions.
            
            We start a generic checkpoint $\blockid$. At the end of each block of insertions, the current solution $\ALG$ is equal to the last $2 \block$ elements (set of recent elements $R$), plus the new solution in that block. When a new element $x_{\blockid}$ is inserted, corresponding to a check-point, then the previous new solution becomes the old solution, so what happens in line~\ref{line:R_update_2} only entails removing from the solution all the $\block = \e k$ elements in $R \cap X_{\blockid'}$, where $\blockid'$ denotes the previous checkpoint (if any).

            Consider now what happens upon every insertion that results in a \swap call. The solution is modified in lines~\ref{line:swap} ($\nicefrac 1{\e^2}$ elements are modified) and \ref{line:R_update_3} (where only one recent element is inserted). Overall, we have a $\nicefrac 1{\e^2} +1$ bound on the consistency. 
            Finally, after all the other insertions, the only change in the solution is given by the insertion of the new element in $\ALG$ (line\ref{line:R_update_3}).
        \end{proof}

        It remains to show that using an $\alpha$-addition robust submodular algorithm $\cA$ as a subroutine provides the desired approximation guarantees. Note, $\cA$ is called with cardinality constraint $\k$, so to leave some extra room for the future elements $R$.

        \begin{lemma}
        \label{lem:addition-robust}
            If the submodular routine $\cA$ is $\alpha$-addition-robust, then \checkpoint provides a $(1-2\e)^2\alpha$ approximation to the dynamic optimum.
        \end{lemma}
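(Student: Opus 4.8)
The plan is to fix an arbitrary time step $t$ and argue that, with probability at least $1-\e$ over \checkpoint's coins, $\ALG_t$ contains as a subset an entire checkpoint solution produced by $\cA$ together with all elements that arrived since the second-to-last checkpoint; on the remaining event of probability at most $\e$ I will use only $f(\ALG_t)\ge 0$. First I would dispose of the boundary: for $t<2\block$ the solution still contains all of $X_t$ (it is either $R=X_t$ from the initial loop, or $R\cup\Sold$ with $R=X_t$ during the first block), so $f(\ALG_t)\ge f(\OPT_t)$ and the bound is trivial. For $t\ge 2\block$, $\ALG_t$ is last written during some iteration $i\ge 2$ of the main loop; put $\blockid'=(i-1)\block$ (the previous checkpoint) and $\blockid=i\block$ (the current one). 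Then $\ALG_t=\Sold\cup R$ with $R=X_t\setminus X_{\blockid'}$ and $|R|\le 2\block$ (so $|\ALG_t|\le\kappa+2\block=k$ and $\ALG_t$ is feasible), and $\Sold$ equals either the checkpoint solution $\cA(X_{\blockid'})$ or $\cA(X_{\blockid})$, \emph{except} when $t$ falls inside the randomly chosen transition sub-block of block $i$, in which case $\Sold$ is a partial interpolant. A one-line counting check on \swap --- its sub-block affords $\e^2 k\cdot\nicefrac{1}{\e^2}=k\ge\kappa$ element swaps, more than the at most $\kappa$ needed --- confirms that the interpolation really does complete inside that sub-block, so outside it $\Sold$ is one of the two full checkpoint solutions.

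Next I would condition on the sub-block index $j$ drawn in block $i$ (line~\ref{line:sampling}). Since $j$ is uniform over $\nicefrac{1}{\e}$ values, the probability that $t$ lands in the transition sub-block is at most $\e$, and on that event I use only $f(\ALG_t)\ge 0$. On the complement, $\Sold=\cA(X_s)$ for a checkpoint time $s\in\{\blockid',\blockid\}$, so $\ALG_t\supseteq\cA(X_s)\cup R$; this run of $\cA$ had $V_{\textsf{now}}=X_s$, $V_{\textsf{future}}=X\setminus X_s$, and cardinality constraint $\kappa=(1-2\e)k$. I would then invoke the $\alpha$-addition robustness of $\cA$ with the \emph{fixed} set $R^\star:=\OPT_t\cap(X_t\setminus X_s)$, which is a subset of $V_{\textsf{future}}$ and is fixed because the adversary is oblivious, so $\OPT_t$ does not depend on the algorithm's randomness. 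Since $X_s\supseteq X_{\blockid'}$ we have $R^\star\subseteq X_t\setminus X_{\blockid'}=R$, hence $f(\ALG_t)\ge f(\cA(X_s)\cup R^\star)$ by monotonicity; and because the conditioning event depends only on $j$, which is independent of $\cA$'s internal randomness, the addition-robustness inequality survives the conditioning and yields $\E{f(\ALG_t)}\ge\alpha\cdot\max\{f(S^\star):S^\star\subseteq X_s\cup R^\star,\ |S^\star|\le\kappa\}$ on this event.

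To finish I must lower-bound that maximum by $(1-2\e)f(\OPT_t)$. I would observe that $\OPT_t=(\OPT_t\cap X_s)\cup R^\star\subseteq X_s\cup R^\star$, so every subset of $\OPT_t$ of size at most $\kappa$ is a feasible $S^\star$, and then apply the standard fact that for a monotone submodular $g$ and any set $O$ with $|O|\le k$, a uniformly random $\kappa$-element subset $T$ of $O$ satisfies $\E{g(T)}\ge(\kappa/k)g(O)$; choosing $S^\star$ to be the best such $T$ gives $f(S^\star)\ge(\kappa/k)f(\OPT_t)=(1-2\e)f(\OPT_t)$. Combining the two events, $\E{f(\ALG_t)}\ge(1-\e)(1-2\e)\,\alpha\,f(\OPT_t)\ge(1-2\e)^2\,\alpha\,f(\OPT_t)$; since $t$ was arbitrary, this is exactly the claimed approximation guarantee.

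The main obstacle, I expect, is not any single inequality but getting the bookkeeping exactly right: one has to track what $R$ and $\Sold$ are at step $t$ as a function of the block index and the position of $t$ inside it --- being careful that at a checkpoint step the solution is re-written by line~\ref{line:R_update_2} of the next iteration --- verify through \swap's termination rule that the transition is confined to the random sub-block, and, the genuinely delicate point, make sure (a) the probability-$\e$ ``in the transition sub-block'' event is independent of the internal randomness of $\cA$, so that conditioning on its complement preserves the addition-robustness bound, and (b) the comparison set $R^\star$ fed to $\cA$ is a genuinely fixed set, which is exactly where obliviousness of the adversary enters. The only nontrivial imported ingredient is the random-subset inequality, which accounts for one factor of $(1-2\e)$; the other factor is just the slack $1-\e\ge 1-2\e$ from discarding the transition sub-block, so the stated $(1-2\e)^2$ is in fact slightly conservative.
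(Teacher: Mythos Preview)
Your proposal is correct and takes essentially the same approach as the paper: split into the event that $t$ lies in the randomly chosen transition sub-block (probability $\e$, discarded using $f\ge 0$) versus its complement, apply addition-robustness to whichever full checkpoint solution is currently held, and lose one factor $(1-2\e)$ from the cardinality reduction $k\to\kappa$ and the other from $(1-\e)\ge(1-2\e)$. The only cosmetic difference is that the paper feeds the full recent set $R$ into the addition-robustness bound (so the benchmark is directly $f(\OPT'_t)$ over all of $X_t$), whereas you use the smaller $R^\star=\OPT_t\cap(X_t\setminus X_s)$ and then observe $\OPT_t\subseteq X_s\cup R^\star$; both routes land on the same $(1-2\e)$ via the same random-subset averaging argument.
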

        \begin{proof}
            Consider any element $x_t$, we want to prove that the solution $\ALG_t$ maintained by the algorithm after the insertion of $x_t$ is a good approximation of the optimal solution of (cardinality $k$) on $X_t$: $\OPT_t$. Insertion $x_t$ belongs to some block starting in checkpoint $\blockid$ and in some sub-block starting in $\blockid + j \e \block.$ Fix the randomness of the algorithm up to checkpoint $\blockid$ (so that $\Snew$ and $\Sold$ are deterministically induced by the past history), and consider the random choice of the sub-block where \swap is used. We have three cases: either \swap is used in the sub-block of $x_t$ (event $\cE_=$), before $x_t$ (event $\cE_<$), or after $x_t$ (event $\cE_>$). We have then:
            \begin{align}
            \nonumber
                    \E{f(\ALG_t)} &= \P{\cE_<} \E{f(\ALG_t)|\cE_<} + \P{\cE_>} \E{f(\ALG_t)|\cE_>} + \P{\cE_=} \E{f(\ALG_t)|\cE_=}\\
                    \label{eq:old_new}
                    &\ge f(\Snew \cup R) \P{\cE_<} + f(\Sold \cup R) \P{\cE_>}.
            \end{align}
            Note, in the above inequality, we denote with $\Sold$ the actual solution computed in the previous checkpoint (as in line~\ref{line:Sold}), not its ``intermediate versions'' as in the iterations of line~\ref{line:swap}.
            The definition of addition-robust algorithm relates the value of the solutions computed in the checkpoints with that of the best solution of cardinality $\k = (1-2\e) k$; we denote with $\OPT'_t$ such solution. By submodularity, linearity of expectation, and a simple averaging argument, it holds that 
            \begin{equation}
                \label{eq:OPT_t'}
                f(\OPT_t') \ge (1-2\e) f(\OPT_t)
            \end{equation}
            Consider now Inequality~\ref{eq:old_new}, we can take the expectation with respect to the rest of the story of the algorithm, and exploit that $\cE_<$ and $\cE_>$ are independent from the past:
            \begin{align}
                    \E{f(\ALG_t)} &\ge \E{f(\Snew \cup R)} \P{\cE_<} + \E{f(\Sold \cup R)} \P{\cE_>} \tag{By Inequality~\ref{eq:old_new}}\\
                    &\ge {\alpha} f(\OPT'_t) (\P{\cE_<} + \P{\cE_>}) \tag{By addition-robustness} \\
                    &\ge{\alpha} {(1-2\e)} f(\OPT_t) (1-\P{\cE_=} ) \tag{By Inequality~\ref{eq:OPT_t'}}\\
                    \nonumber
                    &\ge \alpha(1-2\e)^2 f(\OPT_t),
            \end{align}
            where in the last inequality, we used that the probability of the swap happening in the sub-block containing $x_t$ (i.e., event $\cE_=$) has probability exactly $\e$.
        \end{proof}

\section{Tight Information Theoretic Bounds}
\label{sec:expo}
In this section, for every $\e>0$, we give a $O(\nicefrac{1}{\e^2})$-consistent, $(\nicefrac{2}{3}-O(\e))$-approximation algorithm and also prove that it is tight up to a factor $(1+O(\e))$.
\subsection{Tight Consistent Algorithm}
\label{sec:minmaxalg}

We design a $(\nicefrac{2}{3} - \e)$-addition robust algorithm, called \ExpoAlg, and use \Cref{thm:reduction} to achieve a consistent $(\nicefrac{2}{3}-O(\e))$ approximation algorithm.
In other words, we describe an algorithm for the addition-robust problem that samples a set $A$ 
from a distribution $\DistributionA$
over subsets of $V_{\textsf{now}}$ of size at most $\kappa$ so that 
\begin{align*}
    \EO_{A \sim \DistributionA }[{f(A \cup R)}]  \geq (\nicefrac{2}{3} -\e)\cdot \max_{\substack{V' \subseteq V_{\textsf{now} }\cup R \\ {|V'| \leq \kappa}}} f(V') \qquad \mbox{ for every $R \subseteq V_{\textsf{future}}$.}
\end{align*}
We remark that the set $V_{\textsf{future}}$ is unknown to the algorithm and in particular we do not know the value of the function $f$ for sets {containing any of those future elements}. To fix this issue, we construct a set $\mathcal{R}$ that intuitively contains an element for 
every possible future set $R$.
To ensure that $\mathcal{R}$ is a finite set, we allow for a small error and discretize the possible values that $f$ takes on these sets.
More formally, we prove the following guarantee:
\begin{align}
    \EO_{A \sim \DistributionA }[{f(A \cup R)}]  \geq \alpha \cdot \max_{\substack{V' \subseteq V_{\textsf{now}} \\ {|V'| \leq \kappa}}} f(V' \cup R) \qquad \mbox{ for every $R \subseteq V_{\textsf{future}}$.}
    \label{eq:strong_robustness_exp_alg}
\end{align}
Note, this is a stronger guarantee than the one requested in the definition of $\alpha$-addition robustness as $f$ is monotone and the set $R$ is now given for ``free'' on the right-hand-side, i.e., it is not counted towards the cardinality constraint $\kappa$.

A technical difficulty is that the algorithm needs to calculate the distribution 
$\DistributionA$  
 and sample $A$ from 
 $\DistributionA$  
 without the knowledge of $V_{\textsf{future}}$ (even without  knowing $|V_{\textsf{future}}|$). 
We solve this difficulty as follows. 
First note that for Inequality~\ref{eq:strong_robustness_exp_alg}, we are only interested in values of $f$ of the form $f(S \cup R)$ for $S\subseteq V_{\textsf{now}}$, i.e., we are never considering a non-trivial subset of $R$. We can therefore think of $R$ as a single element that we denote by $r$. Now, as we do not know $V_{\textsf{future}}$, we extend $f$ to a new function $\hat{f}$ over all possible scenarios of $r$ (or equivalently, $R$) that are consistent with $V_{\textsf{now}}$ (i.e., $\hat{f}(S) = f(S)$ for $S \subseteq V_{\textsf{now}}$ and $\hat{f}$ is a monotone submodular function on $V_{\textsf{now}} \cup \{r\})$.  The number of  potential scenarios is infinite but, by a standard discretization argument, we can make the number of possible scenarios to be finite at the cost of an additional $O(\e)$ term in the approximation ratio.  This discretized set of scenarios is the set $\mathcal{R}$ in the following lemma, whose formal proof is deferred to \Cref{sec:app-expo}.

\begin{restatable}{lemma}{lemdisc}
For every $\e >0$, we can compute a finite set $\mathcal{R}$   and an extension $\hat{f}$ of $f$ to the domain $2^{V_{\textsf{now}} \cup \mathcal{R}}$ with the  following guarantees:
\begin{enumerate}
    \item For every $r\in \mathcal{R}$, $\hat{f}$ is non-negative, monotone, and submodular when restricted to $V_{\textsf{now}} \cup \{r\}$.
    \item For $0\leq \alpha \leq 1$, if
    \begin{align*}
    \EO_{A \sim \DistributionA }[{\hat{f}(A \cup \{r\})}]  \geq \alpha \cdot \max_{\substack{V' \subseteq V_{\textsf{now}} \\ {|V'| \leq \kappa}}} \hat{f}(V' \cup \{r\}) \qquad \mbox{ for every $r \in \mathcal{R}$.}
    \end{align*}
    then
    \begin{align*}
    \EO_{A \sim \DistributionA }[{f(A \cup R)}]  \geq (\alpha- \e) \cdot \max_{\substack{V' \subseteq V_{\textsf{now}}\\ 
    p{|V'| \leq \kappa}}} f(V' \cup R) \qquad \mbox{ for every $R \subseteq V_{\textsf{future}}$.}
    \end{align*}
\end{enumerate}
Moreover, in the special case when $f$ is a coverage function,  we can choose the extension $\hat{f}$ and the finite set $\mathcal{R}$ so that $\hat f$ is a  coverage function on the domain $2^{V_{\textsf{now}} \cup \mathcal{R}}$.

\label{lemma:discretization}
\end{restatable}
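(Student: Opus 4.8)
The plan is to reduce, one future set at a time, to a \emph{single} extra element, and then to replace the resulting (infinitely many) functions by a finite net taken \emph{inside the polytope of monotone submodular extensions}. For the first step, fix $R\subseteq\Vfuture$ and define $g_R$ on $\Vnow\cup\{r\}$ by $g_R(S)=f(S)$ and $g_R(S\cup\{r\})=f(S\cup R)$ for every $S\subseteq\Vnow$. A routine case analysis -- using submodularity of $f$ both as diminishing marginals for single elements and in the form $f(R\mid S)\ge f(R\mid T)$ for $S\subseteq T\subseteq\Vnow$ -- shows that $g_R$ is non-negative, monotone, and submodular on $\Vnow\cup\{r\}$; it agrees with $f$ on $2^{\Vnow}$ and satisfies $g_R(S\cup\{r\})=f(S\cup R)$, so the guarantee we want for $R$ is \emph{exactly} the corresponding guarantee for the one-element function $g_R$. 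It therefore suffices to produce a finite family of such one-element extensions that ``covers'' every $g_R$ up to a small additive error.

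Before discretizing, one must deal with the fact that the relevant scale varies with $R$. Assume without loss of generality that $f(\emptyset)=0$ and $M:=\max_{S\subseteq\Vnow,\,|S|\le\kappa}f(S)=1$ (if $M=0$ the statement is trivial). For a fixed $R$ write $M_R:=\max_{S\subseteq\Vnow,\,|S|\le\kappa}f(S\cup R)$; submodularity gives $1=M\le M_R\le M+f(R)$, so $f(R)\ge M_R-1$. This splits the future sets into two regimes. If $M_R\ge 1/\e$, then for \emph{any} distribution $\DistributionA$ supported on subsets of $\Vnow$ we have $\EO_{A\sim\DistributionA}[f(A\cup R)]\ge f(R)\ge M_R-1\ge(1-\e)M_R\ge(\alpha-\e)M_R$, so the desired conclusion holds for such $R$ \emph{regardless of the hypothesis} -- no element of $\mathcal{R}$ is needed to certify them. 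If instead $M_R<1/\e$, then $g_R$ lies in the set $P$ of non-negative monotone submodular functions $g$ on $\Vnow\cup\{r\}$ that agree with $f$ on $2^{\Vnow}$ and satisfy $g(S)\le 1/\e$ for every $S\ni r$ with $|S|\le\kappa$ (indeed $g_R(S'\cup\{r\})=f(S'\cup R)\le M_R<1/\e$ whenever $|S'|\le\kappa-1$). A single step of monotonicity and diminishing returns forces $g\le f(\Vnow)+1/\e$ everywhere, so $P$ is a \emph{bounded} polytope in the $2^{|\Vnow|}$ coordinates $\{g(S):r\in S\}$, cut out by the monotonicity and submodularity inequalities.

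Now fix $\delta=\e/2$ and let $\mathcal{N}$ be a finite $\delta$-net of $P$ in $\ell_\infty$ \emph{consisting of points of $P$} (such a net exists since $P$ is compact; this is the delicate point, as naively rounding the values of a submodular function would destroy submodularity). Let $\mathcal{R}$ be a set of fresh symbols, one per $\hat g\in\mathcal{N}$, let $\hat f$ restrict on $\Vnow\cup\{r\}$ to the corresponding $\hat g$, and let $\hat f$ be defined consistently (e.g.\ by a max) on sets meeting $\mathcal{R}$ in two or more elements. This gives Property~1, and $|\mathcal{R}|=|\mathcal{N}|$ is finite (of order $(|\Vnow|/\e^2)^{2^{|\Vnow|}}$ after the normalization, since $f(\Vnow)\le|\Vnow|\cdot M=|\Vnow|$). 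For Property~2, take any $R\subseteq\Vfuture$: if $M_R\ge 1/\e$ we are in the first regime above; if $M_R<1/\e$, pick $\hat g=\hat f(\cdot\cup\{r\})\in\mathcal{N}$ with $\|g_R-\hat g\|_\infty\le\delta$, so that $f(A\cup R)\ge\hat f(A\cup\{r\})-\delta$ for every $A\subseteq\Vnow$ and $\hat f(V'\cup\{r\})\ge f(V'\cup R)-\delta$ for every $V'\subseteq\Vnow$. Combining these with the hypothesis applied to this $r$,
\[
\EO_{A\sim\DistributionA}[f(A\cup R)]\ \ge\ \EO_{A\sim\DistributionA}[\hat f(A\cup\{r\})]-\delta\ \ge\ \alpha\max_{V'}\hat f(V'\cup\{r\})-\delta\ \ge\ \alpha\,(M_R-\delta)-\delta,
\]
which is at least $\alpha M_R-\e\ge(\alpha-\e)M_R$ since $M_R\ge M=1$ and $\alpha\le 1$; this is exactly Property~2.

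Finally, for the coverage case we run the same argument at the level of the underlying universe rather than the function values: $\Vnow$ partitions the universe into at most $2^{|\Vnow|}$ cells of fixed sizes, and a future union-set is described by how much of each cell -- plus how many fresh elements -- it covers. Rounding these amounts to a common grid, after blowing up the universe by a large integer factor so that the rounded amounts are realized by an actual set, produces a finite family of genuine \emph{coverage} extensions with the same guarantees (the extra $2^{|\Vnow|}$ factor incurred when summing cell-wise errors is absorbed into a proportionally smaller $\delta$). \textbf{The main obstacle} throughout is precisely this: guaranteeing that the discretized objects are bona-fide monotone submodular (resp.\ coverage) functions rather than mere numerical approximations -- handled by netting \emph{inside} the polytope of extensions (resp.\ rounding in universe-space with a blow-up) -- together with the bookkeeping that makes the fixed additive error $\e/2$ affordable \emph{uniformly in $R$}, which rests on the dichotomy ``$M_R$ large $\Rightarrow$ trivial, $M_R$ bounded $\Rightarrow M_R\ge M$''.
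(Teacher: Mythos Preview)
Your argument for general monotone submodular functions is correct and is essentially the paper's proof in different clothing. Both reduce each future $R$ to a single-element extension $g_R$, split off the regime where the future contribution is so large that the bound holds trivially, and then discretize the bounded polytope of monotone submodular extensions that agree with $f$ on $2^{\Vnow}$. You phrase the last step as ``take an $\ell_\infty$ $\delta$-net of $P$ consisting of points of $P$''; the paper lays down the explicit grid $\{0,\delta,2\delta,\ldots,m\delta\}^{2^{|\Vnow|}}$ and, for each grid point $v$, solves an LP minimizing $\max_S |\hat f(S\cup\{r(v)\})-v_S|$ subject to the monotonicity/submodularity constraints---i.e., it $\ell_\infty$-projects each grid point onto $P$. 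These two constructions produce the same net up to constants, and your error accounting (with $M_R\ge M$ absorbing the additive $2\delta$) matches the paper's.

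The one genuine difference is the coverage case, where the paper's argument is considerably simpler than yours. Since the underlying universe $Y$ is finite, the function $S\mapsto f(S\cup R)$ depends only on which subset of $Y$ is covered by $\bigcup R$; hence it suffices to let $\mathcal R$ contain one element per subset of $Y$, with $\hat f$ the obvious coverage function on the \emph{same} universe $Y$. No discretization, no error term. Your cell-based rounding with a universe blow-up is not needed, and as written it has a small gap: after multiplying the universe by $N$, the resulting coverage function extends $N\cdot f$ rather than $f$, so it is not literally an ``extension of $f$'' as the lemma requires (this is harmless for the downstream LP argument, which only uses ratios, but it does not match the statement).
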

We remark that $|\mathcal{R}| \in O\left((\nicefrac{|V_{\textsf{now}}|}{\varepsilon^2})^{2^{|\Vnow|}}\right)$ for general functions and 
$|\mathcal{R}| \in O(2^{|Y|})$
in the case of a coverage function with underlying universe $Y$; however the exact bound is irrelevant for our information theoretic arguments and we only use that $|\mathcal{R}|$ is finite.
Equipped with the above lemma, our goal is now to find a distribution $\DistributionA$ that maximizes $\alpha$ {uniformly} in {the following inequalities}
    \begin{align*}
    \EO_{A \sim \DistributionA }[{\hat{f}(A \cup \{r\})}]  \geq \alpha \cdot \max_{\substack{V' \subseteq V_{\textsf{now}} \\ {|V'|\leq \kappa}}} \hat{f}(V' \cup \{r\}), \qquad \mbox{ $\forall$ $r \in \mathcal{R}$.}
    \end{align*}
    We do so by solving a linear program and show that its solution is bounded by $\nicefrac 23$, i.e., there exists a feasible solution such that $\alpha \geq \nicefrac 23$ (while $\alpha \geq \nicefrac 34$ if $\hat f$ is a coverage function).  To that end, for any $r\in \mathcal{R}$, let
\[
\OPT(r) = \max_{\substack{V' \subseteq V_{\textsf{now}} \\ {|V'| \leq \kappa}}} \hat f(V' \cup \{r\})\,,
\]
%
%
and write the following linear program:
\begin{align*}
    \text{Maximize} & \quad \alpha \\[0.2cm]
    \sum_{A \subseteq V_{\textsf{now}}: |A| \leq \kappa} \lambda_A \cdot \hat f(A \cup \{r\}) & \geq \alpha \cdot  \OPT( r) \qquad \mbox{for $r\in \mathcal{R}$}\\
    \sum_{A \subseteq V_{\textsf{now}}: |A| \leq \kappa} \lambda_A & \leq 1 \\
    \lambda_A & \geq 0 \qquad \mbox{for all $A \subseteq V_\textsf{now}: |A| \leq \kappa$.}
\end{align*}
Algorithm \ExpoAlg, first computes $\mathcal{R}$ via \Cref{lemma:discretization}, it then solves the above linear program and defines $\DistributionA$ as the distribution that samples set $A \subseteq V_{\textsf{now}}$ with probability $\lambda_A$ (and the emptyset with probability $1- \sum_A \lambda_A$).  

We continue to bound the guarantee $\alpha$ of \ExpoAlg by considering the dual linear program. If we associate a variable $y_r$ with the constraint associated to $r$ and a variable $z$ for the constraint $\sum_{A} \lambda_A  \leq 1$,
we get the dual
\begin{align*}
    \text{Minimize} & \quad z \\[0.2cm]
    \sum_{r\in \mathcal{R}} y_r\cdot  \OPT(r) & = 1\\
     \sum_{r\in \mathcal{R}} y_r\cdot  \hat f(A \cup \{r\}) & \leq  z\qquad \mbox{ for all $A \subseteq V_{\textsf{now}}, |A| \leq \kappa$} \\
    y,z & \geq 0
\end{align*}

Observe that we do not change the value of the dual if we multiply $y$ by a factor $\zeta$ and $f$ by a factor $\nicefrac{1}\zeta$. We may thus assume that an optimal solution to the dual is such that
\begin{gather*}
     \sum_{r\in \mathcal{R}} y_r\cdot  \OPT(R) = 1 \qquad \mbox{ and } \qquad \sum_{r\in \mathcal{R}} y_r = 1\,.
\end{gather*}
In other words, we are given a distribution over $r\in \mathcal{R}$ and we wish to show that there exists a candidate set $A$ such that $\sum_{r\in \mathcal{R}} y_r \cdot  \hat f(A \cup \{r\})  \geq \alpha$ with $\alpha = \nicefrac 23$. This then completes the proof as this shows that the dual has value $z\geq \alpha$ and thus the guarantee of \ExpoAlg, which uses  the distribution  $\DistributionA$ defined by an optimal solution to the primal linear program, is at least $\alpha$.

For notational convenience, let us in the subsequent assume  that the support of this distribution defined by $y$ is $r_1, \ldots, r_m$ and $y_{r_i} = \nicefrac 1m$.\footnote{This is without loss of generality. In fact, note that $r_1, ..., r_M$ are all the sets with positive $y$ values and let $\e'$ denote their largest common divisor. We can copy each $r_i$, $\nicefrac{y_{r_i}}{\e'}$ times and set their $y$ value to $\e'$ to get the assumption.}
Furthermore, let $A_i \subseteq  V_{\textsf{now}}$ be a set of size at most $\kappa$ so that $\hat f(A_i \cup \{r_i\})  = \OPT(r_i)$. With this notation
\begin{gather*}
     \sum_{r\in \mathcal{R}} y_r\cdot  \OPT(r) = \EO_{i \sim [m]} [\hat f(A_i \cup \{r_i\})]\,.
\end{gather*}
We propose selecting candidate set $A$ for this distribution of $r\in \mathcal{R}$ sets by setting $A = A_j$ where $A_j$ is sampled uniformly at random from $A_1, \ldots, A_m$. So we need to prove that
\begin{gather}
      \EO_{i,j \sim [m]} [\hat f(A_i \cup \{r_j\})] \geq \alpha \cdot \EO_{i \sim [m]} [\hat f(A_i \cup \{r_i\})]
      \label{eq:exp_approx_guarantee}
\end{gather}

\begin{lemma} 
\label{lem:expo-app-submodular}
    Inequality~\ref{eq:exp_approx_guarantee} is satisfied with $\alpha = \nicefrac 23$.
\end{lemma}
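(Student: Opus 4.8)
The plan is to establish the correlation-gap inequality~\eqref{eq:exp_approx_guarantee} with $\alpha=\nicefrac{2}{3}$, where recall that $A_i\subseteq \Vnow$ is a maximizer of $\hat f(\,\cdot\,\cup\{r_i\})$ over sets of size at most $\kappa$. It is convenient to abbreviate $v_{ij}:=\hat f(A_i\cup\{r_j\})$, $a_i:=\hat f(A_i)$, $q_i:=\hat f(\{r_i\})$, so that $v_{ii}=\OPT(r_i)$ and the claim becomes $\sum_{i,j\in[m]}v_{ij}\ge \nicefrac{2}{3}\cdot m\sum_{i\in[m]}v_{ii}$. I would first record the elementary inequalities that follow from the two guarantees of \Cref{lemma:discretization} (that $\hat f$ restricted to $\Vnow\cup\{r\}$ is nonnegative, monotone, submodular) together with the optimality of the $A_i$'s: (i) monotonicity gives $v_{ij}\ge\max\{a_i,q_j\}$; (ii) optimality of $A_j$ gives $v_{ij}\le v_{jj}$ for every $i$; (iii) subadditivity of a nonnegative monotone submodular function gives $v_{ii}\le a_i+q_i$, and hence, combined with (i), the pairwise bound $v_{ij}+v_{ji}\ge v_{ii}$; and (iv) since the marginal $\hat f(r_j\mid\cdot)$ is non-increasing on $\Vnow\cup\{r_j\}$, a short computation through $A_i\cup A_j$ yields $v_{ij}\ge a_i+\max\{0,\;v_{jj}-a_i-a_j\}$.

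The heart of the matter is combining these into the $\nicefrac{2}{3}$ bound, and the reason this is not a one-line argument is exactly where I expect the difficulty to lie: \emph{any purely pairwise combination of (i)--(iv) only yields $\tfrac{m+1}{2m}\to\tfrac12$}. Indeed $v_{ij}+v_{ji}\ge\max\{v_{ii},v_{jj}\}\ge\tfrac12(v_{ii}+v_{jj})$ summed over unordered pairs gives precisely $\sum v_{ij}\ge\tfrac{m+1}{2}\sum_i v_{ii}$, and this is tight whenever all scenario optima $v_{ii}$ are equal — which is exactly the regime in which a hypothetical $\tfrac12$-instance would have to live. The point is that such an instance is \emph{infeasible}: all the $v_{ij}$'s are values of one common monotone submodular $\hat f$, and the ``worst'' pairwise configuration becomes self-contradictory once three or more scenarios are considered simultaneously (it would force some element's marginal to strictly increase along a chain of conditioning sets). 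So I would pass from pairs to triples: for distinct $i,j,k$, bound $\hat f(A_i\cup A_k\cup\{r_j\})$ from below by monotonicity and from above by submodularity on $\Vnow\cup\{r_j\}$ to relate $v_{ij}+v_{kj}$ to $v_{jj}$ and to the ``private'' parts $a_i+a_j-\hat f(A_i\cup A_j)$ of the responses, do the same cyclically in $i,j,k$, and sum the resulting inequalities over all ordered triples (equivalently, argue along a uniformly random permutation of $[m]$ with a telescoping/exchange step over the $A_i$'s). This multi-scenario accounting is what should upgrade the constant from $\tfrac12$ to $\tfrac23$. The precise weights in this combination — and the verification that the resulting bound is tight — I expect to be dictated by the structure of the extremal small instances obtained by solving the correlation-gap LP explicitly, as described in \Cref{sec:hardness-submodular}; feeding those structural inequalities back into the positive proof is precisely the strategy the paper announces for \ExpoAlg.

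In summary, the skeleton is: reduce to $\sum_{i,j}v_{ij}\ge\nicefrac{2}{3}\,m\sum_i v_{ii}$; derive (i)--(iv); observe that the pairwise use of these only reaches $\nicefrac12$; and then close the remaining gap by a genuinely non-pairwise argument that exploits the \emph{global} consistency of the single function $\hat f$ across scenarios. The main obstacle is this last step — since $\hat f$ is submodular only within each $\Vnow\cup\{r\}$ separately, the proof cannot be a black-box combination of marginal inequalities, and making the constant come out to exactly $\nicefrac{2}{3}$ (rather than something weaker) requires picking the right multi-scenario inequality and then a short optimization over the free quantities $a_i$, $q_i$, and the overlaps $a_i+a_j-\hat f(A_i\cup A_j)$.
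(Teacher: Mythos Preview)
Your high-level diagnosis is right: pairwise inequalities cap out at $\tfrac12$, and the upgrade must come from unions of the $A_i$'s inside $\Vnow\cup\{r_j\}$. But the proposal stays at the plan stage. You assert that summing triple inequalities cyclically ``should'' give $\tfrac23$ and defer the weights to an optimization over $a_i,q_i$ and the overlaps that you never carry out. Since that unexecuted step is the entire content of the lemma, what you have written is a strategy, not a proof. (Also, as literally phrased, bounding $\hat f(A_i\cup A_k\cup\{r_j\})$ from below by monotonicity does not by itself relate it to $v_{jj}$; the paper routes through $A_i\cup A_j\cup\{r_j\}$, with matching index, to get that link.)

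The paper's execution is short and avoids any parametric optimization. The master inequality is
\[
\EO_{i,j}\bigl[\hat f(A_i)+\hat f(r_j\mid A_i)+\hat f(A_j\mid A_i)\bigr]\ \ge\ \EO_j\bigl[\hat f(A_j\cup\{r_j\})\bigr]\ =:\ \OPT,
\]
obtained from $\hat f(A_i\cup A_j\cup\{r_j\})\ge \hat f(A_j\cup\{r_j\})$ together with submodularity on $\Vnow\cup\{r_j\}$. One then splits on $c:=\EO_{i,j}[\hat f(r_j\mid A_i)]$. If $c\ge\tfrac23\OPT-\EO_i[\hat f(A_i)]$ the claim is immediate since $\EO_{i,j}[v_{ij}]=\EO_i[\hat f(A_i)]+c$. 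Otherwise the master inequality forces $\EO_{i,j}[\hat f(A_j\mid A_i)]\ge\tfrac13\OPT$, and \emph{now} the triple symmetrization kicks in:
\[
\EO_{i,j}[v_{ij}]\ \ge\ \tfrac12\,\EO_{i,j,k}\bigl[\hat f(r_j)+\hat f(A_i\cup A_k)\bigr]\ =\ \tfrac12\,\EO_i\bigl[\hat f(r_i)+\hat f(A_i)\bigr]+\tfrac12\,\EO_{i,k}\bigl[\hat f(A_k\mid A_i)\bigr]\ \ge\ \tfrac12\OPT+\tfrac16\OPT,
\]
via $\hat f(A_i\mid r_j)+\hat f(A_k\mid r_j)\ge \hat f(A_i\cup A_k\mid r_j)$ and your (iii). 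The two-case balancing at the threshold $\tfrac23\OPT-\EO_i[\hat f(A_i)]$ is the missing idea; it replaces the ``short optimization'' you anticipated with a clean dichotomy and pins the constant to exactly $\tfrac23$.
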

\begin{proof}

    %
    
    
   As $\hat f$ is submodular when restricted to $V_{\textsf{now}} \cup \{r_j\}$, 
    \begin{align}
    \nonumber
        \EO_{i,j \sim [m]}[ \hat{f}(A_i) + \hat{f}(r_j | A_i) + \hat{f}(A_j|A_i)] &\geq \EO_{i,j\sim [m]}[ \hat{f}(A_i) + \hat{f}(r_j | A_i) + \hat{f}(A_j|A_i \cup \{r_j\})] \\
    \nonumber
        &= \EO_{i,j\sim [m]}[ \hat{f}(A_i \cup \{r_j\} \cup A_j)]\\
    \label{ineq:corr_gap}
        &\ge \EO_{j\sim [m]}[ \hat{f}(A_j  \cup \{r_j\})]  = \OPT,
    \end{align}
    where the last inequality follows by monotonicity. We now split the analysis into two cases, depending on the values of $\EO_{i,j\sim [m]}[\hat{f}(r_j |A_i)]$, $\OPT$, and $\EO_{i\in [m]}[ \hat{f}(A_i)]$.
    
    \paragraph{Case I.} In the first case, we have that $\EO_{i,j\sim [m]}[\hat{f}(r_j |A_i)] \geq \nicefrac{2}{3} \cdot \OPT - \EO_{i\in [m]}[ \hat{f}(A_i)]$. The statement of the Lemma is then immediate:    \begin{gather*}
        \EO_{i,j\sim [m]}[\hat{f}(A_i) + \hat{f}(r_j |A_i)] \geq \frac{2}{3} \OPT.
    \end{gather*}
    
    \paragraph{Case II.} In the second case, we have the converse inequality, which implies the following (by plugging it into Inequality~\ref{ineq:corr_gap}):
    \begin{align*}
         \EO_{i,j\sim [m]}[\hat{f}(A_j|A_i)] \geq \OPT - \EO_{i,j\sim [m]}[\hat{f}(A_i) + \hat{f}(r_j|A_i)] \geq \frac{\OPT}3\,.
    \end{align*}
    We now conclude this case by a symmetrization argument:
    \begin{align*}
        \EO_{i,j\sim [m]}[ \hat{f}(A_i \cup \{r_j\})]  & = \EO_{i,j\sim [m]}[ \hat{f}(r_j) +  \hat{f}(A_i\mid r_j)] \\
        & = \EO_{i,j,k\sim [m]} \left[ \hat{f}(r_j) + \tfrac{\hat{f}(A_i\mid r_j) + \hat{f}(A_k\mid r_j)}{2}\right] \\
        &\geq\EO_{i,j,k\sim [m]} \left[ \hat{f}(r_j) + \tfrac{\hat{f}(A_i \cup A_k\mid r_j)}{2}\right]   \qquad \tag{By submodularity on $V_{\textsf{now}} \cup \{r_j\}$} \\
        & = \EO_{i,j,k\sim [m]} \left[ \tfrac{\hat{f}(r_j) + \hat{f}(A_i \cup A_k)}{2}\right] \\
        & = \EO_{i,j,k\sim [m]} \left[ \tfrac{\hat{f}(r_j) + \hat{f}(A_i)+ \hat{f}(A_k| A_i)}{2}\right] \ge \frac{2}{3}\OPT\,.
    \end{align*}
    For the last inequality we used  that $\EO_{k,i\sim [m]}[\hat{f}(A_k| A_i)] \geq \nicefrac{\OPT}3$ and the following inequality:
    \begin{align*}
        \EO_{i,j\sim [m]}[ \hat{f}(r_j) + \hat{f}(A_i)] &= 
\EO_{j\sim [m]}[ \hat{f}(r_j)]  + \EO_{i\sim [m]}[\hat{f}(A_i)] \\
&= 
\EO_{i\sim [m]}[ \hat{f}(r_i) + \hat{f}(A_i)] \geq
\EO_{i\sim [m]}[ \hat{f}(A_i \cup r_i)]
= \OPT\,.
    \end{align*}    
\end{proof}

By combining \Cref{lem:expo-app-submodular} and \Cref{lemma:discretization} we conclude that \ExpoAlg is a $(\nicefrac{2}{3} - \e)$-addition robust algorithm and so we have the following result via \Cref{thm:reduction}.
\begin{theorem}
\label{thm:exp-time}
    For any $\e \in (0,1)$, there exists a randomized  algorithm that is $O(\nicefrac{1}{\e^2})$ consistent and provides a $(\nicefrac{2}{3}-  O(\e))$-approximation of the dynamic optimum.
\end{theorem}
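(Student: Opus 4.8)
The plan is to assemble \Cref{thm:exp-time} from three ingredients already in place: the reduction of \Cref{thm:reduction}, the discretization of \Cref{lemma:discretization}, and the correlation-gap bound of \Cref{lem:expo-app-submodular}. The only genuinely new step is to argue that the linear program defining \ExpoAlg has optimal value at least $\nicefrac{2}{3}$; everything else is composition. First I would pin down \ExpoAlg as an addition-robust algorithm: on input $\Vnow$ and cardinality constraint $\kappa$, it invokes \Cref{lemma:discretization} to build the finite scenario set $\mathcal{R}$ together with the monotone submodular extension $\hat f$ of $f$ to $2^{\Vnow \cup \mathcal{R}}$, forms the primal LP displayed above (finitely many variables $\lambda_A$, one per $A\subseteq\Vnow$ with $|A|\le\kappa$, and finitely many constraints, one per $r\in\mathcal{R}$), solves it, and outputs a set $A$ with probability $\lambda_A$ (and $\emptyset$ with the leftover mass). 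Since $\mathcal{R}$ is finite this is well defined, and the output always lies in $\Vnow$ and has size at most $\kappa$.

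Next I would lower-bound the LP value. By strong duality it equals the dual optimum $z^\star$; using the scaling invariance noted above I may assume the optimal $y$ satisfies $\sum_{r}y_r\OPT(r)=\sum_r y_r=1$, so $y$ is a distribution over $\mathcal{R}$, and it suffices to exhibit one candidate $A^\star\subseteq\Vnow$ with $|A^\star|\le\kappa$ and $\sum_{r}y_r\,\hat f(A^\star\cup\{r\})\ge\nicefrac{2}{3}$. Writing the support of $y$ as $r_1,\dots,r_m$ with uniform mass (legitimate by the footnoted copying argument) and letting $A_i$ be an optimal current set for $r_i$, \Cref{lem:expo-app-submodular} gives $\EO_{i,j\sim[m]}[\hat f(A_i\cup\{r_j\})]\ge\nicefrac{2}{3}\cdot\EO_{i\sim[m]}[\hat f(A_i\cup\{r_i\})]=\nicefrac{2}{3}$, so some $A^\star=A_j$ achieves $\sum_r y_r\,\hat f(A^\star\cup\{r\})\ge\nicefrac{2}{3}$. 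Hence $z^\star\ge\nicefrac{2}{3}$, the primal is feasible with $\alpha\ge\nicefrac{2}{3}$, and the distribution $\DistributionA$ returned by \ExpoAlg satisfies $\EO_{A\sim\DistributionA}[\hat f(A\cup\{r\})]\ge\nicefrac{2}{3}\cdot\OPT(r)$ for every $r\in\mathcal{R}$.

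Finally I would translate this back. Plugging $\alpha=\nicefrac{2}{3}$ into the second item of \Cref{lemma:discretization} yields $\EO_{A\sim\DistributionA}[f(A\cup R)]\ge(\nicefrac{2}{3}-\e)\max_{V'\subseteq\Vnow,\,|V'|\le\kappa}f(V'\cup R)$ for every $R\subseteq\Vfuture$; and since any feasible $S^\star\subseteq\Vnow\cup R$ with $|S^\star|\le\kappa$ obeys $f(S^\star)\le f((S^\star\cap\Vnow)\cup R)$ by monotonicity with $|S^\star\cap\Vnow|\le\kappa$, the right-hand side dominates $(\nicefrac{2}{3}-\e)\max_{S^\star\subseteq\Vnow\cup R,\,|S^\star|\le\kappa}f(S^\star)$, so \ExpoAlg is $(\nicefrac{2}{3}-\e)$-addition robust in the sense of \Cref{def:insertion-robust}. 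Instantiating \Cref{thm:reduction} with $\cA=\ExpoAlg$ and precision $\e$ then produces a \checkpoint instance that is $O(\nicefrac{1}{\e^2})$-consistent and $\bigl((\nicefrac{2}{3}-\e)-O(\e)\bigr)=(\nicefrac{2}{3}-O(\e))$-approximate, which is the claim.

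I expect the only delicate points to be the bookkeeping in the duality normalization (and the footnoted reduction to a uniform $y$ with common-divisor copies), and making sure the ``free-$R$'' guarantee \eqref{eq:strong_robustness_exp_alg} really implies the nominal definition of addition robustness via monotonicity; the substantive work — the $\nicefrac{2}{3}$ correlation gap and the discretization of the unknown future — is already discharged by \Cref{lem:expo-app-submodular,lemma:discretization}, so this theorem is essentially their composition.
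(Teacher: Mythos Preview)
Your proposal is correct and follows essentially the same approach as the paper: the paper's proof is literally the one-line statement that combining \Cref{lem:expo-app-submodular} and \Cref{lemma:discretization} shows \ExpoAlg is $(\nicefrac{2}{3}-\e)$-addition robust, after which \Cref{thm:reduction} yields the theorem. You have spelled out the intermediate steps (the duality normalization, the averaging over $j$ to extract a single $A^\star$, and the monotonicity argument reducing \eqref{eq:strong_robustness_exp_alg} to \Cref{def:insertion-robust}) that the paper presents in the surrounding text rather than in a formal proof block, but the logical structure is identical.
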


Let us now focus on the special case when $f$ is a coverage function.  We have the following stronger bound on Inequality~\ref{eq:exp_approx_guarantee} when $f$ is a coverage function.
\begin{lemma}
\label{lem:expo-app-coverage}
    Inequality~\ref{eq:exp_approx_guarantee} is satisfied with $\alpha = \nicefrac 34$ if $f$ is a coverage function.
\end{lemma}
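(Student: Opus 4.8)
The plan is to exploit the extra structure provided by the ``moreover'' clause of \Cref{lemma:discretization}: since $f$ is a coverage function, we may take $\hat f$ to be a coverage function on $2^{V_{\textsf{now}} \cup \mathcal{R}}$, with some underlying universe $Y$; thus each $x \in V_{\textsf{now}}$ and each $r \in \mathcal{R}$ is a subset of $Y$, and $\hat f(S) = \lvert\bigcup_{x \in S} x\rvert$. Writing $y \in A_i$ as shorthand for ``$y \in \bigcup_{x \in A_i} x$'', and recalling that each $r_j$ is itself a subset of $Y$, both sides of Inequality~\ref{eq:exp_approx_guarantee} decompose as sums over $Y$:
\[
\EO_{i,j\sim[m]}[\hat f(A_i \cup \{r_j\})] = \sum_{y \in Y} \Prob{i,j\sim[m]}{y \in A_i \cup r_j}, \qquad \EO_{i\sim[m]}[\hat f(A_i \cup \{r_i\})] = \sum_{y \in Y} \Prob{i\sim[m]}{y \in A_i \cup r_i}.
\]
It therefore suffices to prove, for each fixed $y$, the pointwise bound $\Prob{i,j\sim[m]}{y \in A_i \cup r_j} \ge \tfrac34\,\Prob{i\sim[m]}{y \in A_i \cup r_i}$, and then sum over $y \in Y$.

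To establish the pointwise bound, fix $y$ and let $\beta$ (resp.\ $\theta$) be the fraction of indices $i \in [m]$ with $y \in A_i$ (resp.\ $y \in r_i$). Since $i$ and $j$ are drawn independently and uniformly from $[m]$, we get $\Prob{i,j\sim[m]}{y \in A_i \cup r_j} = \beta + \theta - \beta\theta$, while a union bound gives $\Prob{i\sim[m]}{y \in A_i \cup r_i} \le \min\{1,\beta+\theta\}$. If $\beta = \theta = 0$ both sides of the pointwise inequality vanish; otherwise it reduces to the elementary claim $\beta+\theta-\beta\theta \ge \tfrac34\min\{1,\beta+\theta\}$ for all $\beta,\theta \in [0,1]$. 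I would verify this in two cases. If $\beta+\theta \le 1$, the claim becomes $\beta\theta \le \tfrac14(\beta+\theta)$, which holds since $\beta\theta \le \bigl(\tfrac{\beta+\theta}{2}\bigr)^2 \le \tfrac14(\beta+\theta)$ by AM--GM. If $\beta+\theta > 1$, rewrite $\beta+\theta-\beta\theta = 1-(1-\beta)(1-\theta)$ and apply AM--GM to $1-\beta$ and $1-\theta$, whose sum is below $1$, to get $(1-\beta)(1-\theta) \le \bigl(\tfrac{(1-\beta)+(1-\theta)}{2}\bigr)^2 \le \tfrac14$, hence $\beta+\theta-\beta\theta \ge \tfrac34$.

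Summing the pointwise bound over all $y \in Y$ yields Inequality~\ref{eq:exp_approx_guarantee} with $\alpha = \nicefrac{3}{4}$, which completes the proof. I do not anticipate a real obstacle: the computation of the two probabilities is immediate from the independence of $i$ and $j$, and the remaining inequality is a one-line case analysis. The single point worth flagging explicitly is that the argument genuinely requires $\hat f$ itself — not merely $f$ — to admit a coverage representation, which is exactly what the last sentence of \Cref{lemma:discretization} guarantees; this is also the precise juncture at which the coverage case improves on the general submodular analysis of \Cref{lem:expo-app-submodular}, consistent with the fact that $\nicefrac{3}{4}$ is not attainable for general $\hat f$.
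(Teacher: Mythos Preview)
Your proposal is correct and follows essentially the same approach as the paper: invoke the coverage structure of $\hat f$ from \Cref{lemma:discretization}, decompose both sides of Inequality~\ref{eq:exp_approx_guarantee} pointwise over the underlying universe $Y$, and for each $y$ compare $\beta+\theta-\beta\theta$ to $\min\{1,\beta+\theta\}$. The paper's proof is terser---it simply asserts the final inequality $p_A+p_R-p_Ap_R \ge \tfrac34\min\{p_A+p_R,1\}$ without justification---whereas you spell out the AM--GM case split, but the argument is the same.
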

\begin{proof}
    We have by \Cref{lemma:discretization} that $\hat f$ is then a coverage function. Let $Y$ be the underlying universe so that every element in the domain $2^{V_{\textsf{now}} \cup \mathcal{R}}$ of $\hat f$ corresponds to an element of $Y$.  
    
    Consider an element $y$ in the underlying universe $Y$ of the coverage function. Let $p_A$ and $p_R$ be the probability that $y$ is covered when selecting a random solution $A_i$ and a random solution $R_j$, respectively. Then the probability that $y$ is covered by a random solution $A_i, R_i$ is at most $\min\{1, p_A + p_R\}$. The proof is now completed by observing that  the probability that a random solution $A_i, R_j$ covers $y$ equals $p_A  + p_R - p_A\cdot p_R \geq \nicefrac 34 \min\{p_A + p_R, 1\}$. 
\end{proof}

Similarly, by combining \Cref{lem:expo-app-coverage} and \Cref{lemma:discretization} we conclude that \ExpoAlg is a $(\nicefrac{3}{4} - \e)$-addition robust algorithm and so we have the following result via \Cref{thm:reduction}.
\begin{theorem}
\label{thm:exp-time-coverage}
    For any $\e \in (0,1)$ and any coverage function, there exists a randomized algorithm that is $O(\nicefrac{1}{\e^2})$ consistent and provides a $(\nicefrac{3}{4}-  O(\e))$-approximation of the dynamic optimum.
\end{theorem}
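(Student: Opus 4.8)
The plan is to observe that this theorem is the coverage-function analogue of \Cref{thm:exp-time}, and that its proof follows the exact same template, with \Cref{lem:expo-app-submodular} replaced by \Cref{lem:expo-app-coverage}. Concretely, I would first invoke \Cref{lemma:discretization} with precision $\e$ to obtain a finite scenario set $\mathcal{R}$ and an extension $\hat f$ of $f$ to the domain $2^{\Vnow\cup\mathcal{R}}$; crucially, the ``Moreover'' clause of that lemma guarantees that when $f$ is a coverage function, $\hat f$ can be chosen to be a coverage function as well. This is the step that lets the coverage-specific argument go through, and it is the only place where the special structure of $f$ is used beyond \Cref{lem:expo-app-coverage}.

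Next I would set up the primal linear program that \ExpoAlg solves --- maximizing $\alpha$ over probability vectors $(\lambda_A)$ supported on $\kappa$-subsets of $\Vnow$, subject to $\sum_A \lambda_A \hat f(A\cup\{r\}) \ge \alpha\cdot\OPT(r)$ for every $r\in\mathcal{R}$ --- and pass to its dual exactly as in \Cref{sec:minmaxalg}. After the standard normalization (legitimate because $\mathcal{R}$ is finite), this reduces the whole question to the correlation-gap inequality (Inequality~\ref{eq:exp_approx_guarantee}): for any $A_1,\dots,A_m\subseteq\Vnow$ and $r_1,\dots,r_m$ with $\hat f(A_i\cup\{r_i\})=\OPT(r_i)$ one has $\EO_{i,j\sim[m]}[\hat f(A_i\cup\{r_j\})] \ge \alpha\cdot\EO_{i\sim[m]}[\hat f(A_i\cup\{r_i\})]$. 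Here I would simply cite \Cref{lem:expo-app-coverage}: since $\hat f$ is a coverage function, the per-element computation $\tfrac{\beta+\theta-\beta\theta}{\min\{1,\beta+\theta\}}\ge\tfrac34$ (summed over the underlying universe of $\hat f$) yields this inequality with $\alpha=\nicefrac34$. Consequently the optimal primal solution of the LP has value $\alpha\ge\nicefrac34$, so the distribution $\DistributionA$ it defines satisfies $\EO_{A\sim\DistributionA}[\hat f(A\cup\{r\})]\ge\nicefrac34\cdot\OPT(r)$ for all $r\in\mathcal{R}$.

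Then I would translate this back to the original instance using part~2 of \Cref{lemma:discretization}: the $\mathcal{R}$-indexed guarantee above implies $\EO_{A\sim\DistributionA}[f(A\cup R)]\ge(\nicefrac34-\e)\cdot\max_{V'\subseteq\Vnow,\,|V'|\le\kappa} f(V'\cup R)$ for every $R\subseteq\Vfuture$, and since $f$ is monotone this in turn implies the defining inequality of $(\nicefrac34-\e)$-addition robustness from \Cref{def:insertion-robust}. Hence \ExpoAlg is a $(\nicefrac34-\e)$-addition-robust algorithm for coverage functions. Finally, plugging $\cA=\ExpoAlg$ and precision parameter $\e$ into \Cref{thm:reduction} shows that \checkpoint is $O(\nicefrac1{\e^2})$-consistent and achieves a $(\nicefrac34-\e)-O(\e)=\nicefrac34-O(\e)$ approximation of the dynamic optimum, which is exactly the claim (up to renaming the constant in front of $\e$).

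As for the main obstacle: there is no genuinely new difficulty here, since the heavy lifting has already been done in \Cref{lem:expo-app-coverage} (the correlation-gap bound for coverage functions) and \Cref{thm:reduction}. The one point that requires care is ensuring that the discretized extension $\hat f$ produced by \Cref{lemma:discretization} really is a coverage function and not merely monotone submodular --- without that, \Cref{lem:expo-app-coverage} would not apply and we would only recover the $\nicefrac23$ bound of \Cref{thm:exp-time}. I would therefore double-check the coverage-preserving version of the discretization construction (realizing the discretized marginal values by adding fresh ground-set elements), and verify that the $O(\e)$ losses coming from discretization and from \Cref{thm:reduction} add up rather than compounding in a harmful way.
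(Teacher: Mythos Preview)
Your proposal is correct and follows essentially the same approach as the paper: combine \Cref{lem:expo-app-coverage} with \Cref{lemma:discretization} (using the coverage-preserving clause) to conclude that \ExpoAlg is $(\nicefrac{3}{4}-\e)$-addition robust, then invoke \Cref{thm:reduction}. The paper's own proof is literally this one-line combination, so your more detailed walkthrough of the LP duality and the correlation-gap step simply unpacks what the paper leaves implicit.
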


\subsection{Information-Theoretic Hardness for Submodular Functions}\label{sec:hardness-submodular}


In this section, we provide a matching
impossibility bound for the problem. We start by defining a simple  collection of $nm$ submodular functions $g_1, g_2, \ldots, g_m$ that we then build upon to obtain the hardness result. These functions will be defined over the set of elements   $X_g = \{a_1, a_2, \ldots, a_m,r\}$ and they agree on the values of the sets $S \subseteq \{a_1, \ldots, a_m\}$. Intuitively, the $r$ element is yet to arrive in the stream and the consistent algorithm will need to select the elements $\{a_1, \ldots, a_m\}$ without knowing their interactions with element $r$ which is different for each $g_i$. As all $g_i$'s are identical on $\{a_1, \ldots, a_m\}$ the algorithm does not know the index $i$ before the arrival of $r$. We now define $g_i$ formally for $1\leq i \leq m$:  
\begin{itemize}
    \item $g_i(\emptyset) = 0$.
    \item $g_i(a_j) = f(r) = 1$ for $1 \leq j \leq m$.
    \item $g_i(\{a_j, r\}) = \nicefrac  43$ for $1 \leq j \leq m$ and $i \neq j$.
    \item $g_i(\{a_j, a_k\}) = \nicefrac 53$ for $1 \leq j,k \leq m$ and $j \neq k$. 
    \item  $g_i(\{a_j , a_k , r\})  = \nicefrac 53$, for $1 \leq j,k \leq m$ and $i$, $j$ and $k$ distinct. 
    \item $g_i(S) = 2$, for any other $S \subseteq X_g$, i.e., if $S$ satisfies one of the following: $|S\cap \{a_1, \ldots, a_m\}| \geq 3$ or $\{a_i, r\} \subseteq S$.
\end{itemize}
One can verify by basic calculations that $g_i$ is a monotone submodular function and that all $g_i$'s are identical on subsets of $\{a_1, \ldots, a_m\}$ (for completeness, we refer to \Cref{sec:app-expo} for a formal proof).

Before continuing with the formal proof where the algorithm can select $k$ elements and swap some of them, for intuition, consider the  case when the algorithm irrevocably selects \emph{only} one of the elements in $\{a_1, \ldots, a_m\}$ before the element $r$ is revealed. That is, first the $\{a_1, \ldots, a_m\}$ elements arrive in an arbitrary order on the stream and the algorithm picks one of them. Afterwards, $r$ is the next element on the stream, and the adversary samples $g_i$ uniformly at random from $\{g_1, \ldots, g_m\}$.  Note that before the arrival of $r$, all $g_1, \ldots, g_m$ are identical and the algorithm does not know which one it is.
Say the algorithm picks an element $a_j$. We then have 
\begin{gather*}
    \EO_{i\sim[m]} [g_i(\{a_j, r\})] = 2 \cdot \frac{1}{m} +  \frac 43 \cdot \frac{ (m-1)}m \leq \frac 43 + \frac 1m\,.
\end{gather*}
At the same time, an optimal solution in hindsight has value $\EO_{i\sim[m]} [g_i(\{a_i, r)] = 2$. It is this gap of $\nicefrac{(\nicefrac 43)}2 = \nicefrac 23$ that is the source of our hardness.

However, the collection $g_1, \ldots, g_m$ does not directly give a hardness result as the algorithm can pick up to $k$ elements and may also swap some of them. Indeed in the above case, it would be sufficient to swap a single element after the arrival of $r$ to obtain the optimal solution.

 To resolve this issue we describe another submodular function $f_i$ for each $g_i$ where each of the elements $a_i$ is divided into multiple smaller elements, a  ``lifting''   technique that was previously used in e.g.~\cite{Vondrak13}. More precisely, we let the set $A_i = \{a_i^1, a_i^2, \ldots, a_i^k\}$ for $1 \leq i \leq m$ and 
\[
X_f = A_1 \cup A_2 \cup \cdots \cup A_m \cup \{r\}.
\]
For each $i = 1, \ldots, m$, we define a submodular function $f_i$ over the elements $X_f$. To that end, we let the function $G_i$ denote the multi-linear extension of the above defined function $g_i$. Before proceeding to defining $g_i$ let us recall the definition of multi-linear extension. For any $m+1$-dimensional vector $y$, the value of $G_i(y)$ is defined as
\[
G_i(y) = \sum_{S \subseteq X_f} \prod_{a_i \in S} y_i \cdot \prod_{r \in S} y_{n+1} \cdot \prod_{a_i \notin S} (1-y_i) \cdot \prod_{r \notin S} (1-y_{m+1}) \cdot f_i(S).
\]
In other words, $G_i(y)$ equals the expected value of $g_i(S)$ where $S$ is obtained by randomly and independently including each $a_i$ with probability $y_i$ and $r$ with probability $y_{m+1}$.
We now define the monotone submodular function $f_i$ for any set $S \subseteq X_f$:
\[
f_i(S) = G_i\left(\frac{|S \cap A_1|}{k}, \frac{|S \cap A_2|}{k}, \cdots , \frac{|S \cap A_n|}{k}, |S \cap r|\right).
\]
It is not hard to verify that $f_i$ is indeed nonnegative, monotone, and submodular using that $g_i$ satisfies these properties (see e.g. Lemma 4.5 in~\cite{FeldmanNSZ23}).

We now proceed to describe the distribution of hard instances. 
The order of the elements of the stream is similar to before, first elements in $A_1 \cup A_2 \cup \cdots \cup A_m$ appear in arbitrarily order. Then the element $r$ appears and the adversary selects a random function $f_i$ uniformly from $\{f_1, \ldots, f_m\}$. Indeed, note that $f_i$'s are identical on the subsets of $\{a_1, \ldots, a_m\}$ (since $g_i$'s are identical) and so the choice of the specific $f_i$ (i.e. the interaction of $r$ with the other elements) is only necessary at the arrival of $r$. 

\begin{theorem} \label{thm:hardness07}
For any $\e \in (0,1)$, there exists no $\e k$-consistent $(\nicefrac 23+2\e)$-approximation algorithm.  
\end{theorem}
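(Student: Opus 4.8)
The plan is to lift the single-element intuition sketched just before the theorem statement to a genuine hardness bound that withstands an $\e k$-consistent algorithm. Recall the intuition: before $r$ arrives, all $f_i$ look identical on $A_1 \cup \cdots \cup A_m$, and any ``small'' portion of the algorithm's solution concentrated on a single block $A_i$ gets hammered when the adversary picks exactly that index $i$. The key point is that the lifted functions $f_i$ are defined via the multilinear extension $G_i$, so the value of a solution $S$ depends only on the vector of densities $(|S\cap A_1|/k, \ldots, |S\cap A_m|/k, |S\cap \{r\}|)$. I would first record this: fix the time step $t$ at which $r$ has just arrived; the algorithm's solution $\ALG_t$ has size at most $k$, and since each $A_i$ block has size $k$, the densities $q_i := |\ALG_t \cap A_i|/k$ are nonnegative and sum to at most $1$ (not exactly $1$ if some of the budget is spent on $r$, but monotonicity lets us assume the budget is fully used). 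Consistency enters as follows: in the $\e k$ steps after $r$ arrives, the algorithm can change at most $\e k \cdot (\e k)$... no — at most $\e k$ elements per step over however many steps we look at, but the benchmark $\OPT_t$ is already achievable the moment $r$ arrives, so we must compare $\E[f(\ALG_{t'})]$ to $f(\OPT_{t'})$ for $t'$ close to $t$. The cleanest route: the algorithm commits to $\ALG_{t-1}$ (its state just before $r$), which is density vector $q$ independent of $i$; then over the next, say, one single step it can move at most $\e k$ elements, i.e. shift total density by at most $\e$. So $\E_{i}[f_i(\ALG_t)] \le \E_i[G_i(q')] + (\text{small})$ where $q'$ is within $\ell_1$-distance $\e$ (in the block coordinates) of $q$, and this must be $\ge (\nicefrac23+2\e)\E_i[f_i(\OPT_t)]$.

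Next I would compute the two sides. For the benchmark: the optimal solution in hindsight, knowing $i$, is to put all $k$ elements on a block $A_j$ with $j\ne i$ together with — wait, $r$ is a separate element, so OPT of size $k$ is $k$ elements; the best is to take, say, $k/2$ from one good block and $k/2$ from another (mirroring $g_i(\{a_j,a_k\})$ structure) or concentrate to exploit $g_i(\{a_j,a_k,r\})=\nicefrac53$ and the value-$2$ sets. Since $g_i(S)=2$ for $|S\cap\{a_1,\ldots,a_m\}|\ge 3$, spreading the $k$ budget across $\ge 3$ blocks (each with tiny density) makes $G_i \to 2$. So $f_i(\OPT_t)$ is essentially $2$ for every $i$, hence $\E_i[f_i(\OPT_t)] \to 2$. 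For the algorithm: its pre-$r$ density vector $q$ is fixed; $G_i(q)$ is the expected value of $g_i$ on a random set including $a_\ell$ w.p. $q_\ell$ and $r$ w.p. (the $r$-coordinate). The term where the adversary ``catches'' the algorithm is the bad pair $\{a_i,r\}$, worth only $\nicefrac43$ instead of $2$. Averaging $G_i(q)$ over $i$ and using that $\sum_i q_i \le 1$, I expect to get $\E_i[G_i(q)] \le \nicefrac43 + O(\text{mass spread across blocks, which only helps OPT more than ALG})$ — more precisely the averaging kills the adversary's ability to punish more than a $q_i$-fraction on block $i$, but because $\sum q_i\le 1$, the ``uncaught'' contribution is still bounded using the submodular ceiling structure. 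The arithmetic should land at $\E_i[G_i(q)] \le \nicefrac43 + O(\e)$, so the ratio is at most $\nicefrac{(\nicefrac43+O(\e))}{2} = \nicefrac23 + O(\e) < \nicefrac23 + 2\e$ once $\e$ is small, after absorbing the $\e$-shift from consistency. I would then choose $m$ large (and $k$ large) so the $\nicefrac1m$ error terms vanish, and finish.

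The main obstacle I anticipate is making the consistency argument airtight: the algorithm does not have to produce its good approximation \emph{immediately} at the step $r$ arrives — it gets a constant (or $\e k$) budget per step, so after $\Theta(k)$ steps it could in principle migrate its whole solution onto a good block. But the adversary can defeat this by making $r$ arrive and then sending \emph{no further elements}, or by interleaving: one needs the approximation requirement $\E[f(\ALG_{t'})]\ge\alpha f(\OPT_{t'})$ to bite at a step $t'$ where the algorithm has only had time to shift $O(\e k)$ mass. Concretely, I would have the adversary send $r$, and require the bound at time step $t'=t$ (immediately), or at $t'=t+1$; then $\ALG_{t'}$ differs from the ($i$-independent) set $\ALG_{t-1}$ by at most $\e k$ elements — i.e. by $\ell_1$-density at most $\e$ in block coordinates plus possibly picking up $r$. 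Picking up $r$ is the dangerous move, since $r\in\ALG_{t'}$ changes the relevant $g_i$-values; but $r$ is a single element, density $1$ in the $r$-coordinate costs one unit of budget, and crucially the \emph{identity} $i$ is hidden until $r$ arrives, so the $\e k$-shift cannot be targeted at block $i$ — it must be chosen knowing only $q$. Formalizing ``the shift is chosen before $i$ is revealed, hence in expectation over $i$ it removes at most $\e$ density from the bad block $A_i$ in expectation'' is the crux, and I'd handle it by conditioning on the algorithm's randomness and $\ALG_{t-1}$, then taking expectation over the adversary's uniform choice of $i$ last. A secondary nuisance is verifying monotonicity/submodularity of the $g_i$ (deferred to the appendix per the excerpt) and checking the multilinear-extension values of the candidate OPT and ALG sets don't have hidden slack — routine but needs the $g_i(S)=2$ ceiling used carefully.
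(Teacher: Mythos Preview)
Your high-level plan matches the paper's, but two concrete errors would derail the execution.

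\textbf{The OPT lower bound is wrong.} You claim that ``spreading the $k$ budget across $\geq 3$ blocks (each with tiny density) makes $G_i \to 2$.'' It does not. With densities $q_j \ge 0$, $\sum_j q_j \le 1$, and $r$ included, submodularity gives
\[
  G_i(q,1) \;\le\; g_i(\{r\}) + \sum_j q_j\, g_i(a_j \mid \{r\}) \;=\; 1 + \tfrac{1}{3}\sum_{j\ne i} q_j + q_i \;\le\; \tfrac{4}{3} + \tfrac{2}{3}\, q_i,
\]
so any solution that ignores block $A_i$ has value at most $\nicefrac{4}{3}$. The paper's OPT does the opposite of spreading: knowing $i$ in hindsight, it takes $k-1$ elements from $A_i$ together with $r$, so that $f_i(\OPT) \ge \tfrac{k-1}{k}\cdot g_i(\{a_i,r\}) = \tfrac{2(k-1)}{k} \to 2$. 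Without concentrating on $A_i$, your OPT bound is only $\nicefrac{4}{3}$, the ratio collapses to $1$, and no hardness follows. The entire point of the construction is that OPT \emph{uses} the hidden index.

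\textbf{The post-$r$ shift can target $A_i$.} You assert that ``the $\e k$-shift cannot be targeted at block $i$ --- it must be chosen knowing only $q$.'' This is false: once $r$ has arrived the algorithm can query $f_i$ on sets containing $r$ and immediately identify $i$, so all $\e k$ changes can go into $A_i$. The paper handles this by simply \emph{granting} the algorithm this best move (add $r$ and $\e k$ elements of $A_i$ to the pre-$r$ solution $S^*_e$) and observing that this raises $y_i$ by at most $\e$; since $S^*_e$ is independent of $i$ and $\sum_j |S^*_e \cap A_j| \le k$, one still gets $\EO_{i}[y_i]\le \nicefrac{1}{m}+\e$. Your plan to ``formalize that the shift is chosen before $i$ is revealed'' is both incorrect as a model of the problem and unnecessary for the bound.

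Incidentally, the displayed inequality above is also the missing step for your ALG upper bound: it turns $G_i(q,1)$ into the linear form $\tfrac{4}{3} + \tfrac{2}{3}q_i$, after which averaging over $i$ yields the claimed $\tfrac{4}{3} + O(\e + \nicefrac{1}{m})$ in one line.
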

\begin{proof}
Assume toward contradiction that there exists such an algorithm $\mathcal{A}$. 
Let $S^*_e$ denote the solution of $\mathcal{A}$ before the last element $r$ of the stream arrives. We remark that we may assume $S^*_e$ is a deterministic set under the knowledge that $f_i$ is selected uniformly at random from $\{f_1, \ldots, f_n\}$. (Alternatively, the arguments below also applies when we have a distribution over sets $S^*_e$.) After the last element $r$  is revealed, $\mathcal{A}$ updates $S^*_e$ by swapping at most $\e k$ elements. To simplify the proof, we add $r$ and up to $\e k$ elements from $A_{i}$ to the set $S^*_e$ where $i$ is the index of the chosen function $f_i$. We denote the resulting set by $S^*$. We bound the expected value of $f_i(S^*)$ which is an upper bound on the solution chosen by $\mathcal{A}$ by the monotonicity of $f_i$. 
If we let $y_j  = \tfrac{|S^*\cap A_j|}k$ for $i = 1, \ldots, m$,  
\begin{gather*}
    f_i(S^*)  = \sum_{S \subseteq X_f}  \left(\prod_{a_j \in S} y_j \cdot \prod_{a_j \notin S} (1-y_j)\right) \cdot f_i(S \cup \{r\})
\end{gather*}
which by submodularity is upper bounded by
\begin{gather*}
    \sum_{S \subseteq X_f}  \left( \prod_{a_j \in S} y_j \cdot \prod_{a_j \notin S} (1-y_j) \right) \sum_{a_j\in S}f_i(\{a_j, r\}) = \sum_{j=1}^m y_j \cdot f_i(\{a_j, r\})\,.
\end{gather*}
The last equality holds because $\sum_{S \subseteq X_f \setminus j} \prod_{a_k \in S} y_k \cdot \prod_{a_k \not \in S} (1-y_k) = 1$ and thus $f_i(\{a_j, r\})$ appear above with coefficient $y_j$.
By definition,  $f_i(\{a_j, r\}) = 2$ if $i=j$ and $f_i(\{a_j, r\}) = \nicefrac 43$ otherwise. Hence, over the randomness over the randomly chosen $i\in \{1, \ldots, m\}$, we get
\begin{gather*}
    \EO_{i\sim[m]} [g_i(S^*)] \leq  \sum_{j=1}^m \EO_{i\sim[m]} \left[y_j\cdot f_i(\{a_j, r\})\right] \leq \frac 43 + 2 \cdot \left({\frac 1m + \e }\right)\,,
\end{gather*}
where we used that $\sum_{j=1}^m \tfrac{|S^*_e \cap A_j|}{k}\leq 1$ and that $S^*$ was obtained from $S^*_e$ by adding at most $\e k$ elements from $A_{i}$. Thus $\EO_{i\sim[m]}[ y_{i}] \leq \EO_{i\sim[m]} \left[\frac{|S^*_e \cap A_{i}| + \e k}{k}\right]\leq \nicefrac 1m+\e$.

Finally, let us now bound the value of the optimum solution denoted by \OPT. In the optimum solution we pick $k-1$ elements of $A_{i}$ and $r$. Therefore
\begin{align*}
    \EO_{i\sim[m]}[g_i(\text{\OPT})] \geq \frac{k-1}{k} \EO_{i\sim[m]}[f(\{a_{i} , r\})] \geq \frac{k-1}{k} \cdot 2\,.
\end{align*}
The proof now follows by choosing large enough values of the parameter $m$ and cardinality constraint $k$.
\end{proof} 

\subsection{Information Theoretic Hardness for Coverage Functions}\label{sec:hardness-coverage}

    For coverage function, it is fairly easy to exhibit an hard instance that shows the tightness of the $\nicefrac 34$ factor, using the perfect alignment phenomenon presented in \Cref{ex:alignment}.
    \begin{theorem}
    \label{thm:3/4_randomized}
        For any $\e \in (0,1)$, there exists no $\e k$-consistent $(\nicefrac 34+2\e)$-approximation algorithm, even for coverage functions.  
    \end{theorem}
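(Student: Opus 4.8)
The plan is to follow the template of the proof of Theorem~\ref{thm:hardness07}: exhibit a distribution over hard inputs, all instances of the perfect‑alignment coverage function of Example~\ref{ex:alignment}, and then average over the algorithm's random bits to reduce to the deterministic case. Concretely, fix $k$ large enough that $\e k$ is a positive integer, take the universe $U=\{u_1,\dots,u_{2k}\}$ and the coverage function $f(S)=\bigl|\bigcup_{x\in S}x\bigr|$, and let the stream present the $2k$ singletons $\{u_1\},\dots,\{u_{2k}\}$ in a fixed order, followed by one last element $R$ that the adversary draws uniformly at random among all $k$‑element subsets of $U$. For every realization of $R$ we have $\OPT_n=2k-1$ (take $R$ together with $k-1$ singletons of the $k$ universe elements it does not cover), so $\OPT_n$ is a fixed number. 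Since an $\alpha$‑approximation must meet its guarantee on every fixed input, if a randomized $\e k$‑consistent $(\nicefrac34+2\e)$‑approximation existed then, by averaging over its random bits, there would be a \emph{deterministic} $\e k$‑consistent algorithm $\cA$ with $\EO_R[f(\ALG_n)]\ge(\nicefrac34+2\e)\,(2k-1)$. I will show this is impossible.

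Next I would bound $f(\ALG_n)$ for an arbitrary deterministic $\e k$‑consistent algorithm. Just before $R$ arrives, the solution $\ALG_{n-1}$ is a fixed set of singletons; write $T=\bigcup_{x\in\ALG_{n-1}}x$ and $s=|\ALG_{n-1}|=|T|\le k$. Consistency at the last step gives $|\ALG_n\setminus\ALG_{n-1}|\le\e k$, so $\ALG_n$ contains at most $\e k$ elements that are not already in $\ALG_{n-1}$, and the only elements available are the $2k$ singletons and $R$. If $R\notin\ALG_n$, then $\ALG_n$ is a set of at most $k$ singletons and $f(\ALG_n)\le k$. If $R\in\ALG_n$, then $\ALG_n$ consists of $R$, a subset of $\ALG_{n-1}$ (which can add at most the $|T\setminus R|$ universe elements of $T$ not covered by $R$), and at most $\e k-1$ further ``fresh'' singletons; hence, writing $c=|T\cap R|$,
\[
  f(\ALG_n)\ \le\ \underbrace{|R|}_{=\,k}\ +\ \underbrace{|T\setminus R|}_{=\,s-c}\ +\ (\e k-1)\ =\ k+s-c+\e k-1.
\]
Since $\e k\ge 1$ and $s\ge c$, this last quantity is at least $k$, so the bound $f(\ALG_n)\le k+s-c+\e k-1$ holds in both cases.

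Then I would take the expectation over $R$. As $R$ is a uniformly random $k$‑subset of the $2k$‑element universe, $\Pr[u\in R]=\tfrac12$ for each $u\in T$, hence $\EO_R[c]=s/2$. Plugging this in and using $s\le k$,
\[
  \EO_R[f(\ALG_n)]\ \le\ k+s-\tfrac{s}{2}+\e k-1\ =\ k+\tfrac{s}{2}+\e k-1\ \le\ \tfrac{3}{2}k+\e k-1 .
\]
A direct computation shows $\tfrac32 k+\e k-1<(\nicefrac34+2\e)(2k-1)$ for every $k\ge1$ and every $\e\in(0,1)$ (it is equivalent to $3\e k+\tfrac14-2\e>0$, which holds since $3\e k\ge3\e>2\e$). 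Therefore $\EO_R[f(\ALG_n)]<(\nicefrac34+2\e)\,\OPT_n$, contradicting the consequence of the approximation guarantee derived above, and the theorem follows.

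I expect the only step requiring genuine care to be the coverage bound $f(\ALG_n)\le k+s-c+\e k-1$: it rests on the observation that at the step where $R$ is revealed, $\ALG_n$ shares all but at most $\e k$ of its elements with $\ALG_{n-1}$, so the best the algorithm can do is place $R$ in the solution (one ``new'' slot), keep the part of $\ALG_{n-1}$ covering $T\setminus R$, and spend the remaining $\e k-1$ new slots on singletons covering fresh universe elements. Everything else -- the hypergeometric mean $\EO_R[c]=s/2$, the fact that using fewer than $k$ singletons for $\ALG_{n-1}$ only weakens the algorithm (the bound above is increasing in $s$), and the fact that the approximation constraints at earlier times $t\le 2k$ (where $\OPT_t=\min(t,k)$ and only singletons are present) are trivially met by maintaining $\min(t,k)$ singletons -- is routine.
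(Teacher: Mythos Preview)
Your proof is correct and follows essentially the same approach as the paper: the same perfect-alignment coverage instance with $2k$ singletons followed by a single $k$-subset $R$, the same combinatorial bound $f(\ALG_n)\le \tfrac{3}{2}k+\e k$ (you obtain the slightly sharper constant $-1$ by counting $R$ among the $\e k$ new elements), and the same comparison with $\OPT_n=2k-1$. The only stylistic difference is that you randomize $R$ and reduce to a deterministic algorithm via Yao's principle, whereas the paper keeps the algorithm random and fixes a single $R$ by an averaging argument; these are the two dual sides of the same minimax computation.
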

    
        \begin{proof}
            Fix any precision parameter $\e > 0$, and a (possibly randomized) algorithm $\cA$ that is $\e k$-consistent; we construct a covering instance such that $\cA$ does not maintain a $(\nicefrac 34+\e)$ approximation (in expectation).
            Let $U = \{u_1, \dots, u_{N}\}$ be an universe of elements, and $V$ be the family of subsets of $U$ that contains all $S \subseteq U$ such that $|S| \in \{1,k\}$. The covering function $f$ is naturally defined on $V$, and we consider the task of maximizing $f$ with cardinality $k$, with $N = 2k$.
            
            Observe the behaviour of $\cA$ on the sequence $\{u_1\}, \dots \{u_N\}$. At the end of this partial sequence $\cA$ maintains a certain solution $S = \{\{u_{i_1}\}, \dots, \{u_{i_{\ell}}\}\}$ that, without loss of generality has cardinality $\ell = k.$ The solution $S$ is random, as it may depend on the internal randomization of the algorithm $\cA$; by a simple averaging argument, it holds that there exists $R = \{u_{j_1}, \dots, u_{j_{k}}\}$ such that 
            \begin{equation}
                \label{eq:intersection}
                \E{|R \setminus S|} = \frac {k^2}N = \frac{k}{2}.
            \end{equation}
            
            Now suppose the next element to arrive is $R$. The value of the optimal solution after this insertion is $2k-1$ (just take the last subset and $k-1$ non overlapping singletons). The value of $S$ is $k$ and, even if $\cA$ adds to $S$ the subset $K$ and $\e k$ other singletons, it cannot get a solution of value more than $\nicefrac{3k}2 + \e k$ (by Equation~\ref{eq:intersection}). 
            Overall, if we call $\ALG$ the solution maintained by the algorithm, and $\OPT$ the optimal solution, we have:
            \[
                \frac{\E{f(\ALG)}}{f(\OPT)} \le \frac{3k + 2\e k}{2(2k -1)} 
                \le \frac{3}{4} + 2\e,
            \]
            where the last inequality follows by taking $k$ large enough.
        \end{proof}

\section{Breaking the \texorpdfstring{$\nicefrac 12$}\, Barrier with a Poly-Time Algorithm}\label{sec:poly-time}

    We present \GreedyCertificate, which is a $0.51$-addition-robust randomized algorithm that runs in polynomial time. This result, combined with \Cref{thm:reduction}, yields a consistent $(0.51-O(\e))$-approximation algorithm that runs in poly-time. \GreedyCertificate takes two parameters as input: sub-sampling parameter $\ExtraSample$, and threshold $\threshold$ and works as follows. First, it computes a greedy solution $\GreedySol$ of cardinality $\k$, then it augments it to $\AugmentedSol$ by adding up to $\ExtraSample \k$ extra elements, with the property that its marginal contribution is above a certain threshold (depending on $\threshold$). Finally, a random size $\k$ subset $\FinalSol$ of $\AugmentedSol$ is output. We refer to the pseudo-code for further details. Note, the augmentation procedure motivates the name of the algorithm: either the extra sampling budget $\ExtraSample\k$ is filled, or the threshold condition is not met, and thus we have a ``certificate'' on the low marginal value of the remaining elements (as in \Cref{lem:delta>delta'}).   
        \begin{algorithm}[H]
        \caption*{\GreedyCertificate}
        \begin{algorithmic}[1]
        \STATE \textbf{Environment:} Set $X$ of $n$ elements, function $f$, cardinality $\k$
        \STATE \textbf{Input:} Subsampling limit $\ExtraSample$, and threshold $\threshold$. 
        \STATE Let $\GreedySol$ be the output of $\greedy$ with cardinality constraint $\k$ on set $X$.
        \STATE Initialize $\AugmentedSol$ to $\GreedySol$
        \FOR{$t = 1, \dots, \ExtraSample \cdot \k$}
            \IF{there exists $x \in X$ such that $f(x|\AugmentedSol) \ge \threshold \frac{f(\GreedySol)}{\k}$}\label{line:gamma_condition}
                \STATE Add $x$ to $\AugmentedSol$
            \ELSE
                \STATE Break from the for loop.
            \ENDIF
        \ENDFOR
        \STATE \textbf{Return} a random subset $\FinalSol$ of $\AugmentedSol$ of cardinality $\k$
        \end{algorithmic}
        \end{algorithm}
    
    A crucial ingredient in our analysis of \GreedyCertificate is a refined analysis of the \greedy algorithm for monotone submodular maximization subject to a cardinality constraint \citep{NemhauserWF78}. \greedy builds a solution one step at a time, repeatedly adding the element with largest marginal value with respect to the current solution, and it is known to provide a tight approximation guarantee of $1-\nicefrac 1e$. In the following lemma, we relate the output of \greedy with the value of the optimal solution $\OPT$ and the largest marginal value $\mu$ outside of the solution. To this end, denote with $S$ the output of the greedy algorithm with size $k$ on a ground set $X$; we define the largest normalized residual marginal $\mu \in [0,1]$ as the solution of 
        \begin{equation}
        \label{def:mu}
            \max_{e \in X \setminus S} f(e \mid S) = \frac{\mu}{k}f(\OPT).
        \end{equation}
        Note, we can expect two opposite behaviours for $\mu$ at the boundary of $[0,1]$: if $\mu \approx 0$, then it means that all the value is contained in $S$, while $\mu \approx 1$ says that the function is ``nearly'' additive. In both these two extreme cases, it is natural to think that \greedy outperforms its standard $1 - \nicefrac 1e$ approximation guarantee. In particular, in \Cref{app:refined_greedy} we prove the following Lemma. 
        
        \begin{restatable}{lemma}{greedyrefined}
        \label{lem:refined_greedy}
            The following inequality holds: $ f(S) \ge \left(1 + \mu \ln \mu\right) \cdot f(\OPT).$
        \end{restatable}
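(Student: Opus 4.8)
The plan is to run the standard greedy potential argument, but to feed it the extra information encoded by $\mu$. Let $S_0 = \emptyset, S_1, \dots, S_k = S$ be the intermediate greedy solutions, and write $g_i = f(S_i)$. The classical analysis observes that at step $i$ the greedy marginal is at least $\frac{1}{k}\big(f(\OPT) - g_{i-1}\big)$, because $\OPT$ has at most $k$ elements and by submodularity one of them has marginal at least the average of $f(\OPT \mid S_{i-1}) \ge f(\OPT) - g_{i-1}$. This gives the recursion $g_i \ge g_{i-1} + \frac{1}{k}\big(f(\OPT) - g_{i-1}\big)$, i.e. $f(\OPT) - g_i \le (1-\tfrac1k)\big(f(\OPT) - g_{i-1}\big)$, which unrolls to the usual $f(S) \ge (1-(1-\tfrac1k)^k)f(\OPT) \ge (1-\tfrac1e)f(\OPT)$. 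The refinement comes from the definition of $\mu$: every element outside $S$ has marginal \emph{at most} $\frac{\mu}{k}f(\OPT)$ with respect to the final set $S$, hence with respect to every $S_i$ as well, by submodularity. Since greedy picks the best available element, every greedy marginal $g_i - g_{i-1}$ is also at most $\frac{\mu}{k}f(\OPT)$ (the element it adds is in $X \setminus S_{i-1}$, and either it is in $S$, in which case... no — more simply: the element added at step $i$ lies outside $S_{i-1}$, and its marginal w.r.t.\ $S_{i-1}$ is at least its marginal w.r.t.\ $S$, which could be large; so instead I bound using that \emph{every} element not in $S$ has small marginal w.r.t.\ $S$, and track when greedy must start picking such elements).

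The cleaner way to organize it: split the greedy run at the first index where the remaining gap $f(\OPT) - g_{i-1}$ drops below $\mu \cdot f(\OPT)$. For steps before that threshold the standard multiplicative decrease $f(\OPT) - g_i \le (1-\tfrac1k)(f(\OPT)-g_{i-1})$ applies; for steps at or after it, I use a \emph{uniform lower bound} on the marginal: if $f(\OPT)-g_{i-1} \ge \mu f(\OPT)$ then the greedy marginal is at least $\frac1k(f(\OPT)-g_{i-1}) \ge \frac{\mu}{k}f(\OPT)$, i.e. at least a $\frac{\mu}{k}$-fraction of $f(\OPT)$ per step, so the gap shrinks at least linearly at rate $\frac{\mu}{k}f(\OPT)$ per step until it reaches $\mu f(\OPT)$. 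Combining: the multiplicative phase takes the gap from $f(\OPT)$ down to $\mu f(\OPT)$, which requires about $k\ln(1/\mu)$ steps (since $(1-\tfrac1k)^{k\ln(1/\mu)}\approx \mu$), after which $k - k\ln(1/\mu) = k(1+\ln\mu)$ steps remain, each contributing at least $\frac{\mu}{k}f(\OPT)$, for a total additional gain of at least $\mu(1+\ln\mu)f(\OPT)$... which would give $f(S) \ge \mu f(\OPT) + \mu(1+\ln\mu) f(\OPT)$ — not quite the claimed bound, so the accounting needs care.

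The right bookkeeping is a differential-inequality / telescoping one. Define $h_i = f(\OPT) - g_i \ge 0$. We have two facts: (a) $h_i \le (1-\tfrac1k) h_{i-1}$ always; and (b) whenever $h_{i-1} \ge \mu f(\OPT)$, the greedy marginal $g_i - g_{i-1}$ is at least $\frac{\mu}{k} f(\OPT)$, so $h_i \le h_{i-1} - \frac{\mu}{k}f(\OPT)$. Actually (b) alone, applied while the gap exceeds $\mu f(\OPT)$, already decreases $h$ by a fixed amount each step, but this is wasteful early on when (a) is much stronger. Use (a) for the first $t^\star := \lceil k\ln(1/\mu)\rceil$ steps: $h_{t^\star} \le (1-\tfrac1k)^{k\ln(1/\mu)} f(\OPT) \le \mu f(\OPT)$ (using $1-x \le e^{-x}$). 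Then for the remaining $k - t^\star \ge k(1+\ln\mu)$ steps, I claim each step still reduces the gap by at least $\frac1k h_{i-1}$, and I instead lower bound $g_k - g_{t^\star} = h_{t^\star} - h_k \ge \big(1-(1-\tfrac1k)^{k-t^\star}\big) h_{t^\star}$ — this recovers only a fraction. Hmm: the honest statement is that continuing greedy from $g_{t^\star}$ for $k-t^\star$ more steps yields $h_k \le (1-\tfrac1k)^{k-t^\star} h_{t^\star}$, so $g_k \ge f(\OPT) - (1-\tfrac1k)^{k-t^\star}\mu f(\OPT) \ge f(\OPT)\big(1 - \mu e^{-(k-t^\star)/k}\big) \ge f(\OPT)\big(1-\mu e^{-(1+\ln\mu)}\big) = f(\OPT)(1 - \mu \cdot \tfrac1e \cdot \tfrac1\mu) = (1-\tfrac1e)f(\OPT)$ — which just re-derives the classical bound, because using (a) throughout cannot beat it. So the improvement must genuinely use (b): in the tail, the gap is small ($\le \mu f(\OPT)$) yet each greedy step is forced to still extract at least $\frac1k h_{i-1}$, and crucially the \emph{first} tail step extracts at least $\frac{1}{k} h_{t^\star}$... this is still multiplicative. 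The real gain is that once $h_{i-1} < \mu f(\OPT)$, I can no longer be sure marginals stay $\ge \frac{\mu}{k}f(\OPT)$, BUT before that point the gap decreased by at least $\frac{\mu}{k}f(\OPT)$ per step for \emph{every} step, giving after $j$ steps $h_j \le f(\OPT) - j\cdot\frac{\mu}{k}f(\OPT)$; this linear bound beats the multiplicative one precisely in an intermediate range, and optimizing the crossover point $j \approx k(1+\ln\mu)$... wait, at $j = k(1+\ln\mu)$ the linear bound gives $h_j \le f(\OPT)(1 - \mu(1+\ln\mu))$, i.e. $g_j \ge (1+\mu\ln\mu)f(\OPT)$ — and since greedy only ever increases $g$, $f(S) = g_k \ge g_j \ge (1+\mu\ln\mu)f(\OPT)$. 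That is exactly the claim. So the argument is: for the first $j := \lfloor k(1+\ln\mu)\rfloor$ steps, each greedy marginal is $\ge \frac{\mu}{k}f(\OPT)$ — justified because as long as $g_{i-1} \le (1+\mu\ln\mu)f(\OPT) \le f(\OPT)$ we have $h_{i-1} \ge \mu(-\ln\mu)f(\OPT)$... I need $h_{i-1} \ge \mu f(\OPT)$, which would need $g_{i-1} \le (1-\mu)f(\OPT)$; for $\mu$ small $(1+\mu\ln\mu)$ vs $(1-\mu)$ — since $\ln\mu < -1$ for $\mu<1/e$, $1+\mu\ln\mu < 1-\mu$, good, and for $\mu \in [1/e,1]$ one checks $1+\mu\ln\mu \le 1-\mu+$ something, needs separate verification — anyway the key inequality to verify carefully is that throughout the first $j$ steps $h_{i-1}\ge\mu f(\OPT)$ so that marginal $\ge\frac1k h_{i-1}\ge\frac{\mu}{k}f(\OPT)$, and this will be the main technical point; alternatively combine the multiplicative phase (drive $h$ down to $\mu f(\OPT)$ in $\le k\ln(1/\mu)$ steps) with the linear phase for the remaining $\ge k(1+\ln\mu)$ steps each gaining $\ge\frac{\mu}{k}f(\OPT)$ as long as $h\ge\mu f(\OPT)$ — but once $h<\mu f(\OPT)$ we're already done since then $g>（1-\mu)f(\OPT)\ge(1+\mu\ln\mu)f(\OPT)$. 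I expect reconciling these two phases cleanly — i.e. showing the greedy marginal stays $\ge \frac{\mu}{k}f(\OPT)$ for enough steps, and handling the edge cases $\mu \to 0$ and $\mu \to 1$ — to be the main obstacle; the rest is the textbook greedy telescoping.
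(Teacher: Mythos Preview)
Your overall two-phase strategy --- run the multiplicative greedy recursion for about $k\ln(1/\mu)$ steps and then harvest a linear gain of $\tfrac{\mu}{k}f(\OPT)$ per step for the remaining $k(1+\ln\mu)$ steps --- is exactly the paper's approach, and the arithmetic $(1-\mu) + \mu(1+\ln\mu) = 1+\mu\ln\mu$ is where the bound comes from. But your justification for the linear phase has a genuine gap. You argue that the greedy marginal is at least $\tfrac{\mu}{k}f(\OPT)$ \emph{conditionally}, via ``$h_{i-1}\ge \mu f(\OPT)$ implies marginal $\ge \tfrac{1}{k}h_{i-1}\ge \tfrac{\mu}{k}f(\OPT)$''. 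The trouble is that after the multiplicative phase you have $h\le \mu f(\OPT)$, so your own hypothesis fails precisely where you want to invoke it; and your fallback ``once $h<\mu f(\OPT)$ we already have $g>(1-\mu)f(\OPT)\ge(1+\mu\ln\mu)f(\OPT)$'' is only valid when $\mu\le 1/e$, leaving the regime $\mu\in(1/e,1]$ unhandled.

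The missing observation is that the linear lower bound holds \emph{unconditionally}, for every one of the $k$ greedy steps, straight from the definition of $\mu$. Let $e^\star\in X\setminus S$ be an element achieving the maximum in the definition, so $f(e^\star\mid S)=\tfrac{\mu}{k}f(\OPT)$. Since $e^\star\notin S\supseteq S_{j-1}$, greedy considered $e^\star$ at step $j$; hence
\[
f(s_j\mid S_{j-1})\;\ge\; f(e^\star\mid S_{j-1})\;\ge\; f(e^\star\mid S)\;=\;\tfrac{\mu}{k}f(\OPT),
\]
the middle inequality by submodularity. With this in hand, the paper's proof runs exactly as your two-phase outline: for $\mu\ge 1/e$ use the standard recursion for the first $i\approx k\ln(1/\mu)\le k$ steps to get $f(S_i)\ge(1-\mu)f(\OPT)$, then add $(k-i)\cdot\tfrac{\mu}{k}f(\OPT)$ for the tail; for $\mu<1/e$ the one-line bound $f(\OPT)\le f(S)+\mu f(\OPT)$ together with $\mu\le -\mu\ln\mu$ suffices. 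Finally, note that your ``linear-only'' attempt contains an algebra slip: using marginal $\ge\tfrac{\mu}{k}f(\OPT)$ for $j=k(1+\ln\mu)$ steps gives $g_j\ge(\mu+\mu\ln\mu)f(\OPT)$, not $(1+\mu\ln\mu)f(\OPT)$, so that route alone cannot reach the claimed bound.
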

        
        Note, worst-case value of $\mu$ is $e^{-1}$, for which the standard $1 - \nicefrac1{e}$ bound is obtained.

        The main result of this Section is provided in the following theorem. 
            \begin{theorem}
            \label{thm:poly-robust}
                \GreedyCertificate with parameters $\threshold = 0.84$ and $\ExtraSample = 0.1$ is $0.51$-addition-robust and runs in polynomial time. 
            \end{theorem}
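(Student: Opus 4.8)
The running-time claim is immediate: \greedy runs in polynomial time, the augmentation loop of \GreedyCertificate performs at most $\ExtraSample\k$ iterations, each requiring one pass over $X$ to test the threshold condition in line~\ref{line:gamma_condition}, and drawing a uniformly random $\k$-subset of $\AugmentedSol$ is linear-time. So the real work is the $0.51$-addition-robustness, which is a claim about the output distribution for an arbitrary but fixed $R \subseteq \Vfuture$: writing $\OPT'$ for a maximum-value $\k$-subset of $\Vnow \cup R$, we must show $\E{f(\FinalSol \cup R)} \ge 0.51\, f(\OPT')$. The plan is to fix such an $R$, let $\eta' \le \ExtraSample$ be the fraction of the extra budget that the loop actually consumes (so $|\AugmentedSol| = (1+\eta')\k$), and record three inequalities valid no matter how the loop terminates: (i) the \emph{subsampling bound} $\E{f(\FinalSol \cup R)} \ge \tfrac{1}{1+\eta'}\, f(\AugmentedSol \cup R) + \tfrac{\eta'}{1+\eta'}\, f(R)$, that is, \eqref{eq:subsampling_intro}, since each element of $\AugmentedSol$ survives into $\FinalSol$ with probability $\tfrac{1}{1+\eta'}$ while $R$ is retained deterministically (submodularity plus linearity of expectation); (ii) the \emph{greedy bound} $f(\OPT') \le f(\oldOPT) + f(R)$, that is, \eqref{eq:naive}, from submodularity together with $\OPT' \cap \Vfuture \subseteq R$ and $|\OPT' \cap \Vnow| \le \k$; and (iii) the \emph{augmentation bound} $f(\AugmentedSol) \ge (1+\threshold\eta')\, f(\GreedySol)$, because $f(\AugmentedSol) - f(\GreedySol)$ telescopes into the marginals of the $\eta'\k$ inserted elements, each of which was at least $\tfrac{\threshold}{\k}\, f(\GreedySol)$ at its insertion time. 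Then I would split on whether the loop exhausted its budget.

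\emph{Low surviving-value case ($\eta' < \ExtraSample$).} The loop stopped because no element of $\Vnow$ has marginal value at least $\tfrac{\threshold}{\k}\, f(\GreedySol)$ with respect to $\AugmentedSol$; in particular this holds for the at most $\k$ elements of $\OPT' \setminus (\AugmentedSol \cup R) \subseteq \Vnow \setminus \AugmentedSol$, so by submodularity $f(\OPT') \le f(\OPT' \cup \AugmentedSol \cup R) \le f(\AugmentedSol \cup R) + \threshold\, f(\GreedySol)$. Combining this with $f(\GreedySol) \le f(\AugmentedSol \cup R)/(1+\threshold\eta')$ from (iii), and with (i), recovers \eqref{eq:first_term_intro}, namely $\E{f(\FinalSol \cup R)} \ge \bigl[(1+\eta')\bigl(1 + \tfrac{\threshold}{1+\threshold\eta'}\bigr)\bigr]^{-1} f(\OPT')$. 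The bracketed expression is increasing in $\eta' \in [0,\ExtraSample]$, so it is worst at $\eta' = \ExtraSample$; plugging in $\threshold = 0.84$ and $\ExtraSample = 0.1$ gives a coefficient of roughly $0.512$, which will be the binding one for the theorem.

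\emph{High surviving-value case ($\eta' = \ExtraSample$).} Now there is no certificate on marginals relative to $\AugmentedSol$; the extra ingredient is the quantity $\mu \in [0,1]$ from \eqref{def:mu} (with ground set $\Vnow$ and optimum $\oldOPT$). Since every $e \in \Vnow \setminus \GreedySol$ satisfies $f(e \mid \AugmentedSol) \le f(e \mid \GreedySol) \le \tfrac{\mu}{\k}\, f(\oldOPT)$, the same submodularity computation as above gives $f(\OPT') \le f(\AugmentedSol \cup R) + \mu\, f(\oldOPT)$. Feeding this together with (ii), with (i) at $\eta' = \ExtraSample$, and with the chain $f(\AugmentedSol \cup R) \ge f(\AugmentedSol) \ge (1+\threshold\ExtraSample)\, f(\GreedySol) \ge (1+\threshold\ExtraSample)(1+\mu\ln\mu)\, f(\oldOPT)$ --- the last step being the refined greedy estimate of \Cref{lem:refined_greedy} --- reduces the claim to a finite-dimensional inequality: after normalizing $f(\oldOPT) = 1$ and writing $w = f(R)$, $v = f(\AugmentedSol \cup R)$,
\[
  (1+\ExtraSample)\cdot\frac{\min\{\,1+w,\; v+\mu\,\}}{v+\ExtraSample\, w} \;\le\; \frac{1}{0.51}
  \qquad \text{for all } w \ge 0,\ \mu \in [0,1],\ v \ge (1+\threshold\ExtraSample)(1+\mu\ln\mu).
\]
This is exactly \eqref{eq:second_term_intro} with the extra lower bound on $v$ made explicit. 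It is a bounded optimization: for each fixed $\mu$ the left-hand side is maximized either at $v$ equal to its lower bound or at the crossover $v = 1+w-\mu$ where the two branches of the minimum coincide (and then $w$ is pinned down), so it collapses to a one-variable check in $\mu$, whose maximum stays below $1/0.51 \approx 1.961$ (it is about $1.94$, attained for $\mu$ away from both $0$ and $1$).

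The crux --- and the part I expect to be hardest to make fully rigorous --- is the \emph{joint} choice of $\threshold$ and $\ExtraSample$: raising either parameter helps the high case (it lifts the lower bound on $v$) but hurts the low case (it inflates the denominator of \eqref{eq:first_term_intro}), so the window of pairs for which both cases clear $0.51$ is narrow, and the high-case optimization has to be certified uniformly in $f(\oldOPT), f(R), f(\AugmentedSol \cup R), f(\GreedySol), \mu$ --- with the $\min$ resolved correctly in every regime --- rather than by a grid search. It is also worth noting that the refined greedy estimate \Cref{lem:refined_greedy} is genuinely needed in the high case: with only the plain $1-\nicefrac{1}{e}$ greedy guarantee, the high-case ratio drops below $1/2$ for intermediate values of $\mu$. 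Finally, feeding \GreedyCertificate into \checkpoint via \Cref{thm:reduction} promotes \Cref{thm:poly-robust} to the advertised consistent poly-time $(0.51 - O(\e))$-approximation.
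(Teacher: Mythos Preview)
Your proposal is correct and follows essentially the same approach as the paper: the same case split on whether the augmentation budget is exhausted, the same certificate argument in the low case yielding \eqref{eq:first_term_intro}, and the same use of $\mu$ together with \Cref{lem:refined_greedy} in the high case to reduce to the parametric optimization \eqref{eq:second_term_intro}. The paper makes the high-case optimization slightly more explicit by first optimizing over $f(R)$ (showing the two branches of the $\min$ are respectively increasing and decreasing, so the worst case is at their crossover $f(R)=f(\AugmentedSol\cup R)-(1-\mu)f(\oldOPT)$) and then showing the resulting expression is decreasing in $f(\AugmentedSol\cup R)/f(\oldOPT)$, so that the worst case is at the lower bound $(1+\threshold\ExtraSample)(1+\mu\ln\mu)$; this yields a single-variable function of $\mu$ bounded by $1/0.5159$, matching your estimate of roughly $1.94$.
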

            
            To prove the above result, we show that for any disjoint sets of elements $\Vnow$ and $\Vfuture$, and any $R \subseteq \Vfuture$, it holds that 
            \[
                \E{f(\FinalSol \cup R)} \ge 0.51 f(\OPT'),
            \]
            where $\OPT'$ is the size $\k$ optimal solution in $\Vnow \cup R$, and $\FinalSol$ is the (randomized) output of \GreedyCertificate on $\Vnow$.
            Before tackling directly the proof of \Cref{thm:poly-robust}, we introduce some notation and provide some preliminary results. It is possible that the second stage of \GreedyCertificate does not fill all the $(1+\ExtraSample)\k$ elements, so we define $\ExtraSample'$ as the solution of 
            \(
                |{\AugmentedSol}| = (1+\ExtraSample')\k.
            \)
            \begin{lemma}[Lower bounds on $\FinalSol \cup R$]
            \label{lem:AcupR}
                We have the following inequalities: 
                \begin{itemize}
                    \item[(i)] $(1+\threshold \ExtraSample')f(\GreedySol) \le f(\AugmentedSol \cup R)$
                    \item[(ii)] $f(\AugmentedSol \cup R) \le (1+\ExtraSample') \E{f(\FinalSol \cup R)}  $
                    \item[(iii)] $ \tfrac{1}{\ExtraSample'+1}f(\AugmentedSol \cup R) + \tfrac{\ExtraSample'}{1+\ExtraSample'} f(R) \le \E{f(\FinalSol \cup R)} $
                \end{itemize}
            \end{lemma}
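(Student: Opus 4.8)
The plan is to establish the three inequalities separately, deriving (ii) as a one-line weakening of (iii), so that the only substantive work is in (i) and (iii).

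For (i), I would argue directly from the augmentation rule. Let $e_1,\dots,e_{\ExtraSample'\k}$ be the elements that \GreedyCertificate adds to $\GreedySol$ in its second phase, in the order they are added, and write $\AugmentedSol^{(j)}=\GreedySol\cup\{e_1,\dots,e_j\}$, so $\AugmentedSol=\AugmentedSol^{(\ExtraSample'\k)}$. By the test on line~\ref{line:gamma_condition}, $f(e_j\mid\AugmentedSol^{(j-1)})\ge\threshold f(\GreedySol)/\k$ for each $j$, hence telescoping gives $f(\AugmentedSol)=f(\GreedySol)+\sum_{j=1}^{\ExtraSample'\k}f(e_j\mid\AugmentedSol^{(j-1)})\ge(1+\threshold\ExtraSample')f(\GreedySol)$, and monotonicity of $f$ upgrades this to $f(\AugmentedSol\cup R)\ge f(\AugmentedSol)\ge(1+\threshold\ExtraSample')f(\GreedySol)$.

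For (iii), the plan is to reduce to a standard subsampling estimate. Set $g(S):=f(S\cup R)-f(R)$, which is monotone, submodular, and normalized ($g(\emptyset)=0$), and put $N:=|\AugmentedSol|=(1+\ExtraSample')\k$. Substituting $g(\AugmentedSol)=f(\AugmentedSol\cup R)-f(R)$, $\E{g(\FinalSol)}=\E{f(\FinalSol\cup R)}-f(R)$, and $\k/N=\tfrac1{1+\ExtraSample'}$, inequality (iii) becomes equivalent to the subsampling estimate $\E{g(\FinalSol)}\ge\tfrac{\k}{N}g(\AugmentedSol)$. To prove it, draw a uniformly random ordering $v_1,\dots,v_N$ of $\AugmentedSol$, so $\FinalSol$ has the law of $\{v_1,\dots,v_\k\}$, and use $g(\{v_1,\dots,v_i\})=\sum_{\ell\le i}g(v_\ell\mid\{v_1,\dots,v_{\ell-1}\})$. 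Letting $h(j):=\E{g(v_{j+1}\mid\{v_1,\dots,v_j\})}$, linearity of expectation yields $\E{g(\FinalSol)}=\sum_{j=0}^{\k-1}h(j)$ and $g(\AugmentedSol)=\sum_{j=0}^{N-1}h(j)$. The key claim is that $h$ is non-increasing: a uniformly random $(j+1)$-subset of $\AugmentedSol$ arises by first picking a uniformly random $(j+2)$-subset $D$ and then deleting a uniformly random $w\in D$, so $h(j)=\EO_{D}\,\EO_{w\in D}\big[\tfrac1{j+1}\sum_{v\in D\setminus\{w\}}g(v\mid D\setminus\{v,w\})\big]$; replacing $g(v\mid D\setminus\{v,w\})$ by the smaller $g(v\mid D\setminus\{v\})$ (submodularity) and averaging over $w\in D$ collapses the inner expectation to $\tfrac1{j+2}\sum_{v\in D}g(v\mid D\setminus\{v\})$, whose expectation over $D$ is $h(j+1)$. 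Since for a non-increasing sequence the average of the first $\k$ terms is at least the average of all $N$ terms, $\sum_{j=0}^{\k-1}h(j)\ge\tfrac{\k}{N}\sum_{j=0}^{N-1}h(j)$, which is exactly what is needed.

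Finally, (ii) is immediate from (iii): since $f(R)\ge0$, both terms on the left of (iii) are nonnegative, so $\E{f(\FinalSol\cup R)}\ge\tfrac1{1+\ExtraSample'}f(\AugmentedSol\cup R)$, and clearing the denominator gives (ii). The main obstacle is the subsampling estimate behind (iii), and specifically the monotonicity of $h$: for an independent $p$-sample the analogous bound is a one-line marginal-by-marginal computation, whereas here $\FinalSol$ has a \emph{fixed} cardinality $\k$, which is what forces the permutation/averaging argument; I would isolate the monotonicity of $h$ as a short self-contained claim.
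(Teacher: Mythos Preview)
Your proof is correct. Part (i) matches the paper exactly. For (iii), however, you work harder than necessary: the paper fixes an \emph{arbitrary} ordering $a_1,\dots,a_{(1+\ExtraSample')\k}$ of $\AugmentedSol$, telescopes
\[
f(\FinalSol\cup R)=f(R)+\sum_i \ind{a_i\in\FinalSol}\,f\bigl(a_i\mid R\cup(\FinalSol\cap\{a_1,\dots,a_{i-1}\})\bigr),
\]
and applies submodularity pointwise to replace the random prefix $\FinalSol\cap\{a_1,\dots,a_{i-1}\}$ by the deterministic full prefix $\{a_1,\dots,a_{i-1}\}$. Once the marginal is deterministic, taking expectations factors cleanly and only the marginal inclusion probability $\P{a_i\in\FinalSol}=\tfrac{1}{1+\ExtraSample'}$ is needed---no independence, no random ordering, no monotonicity-of-$h$ claim. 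So your remark that fixed cardinality ``forces the permutation/averaging argument'' is not quite right: the paper's argument works for any sampling scheme with uniform marginals. What your route buys is a self-contained, reusable subsampling lemma ($\E{g(\FinalSol)}\ge\tfrac{\k}{N}\,g(\AugmentedSol)$ for normalized monotone submodular $g$), at the cost of a longer proof here. Part (ii) is derived from (iii) in both.
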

            \begin{proof}
                We start by proving the first result. $\AugmentedSol$ has been augmented with elements $\{a_1,a_2, \dots a_{\ExtraSample'\k}\}$ respecting the threshold condition in the if statement (line~\ref{line:gamma_condition}), therefore we have the following:
                \begin{align*}
                    f(\AugmentedSol \cup R) &\ge f(\AugmentedSol) = f(\GreedySol) + \sum_{i = 1}^{\ExtraSample'\k} f(a_i|\GreedySol \cup \{a_1, \dots, a_{i-1}\})\ge (1+\threshold \ExtraSample')f(\GreedySol),
                \end{align*}
              where the last inequality is given by the fact that an element is added in the augmented solution only if its current marginal contribution is at least $\nicefrac \threshold{\k} f(\GreedySol).$
              
              Consider now the subsampling step, where $\k$ elements of $\AugmentedSol= \{a_1, \dots, a_{(1+\ExtraSample')\k}\}$ are drawn uniformly at random. We have the following chain of inequalities: 
              \begin{align*}
                  \E{f(\FinalSol \cup R)}\! &=\! f(R) +\!\!\!\! \sum_{i=1}^{(1+\ExtraSample')\k}\!\!\!\!\E{\ind{a_i \in \FinalSol} f(a_i| R \cup (\FinalSol \cap \{a_1,\dots, a_{i-1}\})} \tag{by  telescopic argument}\\
                  &\ge\!
             f(R) + \sum_{i=1}^{(1+\ExtraSample')\k}\P{a_i \in \FinalSol} f(a_i|  R \cup \{a_1,\dots, a_{i-1}\}) \tag{by submodularity} \\
             &\ge\! f(R) + \frac{1}{1+\ExtraSample'}f(\AugmentedSol |R). \tag{by design} \end{align*}
             The inequality in the statement of point (ii) follows by non-negativity of $f(R)$ and the fact that $\ExtraSample' \ge 0$. Starting for the last inequality, it is also easy how to derive the inequality in point (iii): it suffices to decompose $f(R)$ according to $\nicefrac{1}{1+\ExtraSample'}$ and $\nicefrac{\ExtraSample'}{1+\ExtraSample'}$.
            \end{proof}
            
            A further step in our analysis is provided by upper bounding the value of $\OPT'$ if the augmentation procedure stops before filling the extra budget. This provides a certificate that all the remaining elements have low value with respect to $f(\GreedySol).$  
            \begin{lemma}[Upper bound on $\OPT'$ when $\ExtraSample' < \ExtraSample$]
            \label{lem:delta>delta'}
                If $\ExtraSample' < \ExtraSample$, then the following inequality holds: 
                \[
                    f(\OPT') \le (1+\ExtraSample)\left(1+  \frac{\threshold}{1+\threshold \ExtraSample}
      \right) \E{f(\FinalSol\cup R)}
                \]
            \end{lemma}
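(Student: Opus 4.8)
The hypothesis $\ExtraSample' < \ExtraSample$ is precisely the statement that the augmentation loop of \GreedyCertificate halted before exhausting its budget of $\ExtraSample\k$ elements, so it must have exited via the \textbf{else} branch that breaks out of the loop. At that moment there was no $x \in \Vnow$ with $f(x \mid \AugmentedSol) \ge \threshold\, f(\GreedySol)/\k$ (line~\ref{line:gamma_condition}), and since $\AugmentedSol$ is not modified afterwards, this yields a ``certificate'': for every $x \in \Vnow$,
\[
f(x \mid \AugmentedSol) < \threshold\,\frac{f(\GreedySol)}{\k}
\]
(this is vacuous, equal to $0$, when $x \in \AugmentedSol$, so it really only constrains $x \in \Vnow \setminus \AugmentedSol$). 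This is the sole point at which the assumption $\ExtraSample' < \ExtraSample$ is used.

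Next I would bound $f(\OPT')$ directly using this certificate. By monotonicity $f(\OPT') \le f(\OPT' \cup \AugmentedSol \cup R)$; expanding the marginal and using subadditivity of marginals (a consequence of submodularity),
\[
f(\OPT') \le f(\AugmentedSol \cup R) + \sum_{x \in \OPT' \setminus (\AugmentedSol \cup R)} f(x \mid \AugmentedSol \cup R).
\]
Since $\OPT' \subseteq \Vnow \cup R$ and $R \subseteq \Vfuture$ is disjoint from $\Vnow$, every $x$ appearing in the sum lies in $\Vnow \setminus \AugmentedSol$; by submodularity $f(x \mid \AugmentedSol \cup R) \le f(x \mid \AugmentedSol) < \threshold\, f(\GreedySol)/\k$, and there are at most $|\OPT'| \le \k$ such terms, so the sum is at most $\threshold f(\GreedySol)$. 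Hence $f(\OPT') \le f(\AugmentedSol \cup R) + \threshold f(\GreedySol)$. Now I would chain this with \Cref{lem:AcupR}: part~(i), namely $f(\GreedySol) \le \frac{1}{1+\threshold\ExtraSample'}\,f(\AugmentedSol \cup R)$, gives
\[
f(\OPT') \le \left(1 + \frac{\threshold}{1+\threshold\ExtraSample'}\right) f(\AugmentedSol \cup R),
\]
and part~(ii), namely $f(\AugmentedSol \cup R) \le (1+\ExtraSample')\,\E{f(\FinalSol \cup R)}$, gives
\[
f(\OPT') \le (1+\ExtraSample')\left(1 + \frac{\threshold}{1+\threshold\ExtraSample'}\right)\E{f(\FinalSol \cup R)}.
\]

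It remains to replace $\ExtraSample'$ by $\ExtraSample$ on the right-hand side, which only requires that $h(x) := (1+x)\bigl(1 + \tfrac{\threshold}{1+\threshold x}\bigr)$ be nondecreasing on $[0,\infty)$. A one-line computation gives $h'(x) = 1 + \frac{\threshold(1-\threshold)}{(1+\threshold x)^2} \ge 1 > 0$ for $\threshold \in [0,1]$ (the regime of interest, e.g.\ $\threshold = 0.84$), so $h(\ExtraSample') \le h(\ExtraSample)$ because $\ExtraSample' < \ExtraSample$, which is exactly the claimed inequality. There is no substantial obstacle here: the proof is a short chain of monotonicity and submodularity steps together with \Cref{lem:AcupR}, and the only delicate bookkeeping is keeping the conditioning sets straight when invoking the certificate — one applies the threshold bound to singletons of $\Vnow \setminus \AugmentedSol$ conditioned on $\AugmentedSol$, and then passes to the larger base $\AugmentedSol \cup R$ via submodularity.
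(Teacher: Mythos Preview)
Your proof is correct and follows essentially the same route as the paper: monotonicity plus submodularity to reduce to singleton marginals over $\OPT' \setminus (\AugmentedSol \cup R) \subseteq \Vnow \setminus \AugmentedSol$, then the certificate bound, then parts~(i) and~(ii) of \Cref{lem:AcupR}, and finally monotonicity of $(1+x)\bigl(1+\tfrac{\threshold}{1+\threshold x}\bigr)$ in $x$. The only cosmetic difference is that the paper verifies the last monotonicity by rewriting the function as $2+x-\tfrac{1-\threshold}{1+\threshold x}$, whereas you compute the derivative directly; both are fine.
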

            \begin{proof}
                We have the following chain of inequalities:
                \begin{align}
       f(\OPT') &\le f(\OPT' \cup \AugmentedSol \cup R)\tag{monotonicity}\\
       &\le f(\AugmentedSol \cup R) + \threshold f(\GreedySol) \tag{$ \forall x \in \OPT' \setminus \AugmentedSol$ it holds $f(x|\AugmentedSol \cup R) \le f(x|\AugmentedSol)\le \tfrac{\threshold}{\k} f(\GreedySol)$}\\
       &\le f(\AugmentedSol \cup R) \left[1 + \frac{\threshold}{1+\threshold \ExtraSample'}\right] \tag{By point (i) of \Cref{lem:AcupR}}\\
       &\le (1+\ExtraSample')\left(1+  \frac{\threshold}{1+\threshold \ExtraSample'}
      \right) \E{f(\FinalSol\cup R)}\tag{By point (ii) of \Cref{lem:AcupR}}\\
      &\le (1+\ExtraSample)\left(1+  \frac{\threshold}{1+\threshold \ExtraSample}
      \right) \E{f(\FinalSol\cup R)},
   \end{align}
   where in the last inequality we used the fact that $(1+x)(1+\tfrac{\threshold}{1+\threshold x})$ is monotonically increasing as a single variable function of $x$ for any parameter $\threshold \in [0,1]$\footnote{To see this last fact more clearly, it is enough to conveniently manipulate the expression: $(1+x)(1+\tfrac{\threshold}{1+\threshold x}) = 2 + x - \tfrac{1 - \threshold}{1+\threshold x}$. It is clear that as $x$ increases, the overall function also increases.}.
    \end{proof}
    The last preliminary step consists in upper bounding the value of the optimal solution $\OPT'$ of cardinality $\k$ in $\Vfuture$. To this end, denote with $\oldOPT$ the optimal solution with cardinality $\k$ in $\Vnow$, and recall the definition of $\mu$ as in Equation~\ref{def:mu} with cardinality $\k$. We have the following Lemma 
    \begin{lemma}[Upper bounds on $\OPT'$]
    \label{lem:upper_opt}
        The following inequalities hold true:
        \begin{itemize}
            \item[(i)] $f(\GreedySol) \ge (1+\mu \ln \mu) f(\oldOPT)$
            \item[(ii)] $f(\OPT') \le f(\oldOPT) + f(R)$
            \item[(iii)] $f(\OPT') \le f(\AugmentedSol \cup R) + \mu f(\oldOPT)$
        \end{itemize}
    \end{lemma}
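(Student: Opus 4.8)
The three inequalities are of different character, and I would handle them in order. Part (i) is nothing more than a restatement of the refined greedy bound: since $\GreedySol$ is the output of \greedy run with cardinality constraint $\k$ on $\Vnow$, and $\oldOPT$ is by definition the optimal size-$\k$ subset of $\Vnow$, applying \Cref{lem:refined_greedy} verbatim (with ground set $\Vnow$, cardinality $\k$, so that the $\mu$ there is exactly the $\mu$ of \Cref{def:mu}) gives $f(\GreedySol) \ge (1+\mu\ln\mu)\,f(\oldOPT)$. No further argument is needed.

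For part (ii), the plan is to split $\OPT'$ according to the origin of its elements: write $\OPT' = (\OPT'\cap \Vnow)\cup(\OPT'\cap\Vfuture)$, and note $\OPT'\cap\Vfuture\subseteq R$ because $\OPT'\subseteq \Vnow\cup R$. By subadditivity of $f$ (which follows from submodularity together with $f\ge 0$), $f(\OPT')\le f(\OPT'\cap\Vnow)+f(\OPT'\cap\Vfuture)$. The first term is at most $f(\oldOPT)$ since $\OPT'\cap\Vnow$ is a feasible subset of $\Vnow$ of size at most $\k$ and $\oldOPT$ is optimal among those; the second is at most $f(R)$ by monotonicity, since $\OPT'\cap\Vfuture\subseteq R$. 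Combining yields the inequality.

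For part (iii), I would start from monotonicity, $f(\OPT')\le f(\OPT'\cup\AugmentedSol\cup R)$, and peel the marginal: $f(\OPT'\cup\AugmentedSol\cup R)=f(\AugmentedSol\cup R)+f(\OPT'\mid \AugmentedSol\cup R)$, then bound the marginal by a sum of singleton marginals via submodularity, $f(\OPT'\mid\AugmentedSol\cup R)\le\sum_{x\in\OPT'\setminus(\AugmentedSol\cup R)} f(x\mid\AugmentedSol\cup R)$. The key observation is that every such $x$ lies in $\Vnow\setminus\GreedySol$: it is not in $R$, so (as $\OPT'\subseteq\Vnow\cup R$) it is in $\Vnow$, and it is not in $\AugmentedSol\supseteq\GreedySol$. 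Using submodularity once more to shrink the conditioning set from $\AugmentedSol\cup R$ down to $\GreedySol$ (legitimate since $\GreedySol\subseteq\AugmentedSol\cup R$), and then the defining relation \Cref{def:mu} of $\mu$ with cardinality $\k$, we get $f(x\mid\AugmentedSol\cup R)\le f(x\mid\GreedySol)\le\tfrac{\mu}{\k}f(\oldOPT)$. Since there are at most $|\OPT'|\le\k$ such terms, the whole sum is at most $\mu f(\oldOPT)$, which is part (iii).

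I do not anticipate a genuine obstacle: each part is a short chain of monotonicity and submodularity steps once the elements of $\OPT'$ are correctly partitioned between $\Vnow$ and $R$. The only point that needs a bit of care is in (iii), namely checking that the residual elements $\OPT'\setminus(\AugmentedSol\cup R)$ indeed all fall into $\Vnow\setminus\GreedySol$ so that the $\mu$-bound is applicable, and that dropping the conditioning set to $\GreedySol$ is valid — both of which follow from the containments $\GreedySol\subseteq\AugmentedSol\subseteq\AugmentedSol\cup R$ together with $\OPT'\subseteq\Vnow\cup R$.
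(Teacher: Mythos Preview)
Your proposal is correct and follows essentially the same route as the paper: part (i) is a direct invocation of \Cref{lem:refined_greedy}, part (ii) splits $\OPT'$ along $\Vnow$ versus $R$ and uses subadditivity plus optimality of $\oldOPT$, and part (iii) uses monotonicity, then submodularity to drop to singleton marginals conditioned on $\GreedySol$, then the definition of $\mu$. The only cosmetic difference is that the paper sums over $x\in\OPT'\setminus R$ rather than $x\in\OPT'\setminus(\AugmentedSol\cup R)$ in part (iii), which is harmless since the extra terms (those in $\GreedySol$) contribute zero marginal.
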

    \begin{proof}
    The first point follows from \Cref{lem:refined_greedy} on $\Vnow$, with cardinality $\k$. For point (ii) we have the following:
    \begin{align*}
        f(\OPT') &\le f(\OPT' \cap R) + f(\OPT' \setminus R) \tag{By submodularity}\\
        &\le f(R) + f(\OPT' \cap \Vnow) \tag{By monotonicity}\\
        &\le f(R) + f(\oldOPT),
    \end{align*}
    where the last inequality follows by optimality of $\oldOPT$ in $\Vnow.$ To prove the last point, we need a different decomposition: 
    \begin{align*}
        f(\OPT') &\le f(\AugmentedSol \cup R) + f(\OPT'|\AugmentedSol \cup R) \tag{By monotonicity}\\
        &\le f(\AugmentedSol \cup R) + \sum_{x \in \OPT' \setminus  R}f(x|\GreedySol) \tag{By submodularity, as  $\GreedySol \subseteq \AugmentedSol \cup R$}\\
        &\le f(\AugmentedSol \cup R) + \mu f(\oldOPT),
    \end{align*}
    where the last inequality follows by definition of $\mu.$
    \end{proof}
    \begin{proof}[Proof of \Cref{thm:poly-robust}]
        We now address the case not covered by \Cref{lem:delta>delta'} (i.e., $\ExtraSample = \ExtraSample'$), and bound the corresponding approximation ratio. In particular, using the lower bound on $f(\FinalSol \cup R)$ as in point (iii) of \Cref{lem:AcupR}, and the two upper bounds on $\OPT'$ as in points (ii) and (iii) of \Cref{lem:upper_opt}, we have
        \begin{equation}
            \label{eq:competitive}
                \frac{f(\OPT')}{\E{f(\FinalSol \cup R)}} \le (1+\ExtraSample) \frac{\min\{f(\oldOPT) + f(R),f(\AugmentedSol \cup R) + \mu f(\oldOPT)\}}{f(\AugmentedSol \cup R) + \ExtraSample f(R)}
        \end{equation}
        We want to find the maximum value of this upper bound. This ratio can be seen as a function of  four variables: $f(\oldOPT), f(\AugmentedSol \cup R), f(R)$ and $\mu$. We set the parameter $\ExtraSample$ so it is not a variable. 
        
        As a first step, we replace $f(R)$ with a generic variable $x\ge 0$, and find its worst possible value, namely the value that maximizes the right side. To this end, it is convenient to decompose Inequality~\ref{eq:competitive} via the two terms that make up the $\min$: 
        \[
            \frac{f(\OPT')}{\E{f(\FinalSol \cup R)}} \le \max_{x \ge 0} \min\{g(x),h(x)\},
        \]
        where we introduced 
        \[
            g(x) =  \frac{(1+\ExtraSample)(f(\oldOPT) + x)}{f(\AugmentedSol \cup R) + \ExtraSample \cdot x}, \quad h(x) = \frac{(1+\ExtraSample)(f(\AugmentedSol \cup R) + \mu f(\oldOPT))}{f(\AugmentedSol \cup R) + \ExtraSample \cdot x}
        \]
        \begin{claim}
            As long as $\ExtraSample < \nicefrac {(e-1)}e$, the function $g$ is continuous and monotonically increasing for $x \ge 0$. Moreover, 
            \[
                g(0) =  \frac{(1+\ExtraSample) \cdot f(\oldOPT)}{f(\AugmentedSol \cup R)}, \quad \lim_{x \to \infty} g(x) = 1 + \frac{1}{\ExtraSample}
            \]
        \end{claim}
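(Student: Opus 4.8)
The plan is to exploit the fact that $g$ is a linear--fractional (Möbius) function of the single variable $x$, with $f(\oldOPT)$ and $f(\AugmentedSol\cup R)$ regarded as fixed nonnegative constants. Abbreviating $a=f(\oldOPT)$ and $b=f(\AugmentedSol\cup R)$, we have $g(x)=(1+\ExtraSample)\tfrac{a+x}{b+\ExtraSample x}$. First I would record that $b=f(\AugmentedSol\cup R)\ge f(\GreedySol)>0$ by monotonicity of $f$ (using $\GreedySol\subseteq\AugmentedSol\subseteq\AugmentedSol\cup R$) and that $\ExtraSample>0$, so the denominator $b+\ExtraSample x$ is strictly positive on $[0,\infty)$ and hence $g$ is continuous there. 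The two limiting values then drop out immediately: $g(0)=(1+\ExtraSample)a/b=\tfrac{(1+\ExtraSample)f(\oldOPT)}{f(\AugmentedSol\cup R)}$, and dividing numerator and denominator by $x$ gives $\lim_{x\to\infty}g(x)=\tfrac{1+\ExtraSample}{\ExtraSample}=1+\tfrac{1}{\ExtraSample}$.

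For monotonicity I would differentiate. The quotient rule gives
\[
g'(x)=(1+\ExtraSample)\cdot\frac{(b+\ExtraSample x)-\ExtraSample(a+x)}{(b+\ExtraSample x)^2}=(1+\ExtraSample)\cdot\frac{b-\ExtraSample a}{(b+\ExtraSample x)^2},
\]
so $g$ is strictly increasing on $[0,\infty)$ precisely when $b-\ExtraSample a>0$, i.e.\ $\ExtraSample<b/a$. The only step with real content is verifying this inequality under the hypothesis $\ExtraSample<\nicefrac{(e-1)}{e}$: since $\GreedySol$ is the output of \greedy on $\Vnow$ with cardinality $\k$ and $\oldOPT$ is the optimal size-$\k$ subset of $\Vnow$, the classical greedy guarantee $f(\GreedySol)\ge\bigl(1-\nicefrac{1}{e}\bigr)f(\oldOPT)$ holds (this also follows from \Cref{lem:refined_greedy}, since $1+\mu\ln\mu\ge 1-\nicefrac{1}{e}$ for every $\mu\in[0,1]$), hence $b=f(\AugmentedSol\cup R)\ge f(\AugmentedSol)\ge f(\GreedySol)\ge\tfrac{e-1}{e}\,a$. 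Together with $\ExtraSample<\tfrac{e-1}{e}$ this yields $\ExtraSample<\tfrac{e-1}{e}\le b/a$, so $b-\ExtraSample a>0$ and therefore $g'(x)>0$ for all $x\ge 0$. The degenerate case $a=0$ is trivial, since then $g'(x)=(1+\ExtraSample)b/(b+\ExtraSample x)^2>0$.

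I do not anticipate a genuine obstacle here: once the greedy bound $f(\GreedySol)\ge\tfrac{e-1}{e}f(\oldOPT)$ is used to control the ratio $b/a$, the claim reduces to a one-line calculus computation. The two points to be careful about are that the denominator $b+\ExtraSample x$ never vanishes on the range of interest --- which is automatic because $f(\AugmentedSol\cup R)$ is bounded below by the strictly positive greedy value --- and that the chain $f(\AugmentedSol\cup R)\ge f(\AugmentedSol)\ge f(\GreedySol)$ is justified purely by monotonicity of $f$ and the inclusion $\GreedySol\subseteq\AugmentedSol$ (the set $\AugmentedSol$ is obtained from $\GreedySol$ only by adding elements).
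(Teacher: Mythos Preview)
Your proposal is correct and follows essentially the same approach as the paper: both compute $g'(x)=(1+\ExtraSample)\dfrac{f(\AugmentedSol\cup R)-\ExtraSample f(\oldOPT)}{(f(\AugmentedSol\cup R)+\ExtraSample x)^2}$ via the quotient rule and show the numerator is positive by combining monotonicity ($f(\AugmentedSol\cup R)\ge f(\GreedySol)$) with the greedy approximation guarantee. The only cosmetic difference is that the paper invokes \Cref{lem:refined_greedy} directly to get $f(\GreedySol)\ge(1+\mu\ln\mu)f(\oldOPT)$, whereas you use the classical $(1-\nicefrac{1}{e})$ bound---which, as you note, is the minimum of $1+\mu\ln\mu$ over $\mu\in[0,1]$, so the two are equivalent here.
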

        \begin{proof}[Proof of the Claim]
            The only non-trivial result concerns the increasing nature of $g$. Consider the derivative of $g$:
            \[
                g'(x) = \frac{(1+\ExtraSample)(f(\AugmentedSol \cup R) -\ExtraSample f(\oldOPT))}{\left[f(\AugmentedSol \cup R) + \ExtraSample \cdot  x\right]^2}.
            \]
            Now, consider the numerator of the derivative. It is strictly positive:
            \begin{align*}
                f(\AugmentedSol \cup R) -\ExtraSample f(\oldOPT) &\ge  f(\GreedySol) -\ExtraSample f(\oldOPT) \tag{By monotonicity}\\
                &\ge (1+\mu \ln \mu - \ExtraSample) f(\oldOPT). \tag{By \Cref{lem:refined_greedy}}  
            \end{align*}
            Now, the last term is strictly positive  for all $\mu \in [0,1]$, as long as $\ExtraSample < 1 - \nicefrac 1e$.
        \end{proof}
        \begin{claim}
            Function $h$ is continuous and monotonically decreasing for $x \ge 0$. Moreover, 
            \[
                h(0) = \frac{(1+\ExtraSample)(f(\AugmentedSol \cup R) + \mu f(\oldOPT))}{f(\AugmentedSol \cup R)}, \text{ and } \lim_{x \to \infty} h(x) = 0
            \]
        \end{claim}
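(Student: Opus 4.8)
The plan is short, since in $h(x) = \tfrac{(1+\ExtraSample)(f(\AugmentedSol \cup R) + \mu f(\oldOPT))}{f(\AugmentedSol \cup R) + \ExtraSample \cdot x}$ the numerator carries no dependence on the variable $x$. First I would dispose of the degenerate case $f(\AugmentedSol \cup R) = 0$: by monotonicity this forces $f(\GreedySol) = 0$, hence $f(\oldOPT) = 0$ by \Cref{lem:refined_greedy}, hence $f(R) \le f(\AugmentedSol \cup R) = 0$ and, by point (ii) of \Cref{lem:upper_opt}, $f(\OPT') \le f(\oldOPT) + f(R) = 0$; in that case the target inequality $\E{f(\FinalSol \cup R)} \ge 0.51 \cdot f(\OPT')$ holds vacuously, so for the rest of the proof of \Cref{thm:poly-robust} --- and in particular within this Claim --- I would assume $f(\AugmentedSol \cup R) > 0$. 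Under this assumption the denominator $f(\AugmentedSol \cup R) + \ExtraSample \cdot x$ is strictly positive for every $x \ge 0$, so $h$ is a constant divided by a non-vanishing affine function of $x$, and is therefore continuous on $[0, \infty)$.

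Next I would observe that the numerator $(1 + \ExtraSample)(f(\AugmentedSol \cup R) + \mu f(\oldOPT))$ is a positive constant --- it is at least $(1 + \ExtraSample)\, f(\AugmentedSol \cup R) > 0$ --- while the denominator is positive and, since $\ExtraSample = 0.1 > 0$, strictly increasing in $x$. A positive constant over a strictly increasing positive function is strictly decreasing, which yields the monotonicity assertion. The two endpoint identities are then pure substitution: evaluating at $x = 0$ gives $h(0) = \tfrac{(1+\ExtraSample)(f(\AugmentedSol \cup R) + \mu f(\oldOPT))}{f(\AugmentedSol \cup R)}$, and since the numerator is fixed while $f(\AugmentedSol \cup R) + \ExtraSample \cdot x \to \infty$ as $x \to \infty$, we obtain $\lim_{x \to \infty} h(x) = 0$.

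There is essentially no obstacle; the only point requiring care is that $h$ be well defined at $x = 0$, which is exactly why the reduction to $f(\AugmentedSol \cup R) > 0$ is carried out first. I would then use this Claim together with the preceding one on $g$ (continuous, increasing, with $g(0) = \tfrac{(1+\ExtraSample) f(\oldOPT)}{f(\AugmentedSol \cup R)}$ and $g(x) \to 1 + \nicefrac{1}{\ExtraSample}$): because $g$ rises and $h$ falls, $\max_{x \ge 0} \min\{g(x), h(x)\}$ equals $h(0)$ when $g(0) \ge h(0)$ and otherwise the common value of $g$ and $h$ at their unique crossing point, so Inequality~\ref{eq:competitive} collapses to a bound depending only on $f(\oldOPT)$, $f(\AugmentedSol \cup R)$ and $\mu$ (with $\ExtraSample = 0.1$ fixed), which --- combined with $f(\GreedySol) \ge (1 + \mu \ln \mu) f(\oldOPT)$ and $f(\GreedySol) \le f(\AugmentedSol \cup R)$ --- is optimized over $\mu \in [0,1]$ to finish the proof of \Cref{thm:poly-robust}.
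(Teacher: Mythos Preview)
Your proof is correct and follows essentially the same route as the paper, which dismisses the claim in one line by noting that $x$ appears only in the denominator while the numerator and the coefficient of $x$ are non-negative. You are simply more careful than the paper in explicitly disposing of the degenerate case $f(\AugmentedSol \cup R)=0$ so that $h$ is well defined at $x=0$.
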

        The proof of the above Claim is a simple exercise as variable $x$ only appears in the denominator and also the numerator and the coefficient of $x$ are both non-negative. 
        We infer $h(0) \ge g(0)$ by point (iii) of \Cref{lem:upper_opt} and also noting $f(\OPT') \geq f(\oldOPT)$. For $x \to \infty$ it holds that $h$ is below $g$. Combining these two limit conditions with the above claims, we have that the worst choice of $x$ is the solution of $h(x) = g(x)$, i.e.:
        \[
            x^\star= f(\AugmentedSol \cup R) - (1-\mu) f(\oldOPT).
        \]
        All in all, we have that 
        \begin{equation}
            \label{eq:last_step}
             \frac{f(\OPT)}{\E{f(\FinalSol \cup R)}} \! \le \! \frac{(1\!+\!\ExtraSample)(f(\AugmentedSol \cup R) \!+\! \mu f(\oldOPT))}{(1\!+\!\ExtraSample)f(\AugmentedSol \cup R) \!-\! \ExtraSample(1\!-\!\mu)f(\oldOPT)} \!\le\!\! \max_{y\ge(1+\threshold \ExtraSample)(1+\mu\ln\mu)}\! \frac{(1+\ExtraSample)(y + \mu)}{(1+\ExtraSample)y - \ExtraSample(1-\mu)},    
        \end{equation}
        where have denoted with $y$ the ratio $\nicefrac{f(\AugmentedSol \cup R)}{f(\oldOPT)}$. Note, the restriction on the domain of $y$ from $\mathbb R_{\ge0}$ to $[(1+\threshold \ExtraSample)(1+\mu\ln\mu), \infty)$ is crucial to go beyond the $\nicefrac 12$ barrier, and it is due to the improved analysis of the greedy algorithm. In particular, it is due to points (i) in \Cref{lem:AcupR} and \Cref{lem:upper_opt}:
        \[
            f(\AugmentedSol \cup R) \ge (1+\threshold \ExtraSample) f(\GreedySol) \ge  (1+\threshold \ExtraSample)(1+\mu \ln \mu) f(\oldOPT).
        \]
        To provide a uniform upper-bound on the right-hand-side term in Inequality~\ref{eq:last_step}, it is enough to argue that it is decreasing in $y$, as its derivative is strictly negative\footnote{The derivative is $-\tfrac{(1+\ExtraSample)(1+\mu)}{\left[(1+\ExtraSample)y - \ExtraSample(1-\mu)\right]^2}$}.  
        We can then plug $y = (1+\threshold \ExtraSample)(1+\mu\ln\mu)$ and have that
        \begin{align*}
            \frac{f(\OPT)}{\E{f(\FinalSol \cup R)}} &\le \frac{(1+\threshold \ExtraSample)(1+\mu\ln\mu) + \mu}{(1+\threshold \ExtraSample)(1+\mu\ln\mu) - \tfrac{\ExtraSample}{(1+\ExtraSample)}(1-\mu)}\\
            &\le \frac{(0.9225093 \mu + \mu \ln \mu + 1)}{(0.0838644 \mu + \mu \ln \mu + 0.916135)}\tag{By setting $\threshold = 0.84$ and $\ExtraSample = 0.1$}\\
            &\le \frac 1{0.5159}. \tag{For all $\mu \in [0,1]$}
        \end{align*}
            To conclude the proof, it is enough to note that plugging $\threshold = 0.84$ and $\ExtraSample = 0.1$ into \Cref{lem:delta>delta'} yields a similar upper bound of $\nicefrac{1}{0.5121}$. 
        \end{proof}
        
        All in all, combining \Cref{thm:reduction} with \Cref{thm:poly-robust} we obtain the following result. 
        \begin{theorem}
            \label{thm:poly-time}
                For any $\e \in (0,1)$, there exists a randomized polynomial time algorithm that is $O(\nicefrac{1}{\e^2})$ consistent and provides a $(0.51 - O(\e))$-approximation of the dynamic optimum.
            \end{theorem}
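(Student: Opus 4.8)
The plan is to obtain \Cref{thm:poly-time} as an immediate consequence of the reduction \Cref{thm:reduction} together with the poly-time addition-robust algorithm of \Cref{thm:poly-robust}. Concretely, I would run the meta-algorithm \checkpoint from \Cref{sec:addition-robust} with precision parameter $\e$ and with subroutine $\cA = \GreedyCertificate$ instantiated with the parameters $\threshold = 0.84$ and $\ExtraSample = 0.1$.

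First I would invoke \Cref{thm:poly-robust}: this instantiation of \GreedyCertificate is $0.51$-addition-robust, i.e. for any disjoint $\Vnow,\Vfuture$ and any $R\subseteq\Vfuture$ its randomized output $\FinalSol$ satisfies $\E{f(\FinalSol\cup R)}\ge 0.51\, f(\OPT')$ with $\OPT'$ the best $\kappa$-set in $\Vnow\cup R$; moreover it runs in polynomial time, since it consists of a greedy pass, an augmentation loop of at most $\ExtraSample\kappa$ iterations (each checking the threshold condition over the current elements), and a single uniform subsampling step, all of which require only polynomially many value-oracle queries on subsets of the elements that have already arrived. Then I would apply \Cref{thm:reduction} with $\alpha = 0.51$: plugging this $\cA$ into \checkpoint yields an online algorithm that is $O(\nicefrac{1}{\e^2})$-consistent and an $(\alpha - O(\e)) = (0.51 - O(\e))$-approximation of the dynamic optimum. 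Finally I would observe that \checkpoint adds only polynomial overhead on top of $\cA$: it calls $\cA$ once per block of $\e k$ insertions, hence $O(\nicefrac{n}{\e k})$ times in total, and between calls performs \swap and set-maintenance operations each touching $O(\nicefrac{1}{\e^2})$ elements; composing polynomially many polynomial-time subroutine calls with polynomial bookkeeping is polynomial, which gives the claimed poly-time bound.

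The only point that requires a moment of care — and which is in fact already accommodated by the statements we rely on — is that \checkpoint invokes $\cA$ with cardinality constraint $\kappa = (1-2\e)k$ rather than $k$: one must note that \Cref{thm:poly-robust} is proved for an arbitrary cardinality constraint $\kappa$, and that the number of ``future'' elements arriving within a block is at most $2\e k = k - \kappa$, exactly the slack the reduction reserves for $R$; one should also note that \GreedyCertificate never evaluates $f$ on a set containing an element that has not yet arrived, so it is faithfully implementable in the online value-oracle model. Since these points are handled by the hypotheses of \Cref{thm:reduction,thm:poly-robust}, there is no real obstacle: the proof is a direct composition of the two theorems, and the substantive work (the parameter optimization and the refined greedy analysis behind \Cref{thm:poly-robust}) has already been carried out.
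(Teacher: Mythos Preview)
Your proposal is correct and matches the paper's approach exactly: the paper states the theorem as an immediate corollary of combining \Cref{thm:reduction} with \Cref{thm:poly-robust}, which is precisely what you do. Your additional remarks on polynomial running time and the $\kappa=(1-2\e)k$ cardinality issue are sound elaborations that the paper leaves implicit.
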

        

\section*{Acknowledgements}
    The authors thank David Wajc for pointing to additional related work.
    The work of Federico Fusco is supported by the MUR PRIN grant 2022EKNE5K “Learning in Markets and Society”, the PNRR Project: “SoBigData.it - Strengthening the Italian RI for Social Mining and Big Data Analytics” (Prot. IR0000013 - Avviso n. 3264 del 28/12/2021), and the FAIR (Future Artificial Intelligence Research) project PE0000013, spoke 5. Part of the work of Ola Svensson was done while he was a Visiting Faculty at Google Research; Ola Svensson's work is supported by the Swiss State Secretariat for Education, Research and Innovation (SERI) under contract number MB22.00054.
    
        \bibliographystyle{plainnat}
        \bibliography{references}

\begin{thebibliography}{22}
\providecommand{\natexlab}[1]{#1}
\providecommand{\url}[1]{\texttt{#1}}
\expandafter\ifx\csname urlstyle\endcsname\relax
  \providecommand{\doi}[1]{doi: #1}\else
  \providecommand{\doi}{doi: \begingroup \urlstyle{rm}\Url}\fi

\bibitem[Banihashem et~al.(2024)Banihashem, Biabani, Goudarzi, Hajiaghayi,
  Jabbarzade, and Monemizadeh]{BanihashemBGHJM24}
Kiarash Banihashem, Leyla Biabani, Samira Goudarzi, MohammadTaghi Hajiaghayi,
  Peyman Jabbarzade, and Morteza Monemizadeh.
\newblock Dynamic algorithms for matroid submodular maximization.
\newblock In \emph{{SODA}}, pages 3485--3533. {SIAM}, 2024.

\bibitem[Bhattachary et~al.(2024)Bhattachary, Costa, Garg, Lattanzi, and
  Parotsidis]{BhattacharyaEtAl2024}
Sayan Bhattachary, Martin Costa, Naveen Garg, Silvio Lattanzi, and Nikos
  Parotsidis.
\newblock Fully dynamic k-clustering with fast update time and small recourse.
\newblock In \emph{{FOCS}}. IEEE, 2024.
\newblock Forthcoming.

\bibitem[Buchbinder and Feldman(2018)]{BuchbinderFeldman2018}
Niv Buchbinder and Moran Feldman.
\newblock Deterministic algorithms for submodular maximization problems.
\newblock \emph{ACM Transactions on Algorithms}, 14\penalty0 (3):\penalty0
  32:1--32:20, 2018.

\bibitem[Buchbinder and Feldman(2024)]{BuchbinderFeldman24}
Niv Buchbinder and Moran Feldman.
\newblock Deterministic algorithm and faster algorithm for submodular
  maximization subject to a matroid constraint.
\newblock In \emph{{FOCS}}. IEEE, 2024.
\newblock Forthcoming.

\bibitem[Buchbinder et~al.(2019)Buchbinder, Feldman, and
  Schwartz]{BuchbinderFS15a}
Niv Buchbinder, Moran Feldman, and Roy Schwartz.
\newblock Online submodular maximization with preemption.
\newblock \emph{ACM Transactions on Algorithms}, 15\penalty0 (3):\penalty0
  30:1--30:31, 2019.

\bibitem[Chakrabarti and Kale(2014)]{ChakrabartiK14}
Amit Chakrabarti and Sagar Kale.
\newblock Submodular maximization meets streaming: Matchings, matroids, and
  more.
\newblock In \emph{IPCO}, volume 8494 of \emph{Lecture Notes in Computer
  Science}, pages 210--221. Springer, 2014.

\bibitem[Chan et~al.(2018)Chan, Huang, Jiang, Kang, and Tang]{Chan0JKT17}
T.{-}H.~Hubert Chan, Zhiyi Huang, Shaofeng~H.{-}C. Jiang, Ning Kang, and
  Zhihao~Gavin Tang.
\newblock Online submodular maximization with free disposal.
\newblock \emph{{ACM} Transactions on Algorithms}, 14\penalty0 (4):\penalty0
  56:1--56:29, 2018.

\bibitem[Chen and Peng(2022)]{ChenP22}
Xi~Chen and Binghui Peng.
\newblock On the complexity of dynamic submodular maximization.
\newblock In \emph{{STOC}}, pages 1685--1698. {ACM}, 2022.

\bibitem[C{u{a}}linescu et~al.(2011)C{u{a}}linescu, Chekuri, P{\'{a}}l, and
  Vondr{\'{a}}k]{CalinescuCPV11}
Gruia C{u{a}}linescu, Chandra Chekuri, Martin P{\'{a}}l, and Jan Vondr{\'{a}}k.
\newblock Maximizing a monotone submodular function subject to a matroid
  constraint.
\newblock \emph{SIAM Journal on Computing}, 40\penalty0 (6):\penalty0
  1740--1766, 2011.

\bibitem[D\"utting et~al.(2024)D\"utting, Fusco, Lattanzi, Norouzi{-}Fard, and
  Zadimoghaddam]{DuettingFLNZ24}
Paul D\"utting, Federico Fusco, Silvio Lattanzi, Ashkan Norouzi{-}Fard, and
  Morteza Zadimoghaddam.
\newblock Consistent submodular maximization.
\newblock In \emph{{ICML}}, Proceedings of Machine Learning Research. {PMLR},
  2024.

\bibitem[Feige(1998)]{Feige98}
Uriel Feige.
\newblock A threshold of ln \emph{n} for approximating set cover.
\newblock \emph{Journal of the {ACM}}, 45\penalty0 (4):\penalty0 634--652,
  1998.

\bibitem[Feige et~al.(2011)Feige, Mirrokni, and Vondr{\'{a}}k]{FeigeMV11}
Uriel Feige, Vahab~S. Mirrokni, and Jan Vondr{\'{a}}k.
\newblock Maximizing non-monotone submodular functions.
\newblock \emph{SIAM Journal on Computing}, 40\penalty0 (4):\penalty0
  1133--1153, 2011.

\bibitem[Feldman et~al.(2023)Feldman, Norouzi{-}Fard, Svensson, and
  Zenklusen]{FeldmanNSZ23}
Moran Feldman, Ashkan Norouzi{-}Fard, Ola Svensson, and Rico Zenklusen.
\newblock The one-way communication complexity of submodular maximization with
  applications to streaming and robustness.
\newblock \emph{Journal of the ACM}, 70\penalty0 (4):\penalty0 24:1--24:52,
  2023.

\bibitem[Fichtenberger et~al.(2021)Fichtenberger, Lattanzi, Norouzi{-}Fard, and
  Svensson]{fichtenberger2021consistent}
Hendrik Fichtenberger, Silvio Lattanzi, Ashkan Norouzi{-}Fard, and Ola
  Svensson.
\newblock Consistent k-clustering for general metrics.
\newblock In \emph{{SODA}}, pages 2660--2678. {SIAM}, 2021.

\bibitem[Gupta and Levin(2020)]{GuptaL20}
Anupam Gupta and Roie Levin.
\newblock Fully-dynamic submodular cover with bounded recourse.
\newblock In \emph{{FOCS}}, pages 1147--1157. IEEE, 2020.

\bibitem[Gupta et~al.(2022)Gupta, Gurunathan, Krishnaswamy, Kumar, and
  Singla]{GuptaGKKS22}
Anupam Gupta, Vijaykrishna Gurunathan, Ravishankar Krishnaswamy, Amit Kumar,
  and Sahil Singla.
\newblock Online discrepancy with recourse for vectors and graphs.
\newblock In \emph{{SODA}}, pages 1356--1383. SIAM, 2022.

\bibitem[Impagliazzo and Wigderson(1997)]{RussellW97}
Russell Impagliazzo and Avi Wigderson.
\newblock $\mathsf{P} = \mathsf{BPP}$ if $\mathsf{E}$ requires exponential
  circuits: {D}erandomizing the {XOR} lemma.
\newblock In \emph{{STOC}}, page 220–229. {ACM}, 1997.

\bibitem[Lattanzi and Vassilvitskii(2017)]{lattanzi2017consistent}
Silvio Lattanzi and Sergei Vassilvitskii.
\newblock Consistent k-clustering.
\newblock In \emph{{ICML}}, pages 1975--1984, 2017.

\bibitem[Lattanzi et~al.(2020)Lattanzi, Mitrovic, Norouzi{-}Fard, Tarnawski,
  and Zadimoghaddam]{LattanziMNTZ20}
Silvio Lattanzi, Slobodan Mitrovic, Ashkan Norouzi{-}Fard, Jakub Tarnawski, and
  Morteza Zadimoghaddam.
\newblock Fully dynamic algorithm for constrained submodular optimization.
\newblock In \emph{NeurIPS}, 2020.

\bibitem[Nemhauser et~al.(1978)Nemhauser, Wolsey, and Fisher]{NemhauserWF78}
George~L. Nemhauser, Laurence~A. Wolsey, and Marshall~L. Fisher.
\newblock An analysis of approximations for maximizing submodular set
  functions—{I}.
\newblock \emph{Mathematical Programming}, 14:\penalty0 265--294, 1978.

\bibitem[Vondr\'{a}k(2013)]{Vondrak13}
Jan Vondr\'{a}k.
\newblock Symmetry and approximability of submodular maximization problems.
\newblock \emph{SIAM Journal on Computing}, 42\penalty0 (1):\penalty0 265--304,
  2013.

\bibitem[Łacki et~al.(2024)Łacki, Haeupler, Grunau, Jayaram, and
  Rozhoň]{lkacki2024fully}
Jakub Łacki, Bernhard Haeupler, Christoph Grunau, Rajesh Jayaram, and Václav
  Rozhoň.
\newblock Fully dynamic consistent k-center clustering.
\newblock In \emph{SODA}, pages 3463--3484. {SIAM}, 2024.

\end{thebibliography}

\clearpage
\appendix

\section{Appendix}

\subsection{A Tight Deterministic Addition Robust Algorithm}
\label{app:local}

    The \local algorithm, already proposed in the seminal paper by \citet{NemhauserWF78},  is $\nicefrac 12$-addition robust. The algorithm takes in input a precision parameter $\e$ and augments an initial greedy solution with local improvements, i.e., it adds elements to the solution $S$ as long as their marginal contribution with respect to $S$ is above $\nicefrac{f(S)}{\kappa}$ by a multiplicative margin $(1+\e)$. To make room for the local improvements, \local drops from the solution the element with smallest marginal contribution. For further details we refer to the pseudo-code. Note, in this section we adopt the shorthand $+$, respectively $-$, to denote the union, respectively set-difference.

\begin{algorithm}
\caption*{\local}
\begin{algorithmic}[1]
                \STATE \textbf{environment}: Submodular function $f$ on set $V$, cardinality constraint $\kappa$
            \STATE \textbf{input}: precision parameter $\e >0$
            \STATE Let $S$ be the greedy solution
            \WHILE{$\exists$ $x \in V: f(x|S) \ge \tfrac{(1+\e)}{\kappa} f(S)$}
            \STATE Let $x$ be such that $f(x|S) \ge
            \tfrac{(1+\e)}{\kappa} f(S)$
            \label{line:x}
            \STATE Let $y \in \argmin_{y \in S}\{f(y|S-y)\}$
            \label{line:y}
            \STATE $S \gets S + x - y$
            \ENDWHILE
            \STATE \textbf{Return $S$}
        \end{algorithmic}
        \end{algorithm}
        
        The crucial observation is that after each local improvement, the value of the solution increases by at least a $(1+\nicefrac{\e}{\kappa})$ multiplicative factor. In particular, fix any current solution $S$, and denote with $x$, respectively $y$, the element chosen in line~\ref{line:x}, respectively line~\ref{line:y} of \local. We have the following Claim. 
        \begin{lemma}
        \label{lem:local_improvement}
            The following inequality holds true:
            \[
                f(S+x-y) \ge \left(1+\frac{\e}{\kappa}\right) f(S).
            \]
        \end{lemma}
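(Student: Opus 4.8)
The plan is to derive the claimed multiplicative improvement from three elementary facts: the solution $S$ maintained by \local always has cardinality exactly $\kappa$; the element $y$ removed in line~\ref{line:y} has marginal contribution at most the average, i.e.\ $f(y \mid S - y) \le f(S)/\kappa$; and inserting the element $x$ selected in line~\ref{line:x} increases the value by at least $f(x \mid S) \ge \tfrac{1+\e}{\kappa} f(S)$. First I would note that the cardinality invariant is immediate: the initial greedy solution has $\kappa$ elements, and every iteration of the while loop removes exactly one element and adds exactly one, so $|S| = \kappa$ is preserved. I would also dispatch the degenerate case $f(S) = 0$ up front, where the statement holds trivially by nonnegativity and monotonicity of $f$; in the remaining case $f(S) > 0$, the triggering condition $f(x \mid S) \ge \tfrac{1+\e}{\kappa} f(S) > 0$ forces $x \notin S$, so the swap $S \mapsto S + x - y$ genuinely exchanges one element for another.

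The second ingredient is the averaging bound $f(y \mid S - y) \le f(S)/\kappa$. I would enumerate $S = \{s_1, \dots, s_\kappa\}$ in an arbitrary order and telescope: $f(S) = f(\emptyset) + \sum_{i=1}^{\kappa} f\big(s_i \mid \{s_1,\dots,s_{i-1}\}\big)$. Since $\{s_1,\dots,s_{i-1}\} \subseteq S - s_i$, submodularity gives $f\big(s_i \mid \{s_1,\dots,s_{i-1}\}\big) \ge f(s_i \mid S - s_i)$, and using $f(\emptyset) \ge 0$ we obtain $f(S) \ge \sum_{i=1}^{\kappa} f(s_i \mid S - s_i)$. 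As $y$ minimises $f(\cdot \mid S - \cdot)$ over $S$ by the choice in line~\ref{line:y}, it is no larger than the average of these quantities, so $f(y \mid S - y) \le f(S)/\kappa$.

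Finally I would assemble the bound. Because $S - y \subseteq S$, submodularity yields $f(x \mid S - y) \ge f(x \mid S)$, and combining this with the identity $f(S - y) = f(S) - f(y \mid S - y)$ gives
\[
 f(S + x - y) = f(S - y) + f(x \mid S - y) \ge \big(f(S) - f(y \mid S - y)\big) + f(x \mid S) \ge f(S) - \frac{f(S)}{\kappa} + \frac{1+\e}{\kappa}\, f(S) = \Big(1 + \frac{\e}{\kappa}\Big) f(S),
\]
where the last inequality uses the averaging bound from the previous step together with the selection rule for $x$ in line~\ref{line:x}. This is exactly the claim.

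I do not anticipate a genuine obstacle here: the only step that deserves care is the averaging inequality $\sum_{y \in S} f(y \mid S - y) \le f(S)$, which is precisely where submodularity and nonnegativity of $f$ enter; everything else is bookkeeping about the swap and the trivial case.
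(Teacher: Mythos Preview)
Your proof is correct and follows essentially the same approach as the paper: both use submodularity to get $f(x\mid S-y)\ge f(x\mid S)\ge\tfrac{1+\e}{\kappa}f(S)$, combine this with the averaging bound $f(S-y)\ge\tfrac{\kappa-1}{\kappa}f(S)$ (equivalently $f(y\mid S-y)\le f(S)/\kappa$), and add. The only difference is that you spell out the telescoping proof of the averaging step and handle the cardinality invariant and $f(S)=0$ case explicitly, whereas the paper simply invokes ``a simple averaging argument.''
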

        \begin{proof}
            We have the following chain of inequalities:
            \begin{align*}
                f(x|S-y) &\ge f(x|S) \tag{By submodularity}\\
                &\ge \frac{1+\e}{\kappa} f(S) \tag{By the local improvement condition, see line~\ref{line:x}}.
            \end{align*}
            We can rearrange the above inequality and use that, by a simple averaging argument, the element $y$ defined in line~\ref{line:y} satisfies $\kappa \cdot f(S-y)\ge (\kappa-1) f(S).$ We have the following:
            \begin{align*}
                f(S+x-y) \ge \frac{1+\e}{\kappa} f(S) + f(S-y) \ge \left(1+\frac{\e}{\kappa}\right)f(S). 
            \end{align*}
            This concludes the proof.
        \end{proof}
        \Cref{lem:local_improvement} immediately implies that the algorithm converges after a finite number of steps. To see why this is the case, note that the initial greedy solution $S$ is such that $f(S) \ge (\nicefrac{(e-1)}{e}) f(\oldOPT)$, where $\oldOPT$ denotes the best $\kappa$ elements in $\Vnow$. 
        After $\ell$ local improvements, the current solution $S_{\ell}$ is such that 
        \[
            f(\oldOPT) \ge f(S_{\ell}) \ge \left(1+\frac{\e}{\kappa}\right)^{\ell}f(S_0) \ge \frac{e-1}{e}  \left(1+\frac{\e}{\kappa}\right)^{\ell} f(\oldOPT),
        \]
        where we denote with $S_0$ the initial greedy solution. This means that $\ell$ can be at most $O(\nicefrac{\kappa}{\e}).$
        
    By the design of the stopping condition of the while loop, it is immediate to argue that the output $S$ of \local is $(1+\e)$ stable {where we define $(1 + \e)$-stability as follows:} 
    \[
        f(x|S) < (1+\e) \frac{f(S)}{\kappa}, \forall x \in V.
    \]
    \begin{lemma}
    \label{lem:stable}
         \local outputs a $(1+\e)$-stable solution.
    \end{lemma}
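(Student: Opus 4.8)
\textbf{Proof proposal for \Cref{lem:stable}.}
The plan is to read the claim directly off the control flow of \local, using \Cref{lem:local_improvement} only to guarantee that the algorithm actually terminates and hence produces an output $S$ to reason about. First I would invoke \Cref{lem:local_improvement}, together with the displayed inequality $f(S_\ell) \ge \tfrac{e-1}{e}(1+\tfrac{\e}{\kappa})^\ell f(\oldOPT)$ and the upper bound $f(S_\ell) \le f(\oldOPT)$ from monotonicity/optimality, to conclude that the number of local-improvement iterations is at most $O(\nicefrac{\kappa}{\e})$; in particular the \textbf{while} loop exits after finitely many steps and returns a well-defined set $S$.

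Next I would argue the stability property. The loop exits precisely when its guard fails, i.e.\ when there is no $x \in V$ with $f(x \mid S) \ge \tfrac{1+\e}{\kappa} f(S)$. Negating this guard gives exactly $f(x \mid S) < \tfrac{1+\e}{\kappa} f(S)$ for every $x \in V$, which is the definition of a $(1+\e)$-stable solution. Since the returned set is the value of $S$ at loop exit, the output is $(1+\e)$-stable, as claimed. One minor point to address for cleanliness: elements $x \in S$ satisfy $f(x\mid S) = 0 < \tfrac{1+\e}{\kappa} f(S)$ trivially (assuming $f(S) > 0$; if $f(S) = 0$ then monotonicity forces $f \equiv 0$ on $V$ and the statement is vacuous), so quantifying over all of $V$ rather than $V \setminus S$ costs nothing.

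I do not anticipate a real obstacle here — the statement is essentially a restatement of the loop's exit condition. The only thing worth being careful about is not to present it as circular: the termination argument (needed so that ``the output'' is meaningful) genuinely relies on \Cref{lem:local_improvement} and the greedy lower bound $f(S_0)\ge \tfrac{e-1}{e}f(\oldOPT)$, whereas the stability conclusion itself is purely syntactic. So the write-up should cleanly separate ``\local halts'' (quantitative, via \Cref{lem:local_improvement}) from ``its output is stable'' (immediate from the guard).
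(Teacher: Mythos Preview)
Your proposal is correct and matches the paper's approach exactly: the paper treats termination separately via \Cref{lem:local_improvement} (in the paragraph preceding the lemma), and then states that the $(1+\e)$-stability of the output is ``immediate'' from the stopping condition of the while loop, which is precisely the guard-negation argument you give. Your extra remark about elements $x\in S$ is a harmless elaboration the paper omits.
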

    
    We have all the ingredients to prove the result. 
    
    \begin{theorem}
        For any choice of the precision parameter $\e \in (0,1)$, \local is $(\nicefrac{1}2-\e)$-addition robust.
    \end{theorem}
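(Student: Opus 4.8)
The plan is to combine the stability guarantee of \Cref{lem:stable} with a standard local-search argument. Fix any $R \subseteq \Vfuture$ and let $O$ be an optimal feasible solution for the addition-robust problem, i.e.\ $O \subseteq \Vnow \cup R$, $|O| \le \kappa$, and $f(O) = \max_{|S^*|\le\kappa,\ S^* \subseteq \Vnow\cup R} f(S^*)$. Write $S$ for the output of \local; it is well defined since \Cref{lem:local_improvement} and the discussion after it show the while loop terminates. Recall $S \subseteq \Vnow$ and, since the greedy seed has $\kappa$ elements (when $|\Vnow|\ge\kappa$; the degenerate case $|\Vnow|<\kappa$ forces $S=\Vnow$ and is handled directly) and each local step swaps one element for another, $|S|=\kappa$. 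By \Cref{lem:stable}, $f(x \mid S) < (1+\e)\,f(S)/\kappa$ for every $x \in \Vnow$.

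First I would upper bound $f(O)$. By monotonicity $f(O) \le f(O \cup S \cup R)$, and expanding by submodularity,
\[
f(O\cup S\cup R) \le f(S\cup R) + \sum_{x \in O \setminus (S \cup R)} f(x \mid S \cup R).
\]
The key observation is that every $x$ appearing in this sum lies in $O \cap \Vnow \setminus S \subseteq \Vnow \setminus S$; hence, using submodularity ($S \subseteq S\cup R$, so marginals only shrink) followed by the stability property over $\Vnow$,
\[
f(x \mid S \cup R) \le f(x \mid S) < (1+\e)\,\frac{f(S)}{\kappa}.
\]
Since $|O \setminus (S \cup R)| \le |O| \le \kappa$, the sum is at most $(1+\e) f(S)$, so $f(O) \le f(S\cup R) + (1+\e) f(S)$. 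Now discharge $f(S)$ via monotonicity, $f(S) \le f(S\cup R)$, to get $f(O) \le (2+\e)\, f(S\cup R)$, i.e.\ $f(S\cup R) \ge \tfrac{1}{2+\e} f(O)$.

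Finally I would convert this to the stated form: since $\tfrac{1}{2+\e} = \tfrac12\cdot\tfrac{1}{1+\e/2} \ge \tfrac12\bigl(1-\tfrac\e2\bigr) \ge \tfrac12 - \e$, and \local is deterministic so $\E{f(S\cup R)} = f(S\cup R)$, we obtain $\E{f(S\cup R)} \ge (\nicefrac12 - \e)\, f(O)$ for every $R \subseteq \Vfuture$, which is exactly $(\nicefrac12-\e)$-addition robustness. (One may equally well keep the sharper bound $\tfrac{1}{2+\e}$ and absorb the loss into the parameter $\e$.) There is essentially no obstacle beyond two points that must be stated with care: that the stability bound of \Cref{lem:stable} ranges over $V=\Vnow$, which is precisely where the residual elements $O\setminus(S\cup R)$ live, and that replacing $f(x\mid S\cup R)$ by $f(x\mid S)$ before invoking stability is legitimate by submodularity; the rest is monotonicity and counting.
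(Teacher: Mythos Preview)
Your proof is correct and follows essentially the same approach as the paper: monotonicity to pass to $f(O\cup S\cup R)$, submodularity to peel off the elements of $O$ outside $S\cup R$ (which necessarily lie in $\Vnow$), the stability bound from \Cref{lem:stable} on each such marginal, and finally $f(S)\le f(S\cup R)$ by monotonicity. The only differences are cosmetic: you make explicit the intermediate inequality $f(x\mid S\cup R)\le f(x\mid S)$ before invoking stability, and you spell out the passage from $\tfrac{1}{2+\e}$ to $\tfrac12-\e$, both of which the paper leaves implicit.
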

    \begin{proof}
        The proof is immediate: Let $R$ be any subset of $\Vfuture$, $S^\star$ be any set of cardinality at most $\kappa$ in $\Vnow \cup R$, and denote with $S$ the solution output by \local. Then we have the following:
        \begin{align*}
            f(S^\star) &\le f(S^\star \cup S \cup R) \\
            &\le f(S \cup R) + \sum_{x \in S^\star \setminus R} f(x|S) \tag{By submodularity}\\
            &\le f(S \cup R) + \sum_{x \in S^\star \setminus R} \frac{(1+\e)}{\kappa} f(S)\tag{By \Cref{lem:stable}} \\
            &\le (2+\e) f(S \cup R),
        \end{align*}
        where the last inequality follows by monotonicity and by noting that $S^{\star}$ contains at most $\kappa$ elements.
    \end{proof}

\subsection[Missing Proofs of Section 5]{Missing Proofs of \Cref{sec:expo}}
\label{sec:app-expo}

We prove the following:

\lemdisc*

\begin{proof}

     The lemma is easy when $f$ is a coverage function. Indeed, let $Y$ be the underlying universe.  We define the extension $\hat f$ and the set $\mathcal{R}$ by letting $\mathcal{R}$ contain an element for each subset of $Y$. In this way we have an ``equivalent'' element $r\in \mathcal{R}$ for each (unknown) $R \subseteq V_{\textsf{future}}$ in that $\hat f(S \cup \{r\}) = f(S \cup R)$ for every $S\subseteq V_{\textsf{now}}$. The properties of the lemma thus holds in this case.

     Let us now consider the case when $f$ is a general non-negative. monotone, submodular function. 
    Let $\delta = \frac{\e}{4} \max_{V' \subseteq V_{\textsf{now}}: |V'| = \kappa} f(V')$ be the discretization parameter and let $m = \lceil 2 \nicefrac{f(V_{\textsf{now}})}{(\e \delta)} \rceil$.
    In our definition of $\mathcal{R}$ we consider the set of vectors $\mathcal{V} = \{0, \delta, 2\delta, \ldots, m \cdot \delta\}^{2^{|V_{\textsf{now}}|}}$. In words, each vector in this family has $2^{|V_{\textsf{now}}|}$ dimensions and each dimension takes a value between $0$ and $m\delta$ that is a multiple of $\delta$. It will be convenient to index the dimensions by the subsets $S \subseteq V_{\textsf{now}}$.
    Let us provide some intuition behind these vectors.  
    %
     Our set $\mathcal{R}$ will contain an element $r(v)$ for every vector $v\in \mathcal{V}$, and the intuition is that we would like the extension $\hat f$ to assign values $\hat f( S\cup \{r(v)\}) = v_S$ to every set that contains $r(v)$. By the fine discretization of the vectors in $\mathcal{V}$ (each dimension is a multiple of $\delta$), we have,  for every future scenario $R$ (with $f(R) \leq f(\Vnow)/\e$ ),  a vector $v\in \mathcal{V}$ so that $|v_S - f(S \cup R)| \leq \delta$ for every set $S \subseteq V_{\textsf{now}}$. Thus our "ideal" selection of $\hat f$ would ensure $\hat f(S \cup r(v)) \approx f(S \cup R)$ for every $S \subseteq \Vnow$, which would allow us to replace the unkonown future $R$ with the almost identical known element $r(v)$. There is, however, one more technicality: the guessed values $v_S$ may not correspond to a non-negative monotone submodular function. We address this by finding the values $\hat f(S \cup r(v))$  that minimizes the largest deviation $\max_{S\subseteq \Vnow} |f(S\cup \{r(v)\}) - v_S|$  subject to $\hat f$ being a non-negative monotone submodular function when restricted to subsets of $\Vnow \cup \{r(v)\}$. The largest deviation  $|\hat f(S \cup \{r(v)\})- v_S|$ is measured by the variable $\textsf{error}$ which one can see is bounded by $\delta$ for any $r(v)$ that is ``close'' to a possible future scenario $R$. Indeed, only the vectors $v\in \mathcal{V}$ for which $\textsf{error} \leq \delta$ will be important in the proof of the lemma.


    Formally, we now construct $\mathcal{R}$ as follows. For each vector $v\in \mathcal{V}$, we have an element $r(v)$ in $\mathcal{R}$, and  
    the values of $\hat f$ on sets including element $r(v)$ are obtained by solving the following linear program\footnote{Recall that we can linearize the inequality  $|\hat f(S \cup \{r(v)\}) - v_S|  \leq \textsf{error}$ by replacing it with the two inequalities $\hat f(S \cup \{r(v)\}) - v_S \leq \textsf{error}$ and $v_S - \hat f(S \cup \{r(v)\}) \leq \textsf{error}$.} that has a variable $\hat f(S \cup \{r(v)\})$ for every $S \subseteq V_{\textsf{now}}$:
    \begin{align*}
        \mbox{Minimize \quad} & \textsf{error} \\[2mm]
        \hat f(S) + \hat f(T) & \geq \hat f(S\cap T) +\hat f(S\cup T) \qquad  \quad \mbox{for all $S,T \subseteq V_{\textsf{now}} \cup \{r(v)\}$ with $r(v) \in S\cup T$,}  \\
         \hat f(S) &\leq \hat f(T)  \qquad \qquad \qquad \qquad \qquad \mbox{for all $S\subseteq T \subseteq V_{\textsf{now}} \cup \{r(v)\}$ with $r(v) \in T$,} \\
         \hat f(S) & \geq 0 \qquad \qquad \qquad \qquad \qquad \qquad \mbox{for all $S \subseteq V_{\textsf{now}} \cup \{r(v)\}$ with $r(v) \in S$,} \\
         |\hat f(S \cup \{r(v)\}) - v_S| & \leq \textsf{error} \qquad \qquad \qquad \qquad \qquad \mbox{ for all $S \subseteq V_{\textsf{now}}$.} 
    \end{align*}

    We emphasize that the only variables are of the type $\hat f(S)$ when $S$ contains $r(v)$ and otherwise, if $S$ does not contain $r(v)$, the value is already defined because $\hat f$ is an extension of $f$ that is already defined on all the subset of $V_{\textsf{now}}$. 
    The first set of constraints ensures that $\hat f$ is submodular on $V_{\textsf{now}} \cup \{r(v)\}$, the second ensures monotonicity, and the third non-negativity. We thus have that the set $\mathcal{R}$ and the valuations of $\hat f$ decided as above satisfies the first property of the lemma. We remark that the above linear program assigns values to all sets $\hat f(S \cup \{r(v)\})$ with $S \subseteq V_{\textsf{now}}$. We have thus extended $f$ to $\hat f$ by defining values for every set $\hat f(S \cup \{r\})$ with $r\in \mathcal{R}$ and $S \subseteq V_\textsf{now}$ so that the first property is satisfied. Also note that the value of subsets that contain two or more elements of $\mathcal{R}$ are irrelevant to the statements of the lemma and we simply set the value of $\hat f$ on such subsets to $0$.
    
    Having described $\mathcal{R}$ and the extension of $\hat f$ to these elements, we now continue to argue the second property. To this end, assume 
    \begin{align}
    \EO_{A \sim \DistributionA }[{\hat f(A \cup \{r\})}]  \geq \alpha \cdot \max_{V' \subseteq V_{\textsf{now}}: |V'| = \kappa} \hat f(V' \cup \{r\}) \qquad \mbox{ for every $r \in \mathcal{R}$}
    \label{eq:assumption}
    \end{align}
    We now fix a set $R \subseteq V_{\textsf{future}}$ and wish to argue that 
    \begin{align*}
    \EO_{A \sim \DistributionA }[{f(A \cup R)}]  \geq (\alpha- \e) \cdot \max_{V' \subseteq V_{\textsf{now}}: |V'| = \kappa} f(V' \cup R)\,.
    \end{align*}
    First note that if $f(R) \geq \nicefrac{f(V_{\textsf{now}})}\e$ then the above holds immediately as
    \begin{gather*}
        f(R) \geq  \tfrac{1}{\e}f(V_{\textsf{now}}) \qquad \mbox{and} \qquad \max_{V' \subseteq V_{\textsf{now}}: |V'| = \kappa} f(V' \cup R) \leq f(V_{\textsf{now}} \cup R) \leq f(R) + f(V_{\textsf{now}})\,.
    \end{gather*}
    In other words, in that case, we have that 
    \[
        (1-\e) \max_{V' \subseteq V_{\textsf{now}}: |V'| = \kappa} f(V' \cup R) \leq (1-\e) f(V_{\textsf{now}} \cup R)  \leq f(R).
    \]
    
    Let us now focus on the more interesting case when $f(R) < \nicefrac{f(V_{\textsf{now}})}\e$. In that case 
    \[
        f(R \cup S)\leq f(V_{\textsf{now}})+\tfrac{1}{\e}f(V_{\textsf{now}}) \leq \tfrac{2}{\e}f(V_{\textsf{now}}),
    \]
    for any $S \subseteq V_{\textsf{now}}$. This implies, by the selection of $m$, that there is a vector $v\in \mathcal{V}$ 
    so that 
    \begin{align*}
        | f(S \cup R) - v_S | \leq \delta  \qquad \mbox{for every $S \subseteq V_{\textsf{now}}$}.
    \end{align*}
    Moreover, $|\hat f(S \cup \{r(v)\}) - v_S| \leq \textsf{error}$ where \textsf{error} is upper bounded by $\delta$ as setting $\hat f(S \cup \{r(v)\}) = f(S\cup R)$  would be one feasible solution to the linear program that minimizes $\textsf{error}$. We therefore have
    \begin{align*}
        | f(S \cup R) - \hat f(S \cup \{r(v)\})| \leq 2\delta  \qquad \mbox{for every $S \subseteq V_{\textsf{now}}$}.
    \end{align*}
    Thus~\eqref{eq:assumption} with $r=r(v)$ implies 
    \begin{align*}
    \EO_{A \sim \DistributionA }[{f(A \cup R)}] + 2\delta  \geq \alpha \cdot \left(\max_{V' \subseteq V_{\textsf{now}}: |V'| = \kappa} f(V' \cup R) - 2\delta \right)
    \geq \alpha \cdot \left(\max_{V' \subseteq V_{\textsf{now}}: |V'| = \kappa} f(V' \cup R)\right) - 2\delta\,.
    \end{align*}
    Property two of the lemma now holds because of the selection of $\delta = \frac{\e}{4} \max_{V' \subseteq V_{\textsf{now}}: |V'| = \kappa} f(V')$ which by monotonicity is at most $\frac{\e}{4} \max_{V' \subseteq V_{\textsf{now}}: |V'| = \kappa} f(V' \cup R)$.

\end{proof}

\paragraph{Functions $g_i$ are monotone and submodular.}
We show that the function $g_i(.)$ defined in \Cref{sec:expo} for $1 \leq i \leq n$ is monotone, submodular. To that end, let us recall its definition.

\begin{itemize}
    \item $g_i(\emptyset) = 0$.
    \item $g_i(a_j) = f(r) = 1$ for $1 \leq j \leq n$.
    \item $g_i(\{a_j, r\}) = \nicefrac 43$ for $1 \leq j \leq n$ and $i \neq j$.
    \item $g_i(\{a_j, a_k\}) = \nicefrac 53$ for $1 \leq j,k \leq n$ and $j \neq k$. 
    \item  $g_i(\{a_j , a_k , r\})  = \nicefrac 53$, for $1 \leq j,k \leq n$ and $j \neq k$, $i \neq j$, $i \neq k$. 
    \item $g_i(S) = 2$, for any other $S \subseteq X_f$, i.e., if $S$ satisfies one of the following: $|S\cap \{a_1, \ldots, a_n\}| \geq 3$ or $\{a_i, r\} \subseteq S$.
\end{itemize}

\begin{lemma}
For any $1 \leq i \leq m$, the function $g_i$ is monotone and submodular.
\end{lemma}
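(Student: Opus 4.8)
The plan is to exploit that $g_i$ carries almost no structure: the value of $g_i(S)$ depends only on the triple $\big(n(S),b(S),c(S)\big)$, where $n(S)=|S\cap\{a_1,\dots,a_m\}|$, $b(S)=\mathds{1}[r\in S]$, and $c(S)=\mathds{1}[a_i\in S]$ (so $c(S)=1$ forces $n(S)\ge1$). First I would tabulate $g_i$ as a function $v(n,b,c)$ of this type. Reading off the definition: $v(0,0,0)=0$; $v(0,1,0)=1$; $v(1,0,0)=v(1,0,1)=1$; $v(1,1,0)=\nicefrac43$; $v(1,1,1)=2$; $v(2,0,0)=v(2,0,1)=v(2,1,0)=\nicefrac53$; $v(2,1,1)=2$; and $v(n,b,c)=2$ whenever $n\ge3$, and also whenever $b=c=1$. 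I would insert a one-sentence sanity check that this agrees with the bullet-point definition --- the only sets not assigned a value by an explicit bullet are those containing $\{a_i,r\}$ or at least three of the $a_j$'s, and all of these get value $2$.

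The core of the proof is then to compute, for each of the three ``atomic additions'' --- adding a generic $a_j$ ($j\neq i$), adding $a_i$, and adding $r$ --- the marginal gain $g_i(e\mid S)$ as a function of the type of $S$ alone. This is legitimate because adding a generic $a_j$ sends the type $(n,b,c)$ to $(n+1,b,c)$, adding $a_i$ sends $(n,b,0)$ to $(n+1,b,1)$, and adding $r$ sends $(n,b,0)$ to $(n,1,c)$; the new value therefore depends only on the old type. A direct calculation from $v$ gives, in every case, a marginal lying in $\{1,\nicefrac23,\nicefrac13,0\}$. For instance, the marginal of a generic $a_j$ runs $(1,\nicefrac23,\nicefrac13,0,0,\dots)$ along $n=0,1,2,3,\dots$ when $b=c=0$, equals $\nicefrac13$ for $n\le2$ and $0$ for $n\ge3$ when $b=1,c=0$, and equals $0$ throughout when $b=c=1$; the marginal of $r$ runs $(1,\nicefrac13,0,0,\dots)$ in $n$ for each fixed value of $c$; and the marginal of $a_i$ runs $(1,\nicefrac23,\nicefrac13,0,\dots)$ in $n$ for each fixed value of $b$. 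Since all of these marginals are nonnegative, monotonicity of $g_i$ is immediate.

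For submodularity I would invoke the standard diminishing-returns criterion: $g_i$ is submodular iff $g_i(e\mid S)\ge g_i(e\mid S+e')$ for every set $S$ and every pair of distinct elements $e,e'\notin S$. Because $g_i(e\mid S)$ depends only on the type of $S$, this becomes a finite check: for each element type $e\in\{a_j\ (j\neq i),\ a_i,\ r\}$ and each move $e'$, I verify that passing from the type of $S$ to the type of $S+e'$ does not increase the tabulated marginal of $e$. Every move either increments $n$, or flips $b$ from $0$ to $1$ at fixed $n$, or flips $c$ from $0$ to $1$ while incrementing $n$; running through the finitely many resulting comparisons, the marginal of $e$ weakly decreases in each --- e.g.\ the marginal of $r$ is nonincreasing in $n$ (for fixed $c$) and is literally unchanged when $a_i$ is added at the relevant indices, while the marginal of $a_i$ is nonincreasing in $n$ and unchanged under adding $r$. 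Assembling these cases yields submodularity.

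The only real difficulty is keeping the bookkeeping straight: $a_i$ is special because adding it simultaneously bumps $n$ and sets $c$; the coordinate $c$ is meaningful only once $n\ge1$; and one must remember to test the marginal of $r$, and of a generic $a_j$, against the move ``add $a_i$'' --- not merely against moves that stay within the same $(b,c)$ block. No inequality is delicate: every comparison is between two numbers drawn from $\{0,\nicefrac13,\nicefrac23,1\}$, so the verification, while somewhat tedious, is entirely mechanical.
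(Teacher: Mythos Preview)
Your proposal is correct. Both your argument and the paper's reduce to a finite verification of diminishing marginals, so the approaches are the same in spirit; the difference is organizational. The paper proceeds by brute-force enumeration: it lists every possible shape of $S_1\subset S_2$ with $|S_2|=|S_1|+1$ (nine top-level cases, several with sub-cases) and computes both marginals explicitly in each. You instead observe up front that $g_i(S)$ depends only on the type $\big(n(S),b(S),c(S)\big)$, tabulate the marginal of each element \emph{class} as a short sequence in $n$ for fixed $(b,c)$, and then check that each of the three admissible type-moves (increment $n$; flip $b$; increment $n$ and flip $c$) weakly decreases that sequence. This buys you a shorter and more transparent case split---three element types times three moves rather than a long list of concrete $(S_1,S_2)$ pairs---and makes the symmetry among the $a_j$ with $j\neq i$ explicit rather than implicit. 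One small typo: ``adding $r$ sends $(n,b,0)$ to $(n,1,c)$'' should read ``sends $(n,0,c)$ to $(n,1,c)$,'' as your subsequent computation of the $r$-marginal (indexed by $n$ for fixed $c$) already reflects.
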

\begin{proof}
We start by showing that $g_i$ is  monotone, i.e., that $g_i(e \mid S) \geq 0$ for all subsets $S$ and element $e$ not in $S$. We have the following cases, according to the cardinality of $S$ and whether $a_i$ and $r$ belong to $S$ or not. 
\begin{itemize}
    \item If $S = \emptyset$, then $g_i(e \mid S) = 1$ for all possible $e$.
    \item If $|S| = 1$, then $g_i(e \mid S) = g_i(\{e\} \cup S) -  g_i(S) \geq \nicefrac 43 - 1 > 0$.
    \item If $|S| \geq 2$ and $\{a_i, r\} \subseteq S$, then $g_i(e \mid S) = 0$ for all possible $e$.
    \item If $|S| = 2$ but $\{a_i, r\} \neq S$, then $g_i(e \mid S) \geq \nicefrac 53 - \nicefrac 53 \geq 0$.
    \item If $|S| \geq 3$, then $g_i(e \mid S) \geq 2 - 2 \geq 0$.
\end{itemize}

    We conclude the lemma by arguing about the submodularity of $g_i.$ In particular, we show that for any subsets $S_1 \subset S_2 \subseteq \{a_1, \cdots, a_n, r\}$ with $|S_1| + 1 = |S_2|$, and $e\notin S_2$, it holds that $g_i(e \mid S_1) \geq g_i(e \mid S_2)$. We have the following case analysis.

\begin{enumerate}
    \item $S_1 = \emptyset$. We have $g_i(e \mid S_1) = 1$ and $g_1(e \mid S_2) \leq 1$ so $g_i(e \mid S_1) \geq g_i(e \mid S_2)$.
    \item $S_1 = \{a_i\}$. We have the following sub-cases:
        \begin{itemize}
            \item $S_2 = \{a_i, a_j\}$ and $e = r$, for distinct $i$ and $j$. Then $g_i(e \mid S_1) = 2 - 1 =1$ and $g_1(e \mid S_2) = 2-\nicefrac 53 = \nicefrac 13$.
            \item $S_2 = \{a_i, a_j\}$ and $e \neq r$, for distinct $i$ and $j$. Then $g_i(e \mid S_1) = \nicefrac 53-1 = \nicefrac 23$ and $g_i(e \mid S_2) = 2-\nicefrac 53 = \nicefrac 13$.
            \item $S_2 = \{a_i, r\}$ and $e \neq r$. Then $g_i(e \mid S_1) = \nicefrac 53-1 = \nicefrac 23$ and $g_i(e \mid S_2) = 2 - 2 = 0$.
        \end{itemize}
    \item $S_1 = \{a_j\}$, for $j\neq i$. We have the following sub-cases:
        \begin{itemize}
            \item $S_2 = \{a_j, a_k\}$, for $i$, $j$ and $k$ distinct, and $e = r$. Then $g_i(e \mid S_1) = 
            \nicefrac 43 - 1 = \nicefrac 13$ and $g_i(e \mid S_2) = \nicefrac 53-\nicefrac 53 = 0$.
            \item $S_2 = \{a_j, a_k\}$, for $i$, $j$ and $k$ distinct, and $e \neq r$. Then $g_i(e \mid S_1) = \nicefrac 53-1 = \nicefrac 23$ and $g_i(e \mid S_2) = 2-\nicefrac 53 = \nicefrac 13$.
            \item $S_2 = \{a_j, r\}$ and $e = a_i$. Then $g_i(e \mid S_1) = \nicefrac 53 - 1 = \nicefrac 23$ and $g_i(e \mid S_2) = 2-\nicefrac 43 = \nicefrac 23$.
            \item $S_2 = \{a_j, r\}$ and $e \neq a_i$. Then $g_i(e \mid S_1) = \nicefrac 53-1 = \nicefrac 23$ and $g_i(e \mid S_2) = \nicefrac 53-\nicefrac 43 = \nicefrac 13$.
            \item $S_2 = \{a_j, a_i\}$ and $e = r$. Then $g_i(e \mid S_1) = \nicefrac 43-1 = \nicefrac 13$ and $g_i(e \mid S_2) = \nicefrac 53-\nicefrac 53 = 0$.
            \item $S_2 = \{a_j, a_i\}$ and $e \neq r$. Then $g_i(e \mid S_1) = \nicefrac 53-1 = \nicefrac 23$ and $g_i(e \mid S_2) = 2-\nicefrac 53 = \nicefrac 13$.
        \end{itemize}
    \item $S_1 = \{r\}$. We have the following sub-cases
        \begin{itemize}
            \item $S_2 = \{r, a_j\}$, for $j \neq i$ and $e = a_1$. Then $g_i(e \mid S_1) = 2 - 1 = 1$ and $g_i(e \mid S_2) = 2-\nicefrac 43 = \nicefrac 23$.
            \item $S_2 = \{r, a_j\}$, for $j \neq i$ and $e \neq  a_1$. Then $g_i(e \mid S_1) = \nicefrac 43 - 1 = \nicefrac 13$ and $g_i(e \mid S_2) = \nicefrac 53-\nicefrac 43 = \nicefrac 13$.
            \item $S_2 = \{r, a_i\}$. Then $g_i(e \mid S_1) = \nicefrac 43 - 1 = \nicefrac 13$ and $g_i(e \mid S_2) = 2-2 = 0$.
        \end{itemize}
    \item $S_1 = \{a_i, r\}$. Then $g_i(e \mid S_1) = 0$ and $g_i(e \mid S_2) = 0$ for all possible $e, S_2$.
    \item $S_1 = \{a_i, a_j\}$. We have the following sub-cases
        \begin{itemize}
            \item $S_2 = \{a_i, a_j, a_k\}$, then $g_i(e \mid S_1) \geq 0$ and $g_i(e \mid S_2) = 2 - 2 = 0$, for all possible $e$.
            \item $S_2 = \{a_i, a_j, r\}$, then $g_i(e \mid S_1) = 2- 2= 0$ and $g_i(e \mid S_2) = 2 - 2 = 0$, for all possible $e$.
        \end{itemize}
    \item $S_1 = \{a_j, a_k\}$, for $i$, $j$ and $k$ distinct. We have the following sub-cases
        \begin{itemize}
            \item $S_2 = \{a_j, a_k, a_i\}$, then $g_i(e \mid S_1) \geq 0$ and $g_i(e \mid S_2) = 2 - 2 = 0$, for all possible $e$.
            \item $S_2 = \{a_j, a_k, r\}$ and $e = a_i$, then $g_i(e \mid S_1) \geq 2- \nicefrac 53 = \nicefrac 13$ and $g_i(e \mid S_2) = 2 - \nicefrac 53 = \nicefrac 13$.
            \item $S_2 = \{a_j, a_k, a_\ell\}$, for $i$, $j$, $k$, and $\ell$ distinct, then $g_i(e \mid S_1) \geq 0$ and $g_i(e \mid S_2) = 0$ for all possible $e$.
        \end{itemize}
    \item $S_1 = \{a_j, r\}$, for $j \neq i$. We have the following sub-cases:
        \begin{itemize}
            \item $S_2 = \{a_j, r, a_i\}$, then $g_i(e \mid S_1) \geq 0$ and $g_i(e \mid S_2) = 2 - 2 = 0$, for all possible $e$.
            \item $S_2 = \{a_j, r, a_k\}$ and $e = a_i$, then $g_i(e \mid S_1) = 2- \nicefrac 43 = \nicefrac 23$ and $g_i(e \mid S_2) = 2 - \nicefrac 53 = \nicefrac 13$.
            \item $S_2 = \{a_j, r, a_k\}$ and $e \neq a_i$, then $g_i(e \mid S_1) = \nicefrac 53 - \nicefrac 43 = \nicefrac 13$ and $g_i(e \mid S_2) = 2 - \nicefrac 53 = \nicefrac 13$.
        \end{itemize}
    \item $|S_1| \geq 3$, then $|S_2| \geq 4$, so $g_1(e \mid S_2) = 0$ for all possible $e$.
\end{enumerate}

\end{proof}

 \subsection[Missing Proofs of Section 6]{Missing Proofs of \Cref{sec:poly-time}}
\label{app:refined_greedy}

In this Section, we provide a finer analysis of the greedy algorithm by \citet{NemhauserWF78}. The setting is the standard monotone submodular maximization problem with cardinality $k$. In particular, we denote with $S$ the greedy output on a ground set $X$, with $\OPT$ an optimal solution on $X$, while $\mu$ is defined as follows \[
    \max_{e \in X \setminus S} f(e|S) = \frac{\mu}{k} f(\OPT).
\]
\greedyrefined*
            \begin{proof}[Proof of \Cref{lem:refined_greedy}]
                {If $\mu < \nicefrac 1e$, then the argument is simple: 
                \begin{align*}
                    f(\OPT) &\le f(\OPT \cup S) \tag{By monotonicity}\\
                    &\le f(S) + \sum_{x \in \OPT \setminus S} f(x|S)\tag{By submodularity}\\
                    &\le f(S) + \mu f(\OPT) \tag{By definition of $\mu$ and $|\OPT|\le k$}\\
                    &\le f(S) - \mu \ln \mu f(\OPT), \tag{Because $\mu < \nicefrac 1e$ implies $ 1 < - \ln \mu$}
                \end{align*}    
                so that the statement holds by rearranging the terms in the inequality. Focus now on the remaining case, i.e., $\mu \ge \nicefrac 1e$;} we assume without loss of generality that the cardinality of the greedy solution $S$, and of the optimal solution $\OPT$ is maximal (i.e., $|S| = |\OPT| = k$). Sort the elements in $S$ according to the order in which they have been added to the solution $S = \{s_1, \dots, s_k\}$, and denote with $S_i$ the set of the first $i$ elements added to $S$: $S_i = \{s_1, \dots, s_i\}$. Let $S_0 = \emptyset$ to simplify the proof. We have the following Claim. 
            \begin{claim}
            \label{cl:induction}
                The following inequality holds for all $i = 1, \dots, k$:
                \[
                    f(\OPT) - f(S_i) \le \left( 1 - \frac 1k\right)^i f(\OPT)
                \]
            \end{claim}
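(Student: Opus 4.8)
The plan is to prove Claim~\ref{cl:induction} by a straightforward induction on $i$, using the convention $S_0 = \emptyset$ (and, without loss of generality, $f(\emptyset) = 0$: replacing $f$ by $f - f(\emptyset)$ preserves monotonicity, submodularity, non-negativity, and all marginals, and only strengthens the claimed inequality since $f(\OPT) - f(\emptyset) \le f(\OPT)$). The base case $i=0$ is then immediate, as $f(\OPT) - f(S_0) = f(\OPT) = (1-\tfrac1k)^0 f(\OPT)$.

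For the inductive step, the quantity to control is the $i$-th greedy marginal $f(S_i) - f(S_{i-1})$. Since $s_i$ is picked greedily, $f(S_i) - f(S_{i-1}) = \max_{e \in X \setminus S_{i-1}} f(e \mid S_{i-1}) \ge \max_{e \in \OPT \setminus S_{i-1}} f(e \mid S_{i-1})$, and replacing the maximum by the average over $\OPT \setminus S_{i-1}$ (and enlarging the denominator from $|\OPT \setminus S_{i-1}|$ to $|\OPT| = k$, which is legitimate because every marginal is non-negative by monotonicity) gives $f(S_i) - f(S_{i-1}) \ge \tfrac1k \sum_{e \in \OPT \setminus S_{i-1}} f(e \mid S_{i-1})$. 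Submodularity (subadditivity of marginals) bounds $\sum_{e \in \OPT \setminus S_{i-1}} f(e \mid S_{i-1}) \ge f(\OPT \cup S_{i-1}) - f(S_{i-1}) \ge f(\OPT) - f(S_{i-1})$, the last step again by monotonicity. Hence $f(S_i) - f(S_{i-1}) \ge \tfrac1k\bigl(f(\OPT) - f(S_{i-1})\bigr)$, which rearranges to $f(\OPT) - f(S_i) \le \bigl(1 - \tfrac1k\bigr)\bigl(f(\OPT) - f(S_{i-1})\bigr)$; plugging in the inductive hypothesis $f(\OPT) - f(S_{i-1}) \le (1-\tfrac1k)^{i-1} f(\OPT)$ yields exactly $f(\OPT) - f(S_i) \le (1-\tfrac1k)^i f(\OPT)$, closing the induction.

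There is no deep obstacle — this is the classical greedy progress bound — but a couple of bookkeeping points deserve a word: if $\OPT \subseteq S_{i-1}$ the averaging step is vacuous, yet then monotonicity directly gives $f(\OPT) \le f(S_{i-1}) \le f(S_i)$, so the claimed inequality holds trivially (its left-hand side is non-positive); and the averaging genuinely relies on $|\OPT \setminus S_{i-1}| \le k$, which is why the proof of Lemma~\ref{lem:refined_greedy} first reduces to $|\OPT| = k$. Once Claim~\ref{cl:induction} is established, the remaining case $\mu \ge \nicefrac1e$ of Lemma~\ref{lem:refined_greedy} will follow by combining it with the observation that every greedy marginal is at least $\tfrac{\mu}{k} f(\OPT)$ — the greedy marginals are non-increasing and the last one dominates $\max_{e \in X \setminus S} f(e\mid S) = \tfrac{\mu}{k} f(\OPT)$ — and then optimizing over the index $j$ at which one switches from using the geometric bound $f(S_j) \ge (1 - (1-\tfrac1k)^j) f(\OPT)$ to using the linear bound $f(S_k) \ge f(S_j) + (k-j)\tfrac{\mu}{k} f(\OPT)$; the optimal split is at $j \approx k\ln(1/\mu)$, which produces the value $1 + \mu\ln\mu$.
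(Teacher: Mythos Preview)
Your proof is correct and is essentially the same as the paper's: both proceed by induction with base case $i=0$, and the inductive step in each derives $f(S_i)-f(S_{i-1}) \ge \tfrac1k\bigl(f(\OPT)-f(S_{i-1})\bigr)$ from the greedy choice plus submodularity and monotonicity, then rearranges and applies the hypothesis. The only differences are cosmetic (you phrase it as ``max $\ge$ average'' while the paper writes $\sum_{o\in\OPT\setminus S_{i-1}} f(o\mid S_{i-1}) \le k\cdot f(s_i\mid S_{i-1})$), and your extra remarks on the $\OPT\subseteq S_{i-1}$ edge case and on how the claim feeds into the rest of Lemma~\ref{lem:refined_greedy} are accurate and match the paper's subsequent argument.
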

            \begin{proof}[Proof of \Cref{cl:induction}]
                We prove the Claim by induction on $i$. The inequality is trivially true for $i=0$, so we assume by induction it holds for a generic $i-1$ and argue that it then holds for $i$. By monotonicity and submodularity, we have:
                \begin{equation}
                    \label{eq:monotonicity}
                    f(\OPT) \le f(\OPT \cup S_{i-1}) \le f(S_{i-1}) + \sum_{o \in \OPT \setminus S_{i-1}} f(o|S_{i-1}).
                \end{equation}
                Rearranging the terms in Inequality~\ref{eq:monotonicity}, it holds that 
                \begin{align}
                \nonumber
                    f(\OPT) - f(S_{i-1}) &\le \sum_{o \in \OPT \setminus S_{i-1}} f(o|S_{i-1})\\
                \nonumber
                    &\le k \cdot f(s_i | S_{i-1}) \tag{Greedy Property}\\
                \label{eq:O-S_i}   &= k f(S_i) - k f(S_{i-1}).
                \end{align}
                We can then finalize the proof of the Claim:
                \begin{align*}
                    f(\OPT) - f(S_i) &= f(\OPT) - f(S_{i-1}) - f(S_i | S_{i-1})\\
                    &\le \left(1 - \frac 1k\right)(f(\OPT) - f(S_{i-1})) \tag{By Inequality \ref{eq:O-S_i}}\\
                    &\le \left(1 - \frac 1k\right)^i f(\OPT),
                \end{align*}
                where the last inequality follows the inductive hypothesis and the fact that $f$ is non-negative.
            \end{proof}
            
            For each $S_i$, we know that all the remaining elements in $S$ have marginal contribution which is at least $(\nicefrac{\mu}{k})f(\OPT)$, by submodularity and definition of $\mu$. Combining this fact with \Cref{cl:induction}, we have the following inequality holding for all integer $i=1, \dots, k$:
            \begin{align}
                \nonumber
                    f(S) &\ge f(S_i) + \sum_{j=i+1}^k f(s_j|S_{j-1})\\
                \nonumber
                    &\ge f(S_i) + (k - i) \frac{\mu}{k} f(\OPT)\\
                \nonumber
                    &\ge\left[1 - \left(1 - \frac 1k\right)^i + (k - i) \frac{\mu}{k}\right]f(\OPT)\\
                \label{eq:integer_inequality}
                    &\ge\left[1 - e^{-\nicefrac ik} + (k - i) \frac{\mu}{k}\right]f(\OPT),
            \end{align}
            where the last inequality follows from the fact that {$(1-\nicefrac 1k)^i \le e^{-\nicefrac ik}$} for all $i$. 
            {This is true because $e^x \geq 1+x$ for any real $x$, applying $x = \nicefrac 1k$ in this case.}
            
            {We define $x^* = k \ln \nicefrac {1}\mu$ {which is smaller or equal than $k$ as long as $\mu \ge \nicefrac 1e$} There are two cases. If $x^*$ is integer}, 
            then we can plug it in Inequality~\ref{eq:integer_inequality}  by setting $i = x^*$ and obtain the inequality claimed in the statement. If $x^*$ is not integer, then we need something more. First, $x^*$ can be written as $x^* = i^* + \delta$, for some integer $i^*$ and $\delta \in (0,1)$. By \Cref{cl:induction} and Inequality~\ref{eq:integer_inequality}, we have that 
            \begin{equation}
            \label{eq:i^*_step}
                f(S_{i^*}) \ge \left(1 - e^{-\nicefrac {i^*}k}\right) f(\OPT). 
            \end{equation}
            Consider the element ($e_{i^*+1}$) added by \greedy when the current solution is $S_{i^*}$, 
            and divide it into two fractional elements $e'$ and $e''$ of weights $\delta$ and $(1-\delta)$ 
            {such that $f(e'|S_{i^*}) = \delta f(e_{i^*+1}|S_{i^*})$, and $f(e''|S_{i^*} \cup \{e'\}) = (1-\delta) f(e_{i^*+1}|S_{i^*})$. }
            
            
            We can imagine that \greedy first add $e'$ and then $e''$ to its current solution $S_{i^*}$. Let $\hat S = S_{i^*} + e'$, since $e'$ has the largest marginal density (the same as $e_{i^*+1}$, which is the greedy choice), we have the following:
            \[
                f(e_{i^*+1}|S_{i^*}) = \tfrac{1}{\delta}(f(\hat S) - f(S_{i^*})), 
            \]
            which implies (plugging it into Inequality~\ref{eq:O-S_i}) that 
            \[
                f(\hat S) \ge \left[1 - \left(1 - \frac 1k\right)^{i^*}\left(1 - \frac {\delta}k\right)  \right]f(\OPT) \ge \left[1 - e^{-\tfrac{i^*+\delta}k}\right]f(\OPT).
            \]
            After adding the fractional $e'$, the remaining (fractional) room in the solution is $k - i^* - \delta = k - x^*$, 
            therefore we have the following inequality:
            \[
                f(S) \ge f(\hat S) + (k - x^*) \frac{\mu}{k} f(\OPT) \ge \left[1 - e^{-\nicefrac {x^*}k} + (k - x^*) \frac{\mu}{k}\right]f(\OPT). 
            \]
            Plugging in the value of $x^*$ yields the desired bound.
        \end{proof}
     

\end{document}